\newcommand{\tabincell}[2]{\begin{tabular}{@{}#1@{}}#2\end{tabular}}
\newtheorem{theorem}{Theorem}
\newtheorem{lemma}{Lemma}
\newtheorem{definition}{Definition}
\begin{document}

\title{Private Rank Aggregation under\\ Local Differential Privacy}
\author{Ziqi Yan\thanks{Beijing Key Laboratory of Security and Privacy in Intelligent Transportation,
Beijing Jiaotong University, Beijing $100044$, China. \texttt{\{zichiyen,jqliu\}@bjtu.edu.cn}.}
\and Gang Li\thanks{Centre for Cyber Security Research and Innovation, Deakin University, Geelong, VIC 3216, Australia. \texttt{gang.li@deakin.edu.au}.}
\and Jiqiang Liu\footnotemark[1]\ {~}
}

\maketitle

\begin{abstract}
As a method for answer aggregation in crowdsourced data management,
rank aggregation aims to combine different agents' answers or preferences
over the given alternatives into an aggregate ranking
which agrees the most with the preferences.
However,
since the aggregation procedure relies on a data curator,
the privacy within the agents' preference data could be compromised
when the curator is untrusted.
Existing works that guarantee differential privacy in rank aggregation
all assume that the data curator is trusted.
In this paper,
we formulate and address the problem of
\emph{locally differentially private rank aggregation},
in which the agents have no trust in the data curator.
By leveraging the approximate rank aggregation algorithm \texttt{KwikSort},
the \emph{Randomized Response} mechanism,
and the \emph{Laplace} mechanism,
we propose an effective and efficient protocol \texttt{LDP-KwikSort}.
Theoretical and empirical results show that
the solution \texttt{LDP-KwikSort:RR} can achieve the acceptable trade-off
between the utility of aggregate ranking and
the privacy protection of agents' pairwise preferences.

\noindent
{\bf Keywords:} Rank Aggregation, KwikSort Algorithm, Local Differential Privacy.
\end{abstract}

\section{Introduction}
\label{sec-ldp-ra-introduction}

Aggregation is the process of combining multiple inputs into a single output
which represents all inputs in some sense~\cite{Beliakov16Book}.
In \emph{crowdsourced data management},
aggregation plays a crucial role:
by aggregating the answers (can be seen as preferences) from crowd agents,
the crowdsourcing platforms have able to address some ¡°computer-hard¡± tasks
such as entity resolution,
sentiment analysis,
and image recognition~\cite{LiWZF16TKDE}.
In the research of \emph{computational social choice},
one main research issue is on
how to better aggregate the preferences of individual agents,
or the participating decision-makers~\cite{Brandt16Book}.
It provides voting-based solutions to the answer aggregation problems
which often involve multiple individual preferences that could be conflicting.
Since the preferences of individual agents are often represented as rank data
where the alternatives are ranked in order,
the \emph{rank aggregation} has been a topic with broad interests in related applications.

Since most preferences data are inevitably involved with
sensitive information of individual agents,
the collecting,
analyzing,
and publishing of these data
would be a potential threat to the individual's privacy.
For instance,
due to its appealing properties,
Amazon's crowdsourcing platform Mechanical Turk
has been an important research tool for
social sciences such as psychology and sociology~\cite{Behrend11BRE,Bates13NAJP,Baker16AJFP,Shank16AS,Miller17PD,Peer17JESP,PengLN18}.
The researchers can design and post questionnaires
through the platform and recruit examinees to finish online testings.
However,
existing studies show that there are potential risks of data disclosure
within Mechanical Turk~\cite{Lease13,KangBDK14SOUPS,XiaWHS17PACMHCI,SannonC18CHI}.
Even though the disclosure of an individual's preferences is not always embarrassing,
the ability to deduce them may make those agents susceptible to coercion.
These factors prevent the contributions of accurate preferences from individual agents
and inhibit the performance of the aggregation from being fully realized.
However,
this concern cannot be comprehensively addressed
by traditional privacy-preserving methods such as anonymization,
as evidenced by the Hugo Awards $2015$ incident~\cite{HugoAwards15News,HayEM17SDM}
in which the adversary can conduct a \emph{linkage attack}
when s/he has gained unexpected background knowledge of victims.

Considering the above weakness of anonymization techniques
and especially in the scenario of aggregate ranking release,
two recent works~\cite{ShangWCK14FUSION,HayEM17SDM}
adopted the rigorous \emph{differential privacy} (DP) framework~\cite{DworkR14,LiLSY16Morgan,ZhuLZY17a,ZhuLZY17b},
and proposed several differentially private rank aggregation algorithms.
Based on the properties of the \emph{central} model of DP,
the data curator is assumed to be fully trusted
and can access all agents' ranking preference profiles,
while the adversary with any background knowledge
could not confidently infer the existence of an agent's profile
from the aggregate ranking released by the curator.

However,
with the increased awareness of privacy preservation in data collection,
both the academic and industrial communities
are getting more interested in the \emph{local} model of DP (LDP)
\cite{KasiviswanathanLNRS08FOCS,DuchiJW12NIPS,DuchiJW13FOCS}
where the curator is assumed to be untrusted.
An intuitive comparison between these two models
is shown in~\Cref{fig:ldp-diagram}.
Within the LDP model,
the agents could add noise by using the designed \emph{local randomizer}
before reporting their preference profiles $L_{i}$ to the curator,
who could also estimate the population statistics from the noisy data.
More specifically,
in this paper,
we are addressing the following problem
called \emph{locally differentially private rank aggregation} (LDP-RA):
the agents have their ranking preferences over the given alternatives
and the curator with the authority and computing capability
will collect and aggregate those preference profiles
into a final overall ranking list.
When the agents are not trusting the curator's
capability of preventing their privacy from potential attacks,
the challenges for a private rank aggregation are then
on a) how to enable the agents
to avoid sharing their original ranking preference profiles with the untrusted curator,
and b) how to enable the curator
to approximately aggregate those rankings with an acceptable utility.

\begin{figure}[htbp]
\centering
\includegraphics[width=0.8\textwidth]{./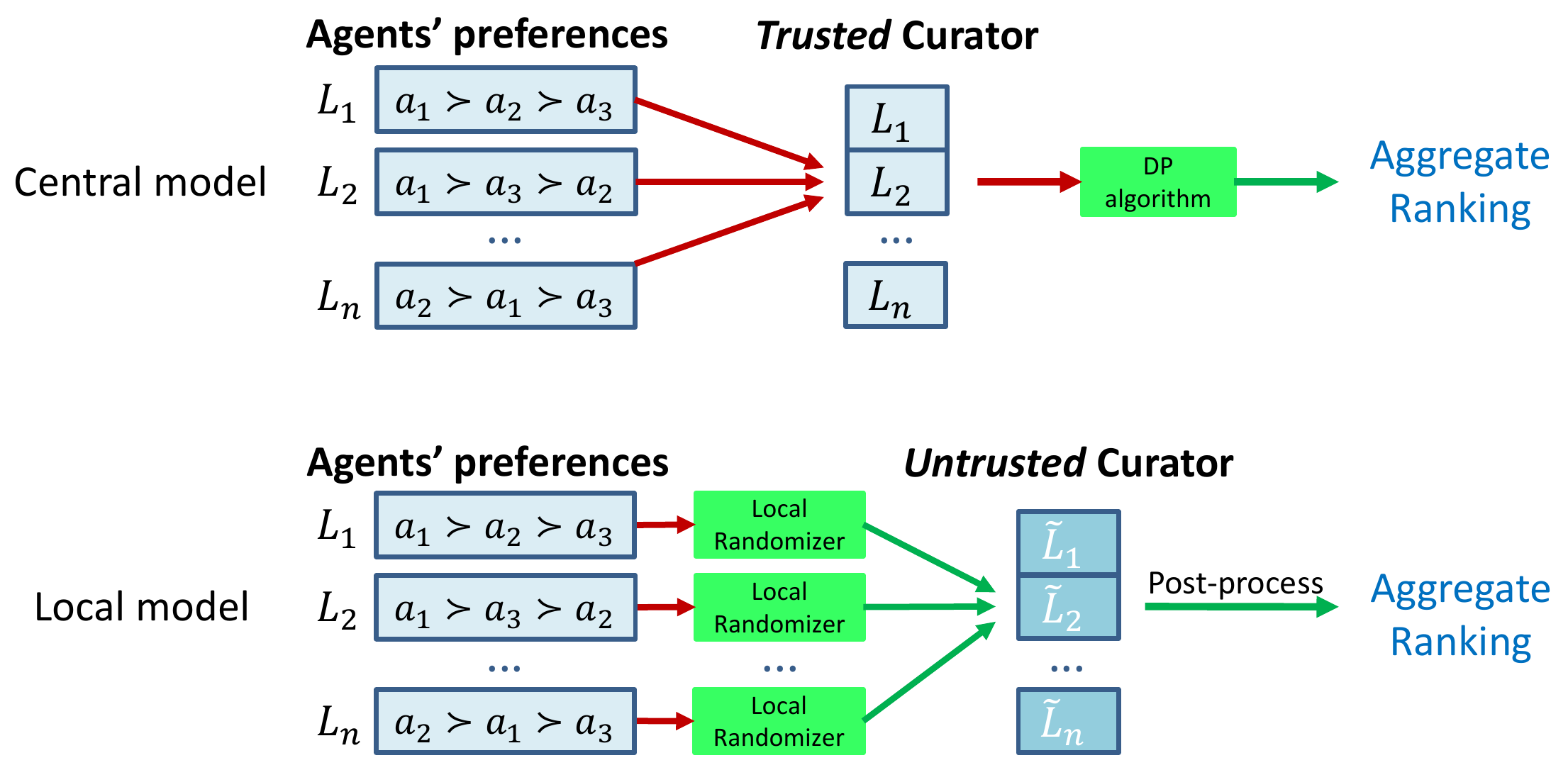}
\caption{Comparison of two DP models}
\label{fig:ldp-diagram}
\end{figure}

The main contribution of our work is \texttt{LDP-KwikSort},
an effective and efficient LDP protocol for the LDP-RA problem.
\begin{enumerate}
\item
Instead of adding noise into the whole ranking list
on each agent's side,
we focus on protecting the pairwise comparison preferences
within the ranking list.
To achieve that,
we leverage on the approximate rank aggregation algorithm
\texttt{KwikSort}~\cite{AilonCN05STOC,AilonCN08JACM},
which only requires the input as agents' pairwise preferences.
Based on this,
the protocol allows the untrusted curator to ask the agents
with pairwise comparison queries
and lets the agents report their differentially private answers
with the RR mechanism or the Laplace mechanism.
By the post-processing algorithm,
the untrusted data curator can approximately
estimate the useful frequencies for rank aggregation
and further output an aggregate ranking.

\item
When adopting the RR mechanism and Laplace mechanism
as the local randomizer
for constructing the local perturbation algorithm,
it comes up with the question of
how should we choose an appropriate number of queries $K$
which is also the times of invoking the local randomizer.
By analyzing the estimation error bound
of aggregate pairwise comparison profile $\texttt{cmp}(\textbf{L})$
in the \texttt{LDP-KwikSort} protocol,
we show that the utility
can achieve the approximate maximum value around $K=\frac{\epsilon}{2}$,
which is then further verified by extensive experiments.

\item
For performance evaluation,
we conduct experiments on three real-world datasets
(\textsf{TurkDots},
\textsf{TurkPuzzle} and \textsf{SUSHI})
and several synthetic datasets generated from the \textsf{Mallows} model.
By observing the error rate and the average Kendall tau distance
resulted from the two solutions of \texttt{LDP-KwikSort},
the central model based solution \texttt{DP-KwikSort}~\cite{HayEM17SDM}
and the non-private \texttt{KwikSort},
it shows that our protocol especially
the solution \texttt{LDP-KwikSort:RR}
can achieve strong local privacy protection
while maintaining an acceptable utility.
\end{enumerate}

The rest of this paper is organized as follows.
\Cref{sec-ldp-ra-background}
provides the background on non-private and private rank aggregation,
local differential privacy and its relaxed definition.
\Cref{sec-ldp-ra-related-work}
reviews the related work on differentially private voting mechanisms.
\Cref{sec-locally-private-rank-aggregation}
formalizes the LDP-RA problem
and proposes the \texttt{LDP-KwikSort} protocol,
followed by theoretical analysis
in~\cref{sec-ldp-ra-theoretical-analysis},
empirical analysis in~\cref{sec-ldp-ra-empirical-analysis},
and conclusions in~\cref{sec-ldp-ra-conclusion}.

\section{Preliminaries}
\label{sec-ldp-ra-background}

In this section,
we introduce the relevant concepts of rank aggregation,
private rank aggregation,
the building blocks of LDP protocol,
and a relaxation of differential privacy.
\Cref{tab:notations} lists the notations used in this paper.

\begin{table}[h]
\centering
\small
\caption{Notations}\label{tab:notations}
  \begin{tabular}{l|l}
\hline
    $N$
    & set of agent $i$, where $N=\{1,...,n\}$ \\
    $A$
    & set of alternative $a_{j}$, where $A=\{a_{1},...,a_{m}\}$ \\
    $\mathcal{L}(A)$
    & all the possible permutations of elements in $A$ \\
    $L_{i}$
    & ranking preference profile of agent $i$, where $L_{i} \in \mathcal{L}(A)$ \\
    $L_{i}^{-1}(a_{j})$
    & ranking index of alternative $a_{j}$ in $L_{i}$ \\
    $\textbf{L}$
    & combined profile of all the agents' preferences \\
    $L_{(\textbf{L})}$
    & aggregate ranking over the given combined profile $\textbf{L}$ \\
    $\textbf{K}(L_{(\textbf{L})},L_{i})$
    & Kendall tau distance between $L_{(\textbf{L})}$ and $L_{i}$ \\
    $\overline{\textbf{K}}(L_{(\textbf{L})},\textbf{L})$
    & average Kendall tau distance between $L_{(\textbf{L})}$ and $\textbf{L}$ \\
    $C_{a_{j} a_{l}}(\textbf{L})$
    & count of the times that $L^{-1}(a_{j})<L^{-1}(a_{l})$ in $\textbf{L}$ \\
    $\texttt{cmp}_{\textbf{L}}(a_{j},a_{l})$
    & computing $C_{a_{j} a_{l}}(\textbf{L}) - C_{a_{l} a_{j}}(\textbf{L})$ \\
    $\texttt{cmp}(\textbf{L})$
    & set of all the values of $\texttt{cmp}_{\textbf{L}}(a_{j},a_{l})$ \\
    $\widetilde{L}_{\mathcal{P}}$
    & aggregate ranking by the proposed protocol \\
    $\epsilon$
    & overall privacy budget for each agent \\
    $K$
    & number of queries from the curator to each agent \\
    $\theta$
    & dispersion parameter of the Mallows model \\
\hline
  \end{tabular}
\end{table}

\subsection{Rank Aggregation}
\label{sec-rank-aggregation}

Given a set of $m$ alternatives $A=\{a_{1},...,a_{m}\}$
and $n$ agents participated in a preference aggregation procedure,
the preference profile of an agent $i$ is represented as
a permutation or a ranking
$L_{i} \in \mathcal{L}(A)$ of those $m$ alternatives.
Hence,
the ranking index of alternative $a_{j}$ in $L_{i}$
is denoted by $L_{i}^{-1} (a_{j})$ with a value
between $0$ (the best) to $m-1$ (the worst).
Then a \emph{combined profile} of these ranking preference profiles
is denoted by $\textbf{L} = (L_{1},...,L_{n}) \in \mathcal{L}(A)^{n}$,
based on which,
the \emph{rank aggregation} algorithm
generates a representative ranking $L_{(\textbf{L})}$
that sufficiently summarizes $\textbf{L}$.

To evaluate the quality of the aggregate ranking $L_{(\textbf{L})}$,
\emph{Kendall tau distance} is commonly used to
count the number of pairwise disagreements between two rankings:
$\textbf{K} (L_{(\textbf{L})},L_{i})
= |\{ (a_{j},a_{l}): a_{j}<a_{l}, L_{(\textbf{L})}^{-1}(a_{j})<L_{(\textbf{L})}^{-1}(a_{l})
\ but \ L^{-1}_{i}(a_{j})>L^{-1}_{i}(a_{l}) \}|$.
Then given the combined profile $\textbf{L}$,
its \emph{average Kendall tau distance}
with an aggregate ranking $L_{(\textbf{L})}$
is defined as:
$\overline{\textbf{K}} (L_{(\textbf{L})},\textbf{L}) = \frac{1}{n} \sum_{i \in N}
\textbf{K} (L_{(\textbf{L})},L_{i})$.
When the above distance achieves the minimum,
the relevant $L_{(\textbf{L})}$ is referred to
as the \emph{Kemeny optimal aggregate ranking}.
However,
it is NP-hard to compute this kind of ranking when $n > 3$,
and various approximate algorithms have been proposed
as summarized in~\cite{BrancotteYBBDH15VLDB}.

In this paper,
we leverage on a Kendall tau distance based algorithm,
$\texttt{KwikSort}$,
which could achieve $11/7$-approximation
by adopting the \texttt{QuickSort} strategy~\cite{AilonCN08JACM}.
Specifically,
the sorting of any two alternatives $a_{j},a_{l} \in A$
is based on the counts of how many times
$a_{j}$ (resp. $a_{l}$) is preferred over
$a_{l}$ (resp. $a_{j}$) among the rankings in $\textbf{L}$,
which is formally defined as
$C_{a_{j} a_{l}}(\textbf{L}) =
	|\{ for \ all \ L_{i} \in \textbf{L} | L_{i}^{-1}(a_{j}) < L_{i}^{-1}(a_{l}) \}|$
and
$C_{a_{l} a_{j}}(\textbf{L}) =
	|\{ for \ all \ L_{i} \in \textbf{L} | L_{i}^{-1}(a_{l}) < L_{i}^{-1}(a_{j}) \}|$.
Upon execution,
$\texttt{KwikSort}$
would first randomly pick an alternative $a_{p} \in A$ as the pivot,
then classify all alternatives $a_{q} \in A \setminus \{a_{p}\}$
using the comparison function
$\texttt{cmp}_{\textbf{L}}(a_{p},a_{q}) =
	\left(C_{a_{p} a_{q}}(\textbf{L}) - C_{a_{q} a_{p}}(\textbf{L})\right)
	\in \texttt{cmp}(\textbf{L})$.
That is,
if $\texttt{cmp}_{\textbf{L}}(a_{p},a_{q}) < 0$,
the alternative $a_{q}$ would be classified into
the left side of the pivot $a_{p}$,
and vice versa.
In particular,
when $\texttt{cmp}_{\textbf{L}}(a_{p},a_{q}) = 0$,
this placement can be done randomly.
This procedure repeats until all alternatives have been sorted
into a ranking ${L}_{KS}$.
In the experiments of this paper,
we adopt the Python package \texttt{pwlistorder}
to help implement $\texttt{KwikSort}$ algorithm,
and the randomized placement method is adopted instead of
its default method that directly places $a_{q}$ after $a_{p}$
if $\texttt{cmp}_{\textbf{L}}(a_{p},a_{q}) = 0$.

\subsection{Private Rank Aggregation}
\label{sec-private-rank-aggregation}

Rank aggregation under the differential privacy framework
is relatively new in the relevant community.
\cite{ShangWCK14FUSION,HayEM17SDM}
considered the central model of DP
in which the trusted data curator has access to
all agents' ranking preference profiles and
would release the aggregate ranking
by applying a differentially private algorithm $\mathcal{M}$ on them:

\begin{definition}[$\epsilon$-differential privacy~\cite{Dwork06ICALP}]
\label{def:DP}
A randomized algorithm $\mathcal{M}$ satisfies
$\epsilon$-differential privacy
if for all $\mathcal{O} \subseteq Range(\mathcal{M})$
and for all neighboring datasets $\textbf{L}$ and $\textbf{L}'$
differing on at most one record $L_{i}$
(i.e., the ranking preference profile of agent $i$),
we have
\begin{equation*}
Pr[\mathcal{M}(\textbf{L}) \in \mathcal{O}] \leq e^{\epsilon}
\cdot Pr[\mathcal{M}(\textbf{L}') \in \mathcal{O}].
\end{equation*}
\end{definition}

The intuition behind the above definition is that
the adversary cannot confidently distinguish two outputs (aggregate rankings)
of the differentially private algorithm $\mathcal{M}$
on a dataset $\textbf{L}$ and its neighboring dataset $\textbf{L}'$,
where $\textbf{L},\textbf{L}' \in \mathcal{D}$.
Then the present or absent status of any agent's ranking as a record
within the input dataset is rigorously protected with
the uncertainty of the algorithm's outputs,
which is measured by the privacy budget $\epsilon$.

To satisfy the definition of differential privacy
and for those query functions $f$ with numeric output,
the \emph{Laplace} mechanism is usually utilized.
Relying on the strategy of adding
the Laplacian random variables (noise)
to the query result,
the Laplace mechanism can be formally defined as follows:

\begin{definition}[Laplace mechanism~\cite{Dwork06ICALP}]
\label{def:lap-mech}
Given a function $f: \mathcal{D} \rightarrow \mathbb{R}^{k}$,
the Laplace mechanism is defined as
\begin{center}
$\mathcal{M}_{Lap}(\textbf{L})=f(\textbf{L})+(X_{1},...,X_{k})$,
\end{center}
where $X_{i}$ is i.i.d random variables drawn from
$\textsf{Lap}(\frac{\Delta_{g} f}{\epsilon})$,
and the global sensitivity of $f$ is
$\Delta_{g} f=\max \limits_{\substack{\textbf{L},\textbf{L}' \in \mathcal{D} \\
\Vert{\textbf{L}-\textbf{L}'}\Vert_{1}=1}} \Vert{f(\textbf{L})-f(\textbf{L}')}\Vert_{1}$.
\end{definition}

\subsection{Local Differential Privacy}
\label{sec-local-differential-privacy}

Unlike the central model,
in the local model of DP
\cite{KasiviswanathanLNRS08FOCS,DuchiJW12NIPS,DuchiJW13FOCS},
each agent would first locally perturb his/her data
by adopting a randomized algorithm,
referred to as the local randomizer $\mathcal{M}_R$,
which satisfies $\epsilon$-differential privacy.
Then,
the agent would upload the perturbed data to the untrusted curator,
who should not infer the sensitive information of each agent
but can post-process those data
to obtain the population statistics for further analysis.
The local differential privacy is formally defined as follows:

\begin{definition}
[$\epsilon$-local differential privacy~\cite{DworkR14,BassilyNST17NIPS}]
\label{def:LDP}
For a protocol $\mathcal{P}$ and dataset $D = (D_{1},...,D_{n})$,
if $\mathcal{P}$ accesses $D$
only via $K$ invocations of a local randomizer $\mathcal{M}_R$
and each invocation satisfies $\epsilon_{k}$-differential privacy,
which means that for any neighboring pair of datasets
$D_{i}$, $D_{i}'$ $\in D$
and $\forall \mathcal{O} \subseteq Range(\mathcal{M}_R)$,
if $\mathcal{M}_R$ satisfies
\begin{equation*}
Pr[\mathcal{M}_R(D_{i}) \in \mathcal{O}]
\leq e^{\epsilon_{k}} \cdot Pr[\mathcal{M}_R(D_{i}') \in \mathcal{O}]
\end{equation*}
and $\sum_{k=1}^{K} \epsilon_{k} \leq \epsilon$,
then the protocol
$\mathcal{P}(\cdot)
\triangleq
\left(\mathcal{M}_R^{(1)}(\cdot),\mathcal{M}_R^{(2)}(\cdot)
,\ldots,\mathcal{M}_R^{(K)}(\cdot)\right)$
satisfies $\epsilon$-local differential privacy ($\epsilon$-LDP).
\end{definition}

To design LDP protocols,
the \emph{randomized response}
(RR) mechanism~\cite{Warner65,GreenbergASH69,Chaudhuri16}
has been widely adopted.
As an indirect questioning mechanism for sensitive questionnaires,
RR allows the participating agents $i \in N$
to answer questions with \emph{plausible deniability}.
Specifically,
assume that the true answers for $v_{i}$ is binary $v_{i} \in \{0,1\}$,
each agent answers truthfully $u_{i}=v_{i}$
with probability $\frac{e^{\epsilon}}{e^{\epsilon} + 1}$,
and falsely $u_{i}=\overline{v_{i}}$ with probability
$\frac{1}{e^{\epsilon}+1}$~\cite{WangWH16EDBT/ICDT,HolohanLM17TIFS,WangBLJ17SS}.
These probabilities $p_{u_{i},v_{i}}$ then constitute a transformation matrix as
\[ \textbf{M} = \begin{pmatrix}
  p_{0,0} & p_{0,1} \\
  p_{1,0} & p_{1,1}
\end{pmatrix} \
= \begin{pmatrix}
  \frac{e^{\epsilon}}{e^{\epsilon} + 1} & \frac{1}{e^{\epsilon} + 1} \\
  \frac{1}{e^{\epsilon} + 1} & \frac{e^{\epsilon}}{e^{\epsilon} + 1}
\end{pmatrix}. \]

Based on the knowledge of \textbf{M},
once obtaining $y_{0}$ (or $y_{1}$),
the number of agents who answer `\num{0}' (or `\num{1}'),
the curator can use the unbiased \textit{maximum likelihood estimate} (MLE)
\cite{HuangD08ICDE,GroatEHHF13PMC,SeiO17TIFS}
to obtain an estimation $\widehat{x}_{0}$ (or $\widehat{x}_{1}$),
which approximates the number of agents
whose true answer are `\num{0}' (or `\num{1}').
\begin{equation}\label{equ:naive-reconstruction}
\overrightarrow{\widehat{X}} = \textbf{M}^{-1} \overrightarrow{Y},
\end{equation}
where
$\overrightarrow{\widehat{X}} = (\widehat{x}_{0}, \widehat{x}_{1})^{\tau}$,
$\overrightarrow{Y} = (y_{0}, y_{1})^{\tau}$,
and $\textbf{M}^{-1}$ is the inverse matrix of $\textbf{M}$.

\subsection{A Relaxation of Differential Privacy}
\label{sec-a-relaxation-of-differential-privacy}

Since the standard definition of differential privacy
requires the strict distortion of analytical results
which may lead to the significant reduction of utility
in practical implementations,
several relaxed definitions of differential privacy
have been proposed such as
the $(\epsilon,\delta)$-differential privacy,
the individual differential privacy,
and the R\'{e}nyi differential privacy.
Among them,
the individual differential privacy (iDP)
focuses on the indistinguishability
between the actual dataset and its neighboring datasets
instead of any pair of neighboring datasets,
which is defined as follows:

\begin{definition}[$\epsilon$-individual differential privacy~\cite{Soria-ComasDSM17TIFS}]
\label{def:iDP}
Given a dataset $\textbf{L}$,
a randomized algorithm $\mathcal{M}$ satisfies
$\epsilon$-individual differential privacy
if for all $\mathcal{O} \subseteq Range(\mathcal{M})$
and for any dataset $\textbf{L}'$ that is a neighbor of $\textbf{L}$,
we have
\begin{equation*}
Pr[\mathcal{M}(\textbf{L}) \in \mathcal{O}] \leq e^{\epsilon}
\cdot Pr[\mathcal{M}(\textbf{L}') \in \mathcal{O}].
\end{equation*}
\end{definition}

To satisfy the $\epsilon$-iDP,
the Laplace mechanism can be also applied,
but the sensitivity of a given function $f$
in~\Cref{def:lap-mech}
should be replaced with the \emph{local} sensitivity:

\begin{definition}[Local sensitivity~\cite{NissimRS07STOC}]
\label{def:local-sensitivity}
Given a function $f: \mathcal{D} \rightarrow \mathbb{R}^{d}$,
its local sensitivity at $\textbf{L}$ is
\begin{equation*}
\Delta_{l} f(\textbf{L})=\max \limits_{y:\Vert{\textbf{L}-y}\Vert_{1}=1}
\Vert{f(\textbf{L})-f(y)}\Vert_{1}.
\end{equation*}
\end{definition}

If we consider the individual differential privacy
under the local model,
it is natural to obtain the definition
of the \emph{local individual differential privacy} (LiDP)
as follows:

\begin{definition}[$\epsilon$-local individual differential privacy]
\label{def:LiDP}
For a protocol $\mathcal{P}$ and dataset $D = (D_{1},...,D_{n})$,
if $\mathcal{P}$ accesses $D$
only via $K$ invocations of a local randomizer $\mathcal{M}_R$
and each invocation satisfies $\epsilon_{k}$-individual differential privacy,
which means that for any neighboring dataset $D_{i}'$ of the given dataset $D_{i}$
and $\forall \mathcal{O} \subseteq Range(\mathcal{M}_R)$,
if $\mathcal{M}_R$ satisfies
\begin{equation*}
Pr[\mathcal{M}_R(D_{i}) \in \mathcal{O}] \leq e^{\epsilon_{k}} \cdot
Pr[\mathcal{M}_R(D_{i}') \in \mathcal{O}]
\end{equation*}
and $\sum_{k=1}^{K} \epsilon_{k} \leq \epsilon$,
then the protocol
$\mathcal{P}(\cdot)
\triangleq \left(\mathcal{M}_R^{(1)}(\cdot),
\mathcal{M}_R^{(2)}(\cdot),...,\mathcal{M}_R^{(K)}(\cdot)\right)$
satisfies $\epsilon$-local differential privacy ($\epsilon$-LiDP).
\end{definition}

\section{Related Work}
\label{sec-ldp-ra-related-work}

In this section,
we review some representative works
on differentially private voting mechanisms.
Under the central model of DP,
\citet{ChenCKMV13EC,ChenCKMV16TEC} considered
to model privacy in players¡¯ utility functions for
achieving both of privacy preservation and truthfulness,
and proposed mechanism for private two-candidate election;
\citet{Lee15IJCAI} proposed an algorithm that satisfies
both $\epsilon$-DP
and $\epsilon$-strategyproof for tournament voting rules.
Considering the rank aggregation scenario,
\citet{ShangWCK14FUSION} introduced a relaxed DP definition,
$(\epsilon,\delta)$-DP,
into the differentially private algorithm to
release the histogram of rankings with the utility bounds analyzed.
Then in the work of~\citet{HayEM17SDM},
three differentially private algorithms were proposed
by separately concentrating on the approximate
and the optimal rank aggregation.
Among them,
\texttt{DP-KwikSort} algorithm
extends the approximate rank aggregation algorithm \texttt{KwikSort}
by the Laplace mechanism.

Under the local model of DP,
the private algorithms for majority voting and truth discovery~\cite{LiMSGLDQ018KDD,LiXQMSGRD18arXiv},
weighted voting~\cite{YanLL19EIS},
and positional voting~\cite{WangDYDLNHX19arXiv} were designed.
Besides,
\citet{LiuLXZ18arXiv} explored the theoretical relationship
between the internal randomness of certain voting rules and
the privacy-preserving level.
However,
to the best of our knowledge,
no existing work has been devoted to
rank aggregation under the local model of DP
which is introduced in the next section.

\section{Locally Private Rank Aggregation}
\label{sec-locally-private-rank-aggregation}

In this section,
we formalize the problem of \emph{locally differentially private rank aggregation}
(LDP-RA)
and then propose a solution called \texttt{LDP-KwikSort} protocol.

\subsection{Problem Formalization}
\label{sec-problem-formalization}

There are $n$ agents that own their ranking profile $L_{i}$ over $m$ alternatives.
In the context of LDP-RA,
the combined profile $\textbf{L} = (L_{1},...,L_{n})$
can be considered as an instantiation of
$D= (D_{1},...,D_{n})$
in~\Cref{def:LDP}.
The task is then to
design a local randomizer $\mathcal{M}_R$
which locally perturbs each agent's ranking preference profile $L_{i}$
before reporting to the untrusted curator.
On the other hand,
we expect that
the constituted $\epsilon$-LDP protocol (or $\epsilon$-LiDP protocol) $\mathcal{P}$
outputs the aggregate ranking $\widetilde{L}_{\mathcal{P}}$
with an acceptable utility as measured by
the average Kendall tau distance
$\overline{\textbf{K}} (\widetilde{L}_{\mathcal{P}},\textbf{L})
= \frac{1}{n} \sum_{i \in N}
\textbf{K} (\widetilde{L}_{\mathcal{P}},L_{i})$.

\subsection{Protocol Overview}
\label{sec-protocol-overview}

To solve the LDP-RA problem,
we propose the \emph{locally differentially private KwikSort}
(\texttt{LDP-KwikSort}) protocol
which contains two solutions:
\texttt{LDP-KwikSort:RR} and \texttt{LDP-KwikSort:Lap},
and the rationale of the former
can be summarized in~\Cref{fig:protocol-rationale}.

When executing the $\texttt{LDP-KwikSort}$ protocol,
$K$ rounds of interactions exist between every agent and the curator.
In each interaction,
the curator randomly selects
$K$ different pairs of alternatives for querying,
and the agent reports the answer.
These queries are predefined
as \textit{do you prefer the alternative $a_{j}$ to $a_{l}$?}.
When receiving the query,
the agent adopts RR mechanism
to report the perturbed answer (\Cref{alg:local-perturbation-rr}).
After collecting the answers from all agents,
the curator aggregates these data
and estimates the needed statistics
which can be further used by the \texttt{KwikSort} algorithm.
Finally,
the \texttt{KwikSort} algorithm is executed
to generate an aggregate ranking
(\Cref{alg:post-processing-rr}).

\begin{figure}[h]
\centering
\includegraphics[scale=0.5]{./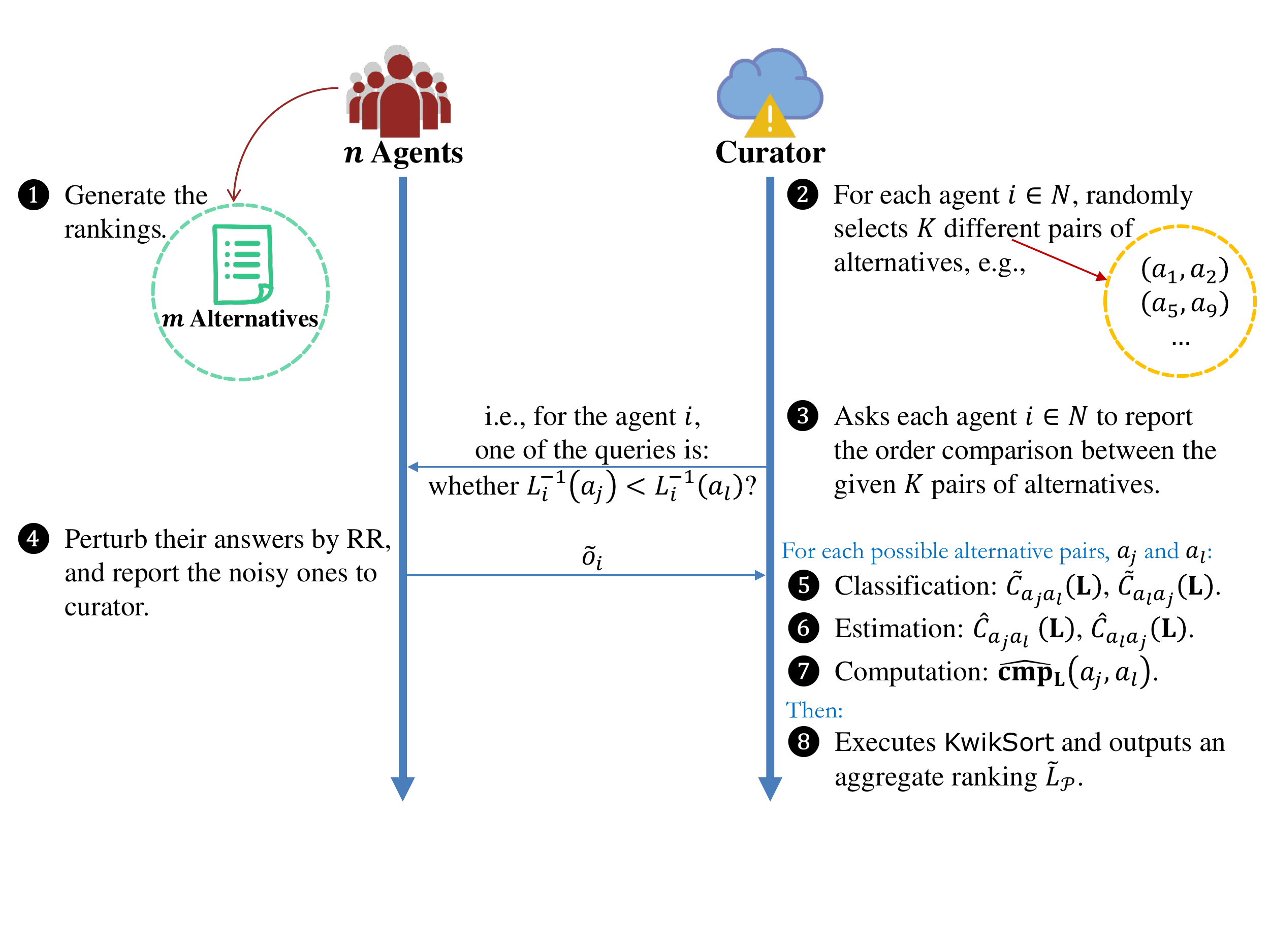}
\caption{Rationale for \texttt{LDP-KwikSort:RR} Solution}
\label{fig:protocol-rationale}
\end{figure}

\subsection{Local Perturbation}
\label{sec-local-perturbation}

As shown in~\Cref{alg:local-perturbation-rr},
when receiving the $K$ queries,
an agent $i \in N$ adopts
the standard RR mechanism
as the local randomizer $\mathcal{M}_R$,
and reports the curator with the perturbed answer
$\widetilde{\textbf{o}}_{i}$.
Here,
we adopt the transformation matrix $\textbf{M}_{rr}$
introduced in~\Cref{sec-local-differential-privacy}
to include the coin flipping probabilities $p_{\widetilde{o}_{ik},o_{ik}}$
of a true answer $o_{ik}$ being transformed
into the reported answer $\widetilde{o}_{ik}$.
Let $\textbf{M}_{rr}$ be the common knowledge
of the agents and the curator:
\[ \textbf{M}_{rr} =
\begin{pmatrix}
  p_{0,0} & p_{0,1} \\
  p_{1,0} & p_{1,1} \\
\end{pmatrix} =
\begin{pmatrix}
  p_{rr} & 1-p_{rr} \\
  1-p_{rr} & p_{rr} \\
\end{pmatrix}\]
where all the diagonal elements get assigned the value $p_{rr}$
while $1-p_{rr}$ for other elements,
and $p_{rr} = \frac{e^{\epsilon_{k}}}{e^{\epsilon_{k}} + 1}$,
$\epsilon_{k}=\epsilon/K$.

\begin{algorithm}[h]
\small
\caption{Local Perturbation in \texttt{LDP-KwikSort:RR}}
\label{alg:local-perturbation-rr}
\begin{algorithmic}[1]
\Require
        Agent $i$'s ranking profile $L_{i}$,
        $K$ queries,
        transformation matrix $\textbf{M}_{rr}$
\Ensure
        Agent $i$'s perturbed answer $\widetilde{\textbf{o}}_{i}$
\State  Receives $K$ queries (e.g., \emph{do you prefer alternative $a_{j}$ to $a_{l}$?}) from curator
\For    {each query $k \in [K]$}
\State  Randomly generates a number $g$ from $[0,1]$ \Comment{Local Randomizer $\mathcal{M}_{R}$}
\If     {$g \leq p_{rr}$}
\State  $\widetilde{o}_{ik} \leftarrow 1$
\Else
\State  $\widetilde{o}_{ik} \leftarrow 0$
\EndIf
\EndFor
\State  Agent $i$ reports the perturbed answer $\widetilde{\textbf{o}}_{i}=\{\widetilde{o}_{i1},...,\widetilde{o}_{iK}\}$ to curator
\end{algorithmic}
\end{algorithm}

\subsection{Post-processing}
\label{sec-post-processing}

As shown in~\Cref{alg:post-processing-rr},
once collected the answers from the agents,
\textbf{firstly} (Line $1-9$),
the curator classifies these noisy data.
For each possible pair of alternatives $a_{j}$ and $a_{l}$,
two counters $\widetilde{C}_{a_{j} a_{l}}(\textbf{L})$
and $\widetilde{C}_{a_{l} a_{j}}(\textbf{L})$ are updated
to indicate the numbers of the agent
who prefers alternative $a_{j}$ (or $a_{l}$)
to alternative $a_{l}$ (or $a_{j}$).
\textbf{Secondly} (Line $10-13$),
the curator obtains the estimations of
$C_{a_{j} a_{l}}(\textbf{L})$ and $C_{a_{l} a_{j}}(\textbf{L})$
by~\Cref{equ:naive-reconstruction},
where
$\overrightarrow{\widehat{X}}
= \left(\widehat{C}_{a_{j} a_{l}}(\textbf{L}), \widehat{C}_{a_{l} a_{j}}(\textbf{L})\right)$,
$\overrightarrow{Y}
= \left(\widetilde{C}_{a_{j} a_{l}}(\textbf{L}), \widetilde{C}_{a_{l} a_{j}}(\textbf{L})\right)^{\tau}$.
Then the results from the comparison function
$\widehat{\texttt{cmp}}_{\textbf{L}}(a_{j},a_{l})$ can be computed.
\textbf{Finally} (Line $14-15$),
the curator assembles all the values
of $\widehat{\texttt{cmp}}_{\textbf{L}}(a_{j},a_{l})$ into the estimated version
of aggregate pairwise comparison profile
$\widehat{\texttt{cmp}}(\textbf{L})$.
By taking $\widehat{\texttt{cmp}}(\textbf{L})$ as input,
the standard $\texttt{KwikSort}$ algorithm is executed
to generate an aggregate ranking $\widetilde{L}_{\mathcal{P}}$.

\begin{algorithm}[h]
\small
\caption{Post-processing in \texttt{LDP-KwikSort:RR}}
\label{alg:post-processing-rr}
\begin{algorithmic}[1]
\Require
        Agents' perturbed answers $\widetilde{\textbf{o}}=\{\widetilde{\textbf{o}}_{1},...,\widetilde{\textbf{o}}_{n}\}$,
        transformation matrix $\textbf{M}_{rr}$
\Ensure
        Aggregate ranking $\widetilde{L}_{\mathcal{P}}$
\For    {$\widetilde{\textbf{o}}_{i} \in \widetilde{\textbf{o}}$} \Comment{Classification}
\For    {$\widetilde{o}_{ik} \in \widetilde{\textbf{o}}_{i}$}
\If     {$\widetilde{o}_{ik}=1$}
\State  $\widetilde{C}_{a_{j} a_{l}}(\textbf{L}) + 1$
\Else
\State  $\widetilde{C}_{a_{l} a_{j}}(\textbf{L}) + 1$
\EndIf
\EndFor
\EndFor
\For    {each possible pair $(a_{j},a_{l})$} \Comment{Estimation and computation}
\State  Obtains two estimations $\widehat{C}_{a_{j} a_{l}}(\textbf{L})$ and $\widehat{C}_{a_{l} a_{j}}(\textbf{L})$ by ~\Cref{equ:naive-reconstruction}
\State  $\widehat{\texttt{cmp}}_{\textbf{L}}(a_{j},a_{l}) \leftarrow \widehat{C}_{a_{j} a_{l}}(\textbf{L}) - \widehat{C}_{a_{l} a_{j}}(\textbf{L})$
\EndFor
\State  Obtains $\widehat{\texttt{cmp}}(\textbf{L})$ and executes \texttt{KwikSort} algorithm \\
\Return Aggregate ranking $\widetilde{L}_{\mathcal{P}}$
\end{algorithmic}
\end{algorithm}

\subsection{Laplace Solution}

Since the agent $i$'s perturbed answer $\widetilde{\textbf{o}}_{i}$
contains numeric values,
we also propose the Laplace mechanism based solution \texttt{LDP-KwikSort:Lap}.

\paragraph{\textbf{Local perturbation.}}
As shown in~\Cref{alg:local-perturbation-lap},
when receiving the $K$ queries,
an agent $i \in N$ adopts
the Laplace mechanism
as the local randomizer $\mathcal{M}_R$,
and reports the curator with the perturbed answer
$\widetilde{\textbf{o}}_{i}$.
Here,
$f_{o}(\cdot)$ is a query function
which takes agent's ranking preference as input
and outputs $1$ or $0$ to indicate
whether this agent prefers $a_{j}$ to $a_{l}$ or not,
for default index $j<l$.
Then the local sensitivity $\Delta_{l} f_{o}$ is $1$,
which reflects the maximum change of its outputs.

\begin{algorithm}[h]
\small
\caption{Local Perturbation in \texttt{LDP-KwikSort:Lap}}
\label{alg:local-perturbation-lap}
\begin{algorithmic}[1]
\Require
        Agent $i$'s ranking profile $L_{i}$,
        $K$ queries,
        privacy budget $\epsilon$
\Ensure
        Agent $i$'s perturbed answer $\widetilde{\textbf{o}}_{i}$
\State  Receives $K$ queries (e.g., \emph{do you prefer alternative $a_{j}$ to $a_{l}$?}) from curator
\For    {each query $k \in [K]$}
\State  $\widetilde{o}_{ik} \leftarrow o_{ik} + X_{ldp}$
where
$X_{ldp} \sim \textsf{Lap}(\frac{\Delta_{l} f_{o}}{\epsilon_{k}})$
\Comment{Local Randomizer $\mathcal{M}_{R}$}
\EndFor
\State  Agent $i$ reports the perturbed answer $\widetilde{\textbf{o}}_{i}=\{\widetilde{o}_{i1},...,\widetilde{o}_{iK}\}$ to curator
\end{algorithmic}
\end{algorithm}

\paragraph{\textbf{Post-processing.}}
As shown in~\Cref{alg:post-processing-lap},
once collected the answers from the agents,
\textbf{firstly} (Line $1-9$),
the curator also classifies these noisy data.
For each possible pair of alternatives $a_{j}$ and $a_{l}$,
two counters $\widehat{C}_{a_{j} a_{l}}(\textbf{L})$
and $\widehat{C}_{a_{l} a_{j}}(\textbf{L})$ are updated.
Note that these counters are the estimations
of $C_{a_{j} a_{l}}(\textbf{L})$ and $C_{a_{l} a_{j}}(\textbf{L})$,
and the updates are based on judging whether $\widetilde{o}_{ik} \geq 0.5$,
which is different from the approach in~\Cref{alg:post-processing-rr}.
\textbf{Secondly} (Line $10-12$),
the curator obtains $\widehat{\texttt{cmp}}_{\textbf{L}}(a_{j},a_{l})$
by directly subtracting $\widehat{C}_{a_{l} a_{j}}(\textbf{L})$
from $\widehat{C}_{a_{j} a_{l}}(\textbf{L})$.
\textbf{Finally} (Line $13-14$),
the curator assembles all the values
of $\widehat{\texttt{cmp}}_{\textbf{L}}(a_{j},a_{l})$ into the estimated version
of aggregate pairwise comparison profile $\widehat{\texttt{cmp}}(\textbf{L})$.
By taking $\widehat{\texttt{cmp}}(\textbf{L})$ as input,
the standard $\texttt{KwikSort}$ algorithm is executed
to generate an aggregate ranking $\widetilde{L}_{\mathcal{P}}$.

\begin{algorithm}[h]
\small
\caption{Post-processing in \texttt{LDP-KwikSort:Lap}}
\label{alg:post-processing-lap}
\begin{algorithmic}[1]
\Require
        Agents' perturbed answers $\widetilde{\textbf{o}}=\{\widetilde{\textbf{o}}_{1},...,\widetilde{\textbf{o}}_{n}\}$
\Ensure
        Aggregate ranking $\widetilde{L}_{\mathcal{P}}$
\For    {$\widetilde{\textbf{o}}_{i} \in \widetilde{\textbf{o}}$} \Comment{Classification}
\For    {$\widetilde{o}_{ik} \in \widetilde{\textbf{o}}_{i}$}
\If     {$\widetilde{o}_{ik} \geq 0.5$}
\State  $\widehat{C}_{a_{j} a_{l}}(\textbf{L}) + 1$
\Else
\State  $\widehat{C}_{a_{l} a_{j}}(\textbf{L}) + 1$
\EndIf
\EndFor
\EndFor
\For    {each possible pair $(a_{j},a_{l})$} \Comment{Computation}
\State  $\widehat{\texttt{cmp}}_{\textbf{L}}(a_{j},a_{l}) \leftarrow \widehat{C}_{a_{j} a_{l}}(\textbf{L}) - \widehat{C}_{a_{l} a_{j}}(\textbf{L})$.
\EndFor
\State  Obtains $\widehat{\texttt{cmp}}(\textbf{L})$ and executes \texttt{KwikSort} algorithm \\
\Return Aggregate ranking $\widetilde{L}_{\mathcal{P}}$
\end{algorithmic}
\end{algorithm}

\section{Theoretical Analysis}
\label{sec-ldp-ra-theoretical-analysis}

In this section,
we provide the privacy and utility guarantees,
as well as the computational complexity
of the proposed \texttt{LDP-KwikSort} protocol.

\subsection{Privacy Guarantee}

We have the following theorem on the privacy guarantee
of the \texttt{LDP-KwikSort} protocol.

\begin{theorem}\label{thr:privacy-guarantee}
\texttt{LDP-KwikSort:RR} satisfies
$\epsilon$-LDP
and \texttt{LDP-KwikSort:Lap} satisfies $\epsilon$-LiDP.
\end{theorem}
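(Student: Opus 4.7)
The plan is to verify the two parts independently, each reducing to a straightforward check against Definition \ref{def:LDP} and Definition \ref{def:LiDP} respectively. In both solutions the curator accesses each agent's data only through $K$ invocations of the local randomizer $\mathcal{M}_R$, and I would split the overall budget evenly as $\epsilon_k = \epsilon/K$ so that $\sum_{k=1}^{K} \epsilon_k = \epsilon$ holds trivially. It then suffices to show that each single invocation $\mathcal{M}_R^{(k)}$ satisfies $\epsilon_k$-DP in the RR case, and $\epsilon_k$-iDP in the Laplace case; the composition clause built into the two definitions then delivers the global guarantee.

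For \texttt{LDP-KwikSort:RR}, I would fix an arbitrary query $k$, view the agent's true pairwise preference as a binary input $o_{ik} \in \{0,1\}$ (obtained deterministically from $L_i$), and take its ``neighbor'' to be any alternative input $o_{ik}' \in \{0,1\}$. For any reported value $\widetilde{o}_{ik} \in \{0,1\}$, the transformation matrix $\textbf{M}_{rr}$ yields
\[
\frac{\Pr[\mathcal{M}_R^{(k)}(o_{ik}) = \widetilde{o}_{ik}]}{\Pr[\mathcal{M}_R^{(k)}(o_{ik}') = \widetilde{o}_{ik}]}
\;\leq\; \frac{p_{rr}}{1 - p_{rr}}
\;=\; \frac{e^{\epsilon_k}/(e^{\epsilon_k}+1)}{1/(e^{\epsilon_k}+1)}
\;=\; e^{\epsilon_k},
\]
so each invocation satisfies $\epsilon_k$-DP. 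Since two arbitrary rankings $L_i, L_i'$ either induce the same $o_{ik}$ (ratio equals $1$) or differ in it (ratio bounded as above), the bound transfers to the user-level neighboring relation required by Definition \ref{def:LDP}, and summing the budgets gives $\epsilon$-LDP.

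For \texttt{LDP-KwikSort:Lap}, the per-query function $f_o$ maps an agent's ranking profile into $\{0,1\}$, so any change of the profile alters its output by at most $1$; hence $\Delta_l f_o(L_i) = 1$ at the actual input $L_i$. The standard density-ratio argument for the Laplace distribution then shows that releasing $o_{ik} + X_{ldp}$ with $X_{ldp} \sim \textsf{Lap}(1/\epsilon_k)$ satisfies $\epsilon_k$-iDP against $L_i$ and its neighbors. Invoking Definition \ref{def:LiDP} with the budgets summing to $\epsilon$ then yields $\epsilon$-LiDP for the whole protocol.

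The main obstacle is conceptual rather than calculational: one must be careful that the ``neighboring'' relation used at each invocation is interpreted at the input level of $\mathcal{M}_R$ (a single binary answer, or an agent's full ranking that determines such an answer), and that for the Laplace solution the pertinent notion is the \emph{local} sensitivity at the agent's actual profile $L_i$, which is precisely why the resulting guarantee is iDP rather than standard DP. Once this bookkeeping is in place, the rest follows directly from the composition structure encoded in Definitions \ref{def:LDP} and \ref{def:LiDP}.
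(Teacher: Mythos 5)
Your proof is correct and follows essentially the same route as the paper: bound the single-invocation likelihood ratio for RR by $p_{rr}/(1-p_{rr}) = e^{\epsilon_k}$, invoke the local-sensitivity Laplace mechanism for the iDP claim, and compose the $K$ invocations via the budget-splitting clauses of Definitions \ref{def:LDP} and \ref{def:LiDP}. If anything, your version is slightly more careful than the paper's (whose displayed ratio contains a spurious factor of $2$ in the numerator, and whose Laplace case is merely asserted), so no changes are needed.
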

\begin{proof}
We respectively analyze two solutions of \texttt{LDP-KwikSort} protocol as follows:
\begin{enumerate}
\item[a.]
In the local perturbation algorithm,
since the solution \texttt{LDP-KwikSort:RR} adopts
the transformation matrix based local randomizer $\mathcal{M}_{R}$
to perturb each answer from the agent,
we have
\begin{equation*}
\frac{Pr[\mathcal{M}_{R}(o_{ik} = u) = u]}{Pr[\mathcal{M}_{R}(o_{ik} = v) = u]} =
\frac{2 p_{rr}}{1-p_{rr}} = e^{\epsilon_{k}}.
\end{equation*}
Based on~\Cref{def:DP},
we can conclude that the local randomizer $\mathcal{M}_{R}$
in solution \texttt{LDP-KwikSort:RR} satisfies $\epsilon_{k}$-DP.

\item[b.]
In the local perturbation algorithm,
since the solution \texttt{LDP-KwikSort:Lap} adopts
the local sensitivity based Laplace mechanism
to design the local randomizer $\mathcal{M}_{R}$
for perturbing each answer from the agent,
based on~\Cref{def:iDP},
we can conclude that the local randomizer $\mathcal{M}_{R}$
in solution \texttt{LDP-KwikSort:Lap}
satisfies $\epsilon_{k}$-iDP.
\end{enumerate}
Consequently,
according to~\Cref{def:LDP},
we can conclude that
\texttt{LDP-KwikSort:RR} satisfies $\epsilon$-LDP and
\texttt{LDP-KwikSort:Lap} satisfies $\epsilon$-LiDP,
where $\epsilon=K \cdot \epsilon_{k}$.
\end{proof}

\subsection{Utility Guarantee}
\label{sec-ldp-ra-utility-guarantee}

\subsubsection{Analysis of \texttt{LDP-KwikSort:RR}}

As described in~\Cref{alg:post-processing-rr},
the post-processing procedure on the side of the curator
mainly contains two parts:
the estimation of $\texttt{cmp}(\textbf{L})$ from the noisy data,
and the execution of the \texttt{KwikSort} algorithm.
It is known that \texttt{KwikSort} can
output an $11/7$-optimal aggregate ranking
with a given aggregate pairwise comparison profile
$\texttt{cmp}(\textbf{L})$.
Therefore,
it is important to measure the error
of estimating $\texttt{cmp}(\textbf{L})$ from the first part,
which indicates the gap between
the aggregate ranking $\widetilde{L}_{\mathcal{P}}$
from \texttt{KwikSort} protocol and
the $11/7$-optimal aggregate ranking ${L}_{KS}$.
When investigating the internal disagreements of
the original aggregate pairwise comparison profile $\texttt{cmp}(\textbf{L})$
and its estimated version $\widehat{\texttt{cmp}}(\textbf{L})$,
we use~\Cref{equ:naive-reconstruction} to obtain
\begin{equation*}
\textbf{M}_{rr} \overrightarrow{\widehat{X}} = \overrightarrow{Y},
\end{equation*}
That is
\[
\begin{pmatrix}
  p_{rr} & 1-p_{rr} \\
  1-p_{rr} & p_{rr}
\end{pmatrix}
\begin{pmatrix}
  \widehat{C}_{a_{j} a_{l}}(\textbf{L}) \\
  \widehat{C}_{a_{l} a_{j}}(\textbf{L})
\end{pmatrix}
=
\begin{pmatrix}
  \widetilde{C}_{a_{j} a_{l}}(\textbf{L}) \\
  \widetilde{C}_{a_{l} a_{j}}(\textbf{L})
\end{pmatrix}.
\]
Expend the equation,
we have
\begin{dmath}
\widehat{\texttt{cmp}}_{\textbf{L}}(a_{j},a_{l})
= \widehat{C}_{a_{j} a_{l}}(\textbf{L}) - \widehat{C}_{a_{l} a_{j}}(\textbf{L})
=\frac{\widetilde{C}_{a_{j} a_{l}}(\textbf{L}) - \widetilde{C}_{a_{l} a_{j}}(\textbf{L})}{p_{rr}-(1-p_{rr})}
=\frac{\widetilde{\texttt{cmp}}_{\textbf{L}}(a_{j},a_{l})}{2p_{rr}-1}.
\end{dmath}
Since the denominator $(2p_{rr}-1)>0$ when $\epsilon_{k}>0$,
we can conclude that
the sign of $\widehat{\texttt{cmp}}_{\textbf{L}}(a_{j},a_{l})$
is the same as that of $\widetilde{\texttt{cmp}}_{\textbf{L}}(a_{j},a_{l})$.
As the \texttt{KwikSort} algorithm
determines the relative order of alternatives $(a_{j},a_{l})$
only by checking the sign of $\texttt{cmp}_{\textbf{L}}(a_{j},a_{l})$,
we turn to measure the probability that
the sign of $\widetilde{\texttt{cmp}}_{\textbf{L}}(a_{j},a_{l})$
is different with that of $\texttt{cmp}_{\textbf{L}}(a_{j},a_{l})$.
That is to say,
the achieved utility of \texttt{LDP-KwikSort}
relies on the condition that
for the ground truth
$L_{(\textbf{L})}^{-1}(a_{j}) < L_{(\textbf{L})}^{-1}(a_{l})$
whether the curator can directly obtain it
by just observing the noisy data from the agents.
Besides,
when $\widetilde{\texttt{cmp}}_{\textbf{L}}(a_{j},a_{l})$
and $\texttt{cmp}_{\textbf{L}}(a_{j},a_{l})$ have the same sign,
the difference between their absolute values
will not impact on the utility of solution \texttt{LDP-KwikSort:RR}.

Next,
we present an error bound and the associated proof,
which are based on the assumption that
the agents' ranking preference profiles
are generated by the \textsf{Mallows} model.
The \textsf{Mallows} model~\cite{Mallows57}
is a probabilistic model for ranking generation
in which the probability of generating a ranking is based on
the dispersion parameter $\theta \in [0,1]$ and the distance
between this ranking and a ground truth ranking.
Specifically,
$\theta=1-\frac{q_{M}}{p_{M}}$
where $p_{M} \in [\frac{1}{2},1]$ is the probability
for generating the relative order of alternatives
$(a_{j},a_{l})$ which is consistent with the ground truth
and $q_{M}=1-p_{M}$ is the opposite probability.

\begin{theorem}\label{thr:utility-guarantee-rr}
If all the ranking preference profiles are generated by
the \textsf{Mallows} model with the dispersion parameter $\theta$
and a ground truth ranking,
the estimation of $\texttt{cmp}(\textbf{L})$
by \texttt{LDP-KwikSort:RR} produces the error $< 6\mu$
with probability at least $1-2^{-6\mu}$,
where
$\mu = 2 \binom{m}{2}
\exp\left(-\frac{\epsilon^2 K}{(\epsilon+2K)^2}\cdot\frac{{\theta^*}^2 n}{m (m-1)}\right)$
and $\theta^*=\frac{\theta}{2-\theta}$.
\end{theorem}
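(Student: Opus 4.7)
The plan is to reduce the claim about the estimation error of $\texttt{cmp}(\textbf{L})$ to counting sign-disagreements on pairs, bound the per-pair sign-error probability via Hoeffding, and conclude by a Chernoff-type concentration over all pairs. Exploiting the identity $\widehat{\texttt{cmp}}_{\textbf{L}}(a_j,a_l) = \widetilde{\texttt{cmp}}_{\textbf{L}}(a_j,a_l)/(2p_{rr}-1)$ derived earlier in~\Cref{sec-ldp-ra-utility-guarantee}, $\widehat{\texttt{cmp}}_{\textbf{L}}$ and $\widetilde{\texttt{cmp}}_{\textbf{L}}$ share signs, and since \texttt{KwikSort} only consults these signs, the effective error in the estimate of $\texttt{cmp}(\textbf{L})$ can be identified with the number $X = \sum_{j<l} X_{jl}$ of pairs on which $\mathrm{sign}(\widetilde{\texttt{cmp}}_{\textbf{L}}(a_j,a_l))$ disagrees with the Mallows ground-truth order of $(a_j,a_l)$. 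Proving the theorem thus reduces to showing $P[X \ge 6\mu] \le 2^{-6\mu}$.

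Next I would control each $P[X_{jl}=1]$. Under the Mallows model the probability that an agent's ranking is consistent with the ground truth on $(a_j,a_l)$ is $p_M$, yielding $2p_M-1 = \theta/(2-\theta) = \theta^*$. For each agent queried on $(a_j,a_l)$, the reported bit recoded as $\pm 1$ has mean $(2p_M-1)(2p_{rr}-1) = \theta^*(2p_{rr}-1)$, and the number $n_{jl}$ of agents asked about this pair has expectation $2nK/(m(m-1))$. Applying Hoeffding's inequality to the sum of $n_{jl}$ bounded $\pm 1$ summands, together with the lower bound $(2p_{rr}-1) \ge \epsilon/(\epsilon+2K)$ that follows from $e^{\epsilon_k}-1 \ge \epsilon_k$ applied to $(e^{\epsilon_k}-1)/(e^{\epsilon_k}+1)$ with $\epsilon_k = \epsilon/K$, yields (after conditioning on the typical value of $n_{jl}$ and doubling to absorb the query-assignment randomness)
\[
P[X_{jl}=1] \le 2\exp\!\left(-\frac{\epsilon^2 K}{(\epsilon+2K)^2}\cdot\frac{\theta^{*2} n}{m(m-1)}\right),
\]
and summing over the $\binom{m}{2}$ pairs gives $E[X] \le \mu$.

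Finally I would apply a multiplicative Chernoff bound in the form $P[X \ge t] \le (eE[X]/t)^{t} e^{-E[X]}$ with $t = 6\mu$; combined with $E[X] \le \mu$ and the numerical inequality $(e/6)^{6} < 2^{-6}$, this gives $P[X \ge 6\mu] \le (e/6)^{6\mu} < 2^{-6\mu}$, which is the stated bound. The main obstacle I anticipate is handling the dependencies among the $X_{jl}$: they share randomness both in the query assignment (which $K$ pairs each agent is asked) and in the underlying Mallows ranking of each agent, so signs of different pair-comparisons for the same agent are not independent. A clean route is to condition on the random query assignment to decouple the RR noise across queries, apply Hoeffding per pair on independent $\pm 1$ summands, and then invoke Chernoff on the conditionally independent-Bernoulli approximation of $X$; if residual dependence forces a slightly worse constant, the same structure still yields the stated $6\mu$ slack once the numerical check $(e/6)^{6} < 2^{-6}$ is replaced by its analogue.
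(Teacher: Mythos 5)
Your proposal follows essentially the same route as the paper's proof: model each pair's reported comparison as a Bernoulli-type trial whose success probability composes the Mallows consistency probability $p_M$ with the RR retention probability $p_{rr}$, apply Hoeffding to bound the per-pair sign-error probability by $2\exp\bigl(-\tfrac{\epsilon^2 K}{(\epsilon+2K)^2}\cdot\tfrac{\theta^{*2} n}{m(m-1)}\bigr)$ using $n^*=nK/\binom{m}{2}$ and $e^x\ge x+1$, sum over the $\binom{m}{2}$ pairs to get $\mathbf{E}[X]\le\mu$, and finish with the Chernoff corollary $Pr[X\ge 6\mu]\le 2^{-6\mu}$. If anything, you are more careful than the paper, which silently treats the per-pair indicators as independent Poisson trials and $n^*$ as deterministic, whereas you explicitly flag and propose to handle the dependence coming from the shared query assignment and each agent's single underlying ranking.
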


\begin{proof}
In our scenario,
for a certain pairwise comparison query such that
`for alternative pair $(a_{j},a_{l})$,
whether $L_{i}^{-1}(a_{j}) < L_{i}^{-1}(a_{l})$ or not',
there will be $n^* = n \frac{K}{\binom{m}{2}}$ agents involved,
and each agent reports his/her true answer (resp. false answer)
with probability $p_{rr}=\frac{e^{\epsilon_{k}}}{e^{\epsilon_{k}} + 1}$
(resp. $q_{rr} = 1-p_{rr} =\frac{1}{e^{\epsilon_{k}} + 1}$).
When considering the assumption regarding the \textsf{Mallows} model,
an agent finally reports the ground truth answer
(i.e.,
$L_{i}^{-1}(a_{j}) < L_{i}^{-1}(a_{l})$ for all $j<l$)
with probability $\verb"p" = p_{M} \cdot p_{rr} + q_{M} \cdot q_{rr}$
and reports the opposite answer with probability
$\verb"q" = p_{M} \cdot q_{rr} + q_{M} \cdot p_{rr}$.

Since the above procedure could be seen as a Bernoulli trial,
here we adopt the variation of Hoeffding's inequality~\cite{HoeffdingInequality}
for Bernoulli trial as follows:

\begin{lemma}\label{lem:hoeffdinginequality}
For a Bernoulli trial in which $n$ times of experiment
have been conducted and each experiment outputs
outcome $A$ with probability $\verb"p"$
and outcome $B$ with probability $\verb"q"=1-\verb"p"$,
we have the following probability inequality for some $\delta>0$:
\begin{equation*}
Pr[(\verb"p"-\delta)n \leq H(n) \leq (\verb"p"+\delta)n] \geq 1-2\exp(-2{\delta}^2n),
\end{equation*}
where $H(n)$ is the number of outcome $A$ in $n$ experiments.
\end{lemma}

We then have the following probability inequalities
by~\Cref{lem:hoeffdinginequality}:
\begin{equation*}
Pr[(\verb"p"-\delta)n^* \leq
\widetilde{C}_{a_{j} a_{l}}(\textbf{L}) \leq (\verb"p"+\delta)n^*] \geq 1-2\exp(-2{\delta}^2n^*),
\end{equation*}
\begin{equation*}
Pr[(\verb"q"-\delta)n^* \leq
\widetilde{C}_{a_{l} a_{j}}(\textbf{L}) \leq (\verb"q"+\delta)n^*] \geq 1-2\exp(-2{\delta}^2n^*),
\end{equation*}
and hence
\begin{equation*}
Pr[(\verb"p"-\verb"q"-2\delta)n^* \leq
\widetilde{C}_{a_{j} a_{l}}(\textbf{L}) - \widetilde{C}_{a_{l} a_{j}}(\textbf{L})
\leq (\verb"p"-\verb"q"+2\delta)n^*] \geq 1-2\exp(-2{\delta}^2n^*).
\end{equation*}

If we let $\verb"p"-\verb"q"=2\delta$
and $\theta^*=\frac{\theta}{2-\theta}$,
the following probability inequalities are obtained:
\begin{equation*}
Pr[0 \leq \widetilde{C}_{a_{j} a_{l}}(\textbf{L}) -
\widetilde{C}_{a_{l} a_{j}}(\textbf{L}) \leq 4\delta n^*] \geq 1-2\exp(-2{\delta}^2n^*)
\end{equation*}
and
\begin{dmath}\label{equ:ldp-kwiksort-rr-error}
Pr[\widetilde{\texttt{cmp}}_{\textbf{L}}(a_{j},a_{l}) =
\widetilde{C}_{a_{j} a_{l}}(\textbf{L}) -
\widetilde{C}_{a_{l} a_{j}}(\textbf{L})<0] < 2\exp(-2{\delta}^2n^*)
= 2\exp\left(-2(\frac{p_{M} \cdot p_{rr} +
q_{M} \cdot q_{rr} - p_{M} \cdot q_{rr} - q_{M} \cdot p_{rr}}{2})^2n^*\right)
= 2\exp\left(- \frac{({p_{rr}-q_{rr}})^2 {\theta^*}^2 n^*}{2}\right)
\approx 2\exp\left(-\frac{\epsilon^2 K}{(\epsilon+2K)^2}\cdot\frac{{\theta^*}^2 n}{m (m-1)}\right)
= P_{\iota}.
\end{dmath}
This formula reflects the error bound
of generating data and reporting answers by $n$ agents
for a certain pair of alternatives $(a_{j},a_{l})$.
Here,
we adopt the inequality $e^x \geq x+1$
for simplifying the result in the last derivation.

After obtaining the error bound
for a certain pairwise comparison query,
we now apply the following Chernoff bound
for all the possible queries:
\begin{lemma}
Let $X_1,...,X_n$ be independent Poisson trials such that
$Pr(X_i=1)=p_i$.
Let $X=\sum_{i=1}^{n} X_i$ and $\mu=\mathbf{E}[X]$.
Then for $R \geq 6\mu$,
\begin{equation*}
Pr(X \geq R) \leq 2^{-R}.
\end{equation*}
\end{lemma}

Therefore,
we finish the proof with obtaining
\begin{equation*}
\mu = \mathbf{E}[X] = \mathbf{E}[\sum_{\iota=1}^{\binom{m}{2}} X_{\iota}]
= \sum_{\iota=1}^{\binom{m}{2}}\mathbf{E}[X_{\iota}]
= \sum_{\iota=1}^{\binom{m}{2}} P_{\iota}
\leq 2\binom{m}{2} \exp
\left(-\frac{\epsilon^2 K}{(\epsilon+2K)^2}\cdot\frac{{\theta^*}^2 n}{m (m-1)}\right).
\end{equation*}
\end{proof}

Based on~\Cref{thr:utility-guarantee-rr},
we can find that the estimation error of \texttt{LDP-KwikSort:RR}
depends on the expectation $\mu$,
and a smaller $\mu$ indicates a smaller error.
Then we observe that
1) when the privacy budget $\epsilon$,
the number of agents $n$ and alternatives $m$,
and the dispersion parameter $\theta$ are fixed,
$\mu$ is influenced by the function $g(K)=\frac{\epsilon^2 K}{(\epsilon+2K)^2}$,
and it gets the approximate minimum value
when the number of queries $K=\frac{\epsilon}{2}$;
2) when more agents are involved
or given a large privacy budget for each agent,
i.e.,
$n\rightarrow \infty$ or $\epsilon\rightarrow \infty$,
$\mu$ will be reduced to zero;
3) with increasing the number of alternatives $m$,
$\mu$ is also increased;
4) when the dispersion parameter $\theta$
is relatively large,
i.e.,
the generated rankings are closer to the ground truth ranking,
we obtain a relatively small $\mu$.
In~\Cref{sec-ldp-ra-empirical-analysis},
we conduct a series of experiments
to verify these theoretical observations.

\subsubsection{Analysis of \texttt{LDP-KwikSort:Lap}}

As described in~\Cref{alg:post-processing-lap},
the post-processing procedure on the side of the curator
also contains two parts:
the estimation of $\texttt{cmp}(\textbf{L})$ from the noisy data,
and the execution of the \texttt{KwikSort} algorithm.
Its most difference with~\Cref{alg:post-processing-rr}
lies in using the threshold checking about $\widetilde{o}_{ik}$
instead of using~\Cref{equ:naive-reconstruction}
to obtain $\widehat{C}_{a_{j} a_{l}}(\textbf{L})$
and $\widehat{C}_{a_{l} a_{j}}(\textbf{L})$.
Then we have the expressions regarding $\widetilde{o}_{ik}$:
\begin{equation*}
\widetilde{o}_{ik} =
\begin{cases}
0, & \text{if $o_{ik} + \textsf{Lap}(\frac{\Delta_{l} f_{o}}{\epsilon_{k}}) < 0.5$},\\
1, & \text{if $o_{ik} + \textsf{Lap}(\frac{\Delta_{l} f_{o}}{\epsilon_{k}}) \geq 0.5$}.
\end{cases}
\end{equation*}
According to the cumulative distribution function
of Laplacian random variables,
we have the probabilities
of a true answer $o_{i}$ being transformed
into the reported answer $\widetilde{o}_{i}$ as
\begin{equation*}
p_{\widetilde{o}_{ik},o_{ik}} =
\begin{cases}
p_{0,0} = F(0.5) = 1-\frac{1}{2}\exp(-\frac{\epsilon_{k}}{2}),\\
p_{0,1} = 1-F(0.5) = \frac{1}{2}\exp(-\frac{\epsilon_{k}}{2}),\\
p_{1,0} = F(-0.5) = \frac{1}{2}\exp(-\frac{\epsilon_{k}}{2}),\\
p_{1,1} = 1-F(-0.5) = 1-\frac{1}{2}\exp(-\frac{\epsilon_{k}}{2}),
\end{cases}
\end{equation*}
and we further have the following transformation matrix:
\[ \textbf{M}_{lap} =
\begin{pmatrix}
  p_{0,0} & p_{0,1} \\
  p_{1,0} & p_{1,1} \\
\end{pmatrix} =
\begin{pmatrix}
  p_{lap} & 1-p_{lap} \\
  1-p_{lap} & p_{lap} \\
\end{pmatrix}, \]
where all the diagonal elements are assigned with the value $p_{lap}$
while $1-p_{lap}$ for other elements,
and $p_{lap} = 1-\frac{1}{2}\exp(-\frac{\epsilon_{k}}{2})$,
$\epsilon_{k}=\epsilon/K$.
Then we follow the analysis technique of \texttt{LDP-KwikSort:RR}
and replace $p_{rr}$ with $p_{lap}$ in~\Cref{equ:ldp-kwiksort-rr-error}
to obtain the following conclusion:

\begin{theorem}\label{thr:utility-guarantee-lap}
If all the ranking preference profiles are generated by
the \textsf{Mallows} model with the dispersion parameter $\theta$
and a ground truth ranking,
the estimation of $\texttt{cmp}(\textbf{L})$
by \texttt{LDP-KwikSort:Lap} produces the error $< 6\mu$
with probability at least $1-2^{-6\mu}$,
where
$\mu = 2 \binom{m}{2} \exp
\left(-(1-e^{-\frac{\epsilon}{2K}})^{2} K\cdot\frac{{\theta^*}^2 n}{m (m-1)}\right)$
and $\theta^*=\frac{\theta}{2-\theta}$.
\end{theorem}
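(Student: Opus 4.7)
The plan is to mirror the proof of \Cref{thr:utility-guarantee-rr} under the substitution $p_{rr} \mapsto p_{lap} = 1 - \tfrac{1}{2}\exp(-\epsilon_k/2)$. The central observation is that although \texttt{LDP-KwikSort:Lap} injects real-valued Laplace noise into each binary answer $o_{ik} \in \{0,1\}$, the threshold at $0.5$ used in \Cref{alg:post-processing-lap} collapses the continuous noisy report back into a single Bernoulli bit. Computing the post-threshold probability mass via the Laplace CDF $F$ yields exactly the transformation matrix $\textbf{M}_{lap}$ tabulated immediately before the theorem, so for every query $k$ the agent's effective reported bit $\widetilde{o}_{ik}$ agrees with the truth with probability $p_{lap}$ and flips with probability $1-p_{lap}$. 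Once this reduction is secured, the Laplace solution can be treated, for purposes of estimation error, as a randomized-response channel with a different bias.

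With this reduction the remainder of the argument carries over from \Cref{thr:utility-guarantee-rr} almost verbatim. First, for a fixed alternative pair $(a_j,a_l)$ I would count the number of agents asked this pair as $n^{*} = nK/\binom{m}{2}$ and, under the \textsf{Mallows} assumption, model their reports as i.i.d.\ Bernoulli trials with success probability $\verb"p" = p_M p_{lap} + q_M(1-p_{lap})$ and failure probability $\verb"q" = p_M(1-p_{lap}) + q_M p_{lap}$. Using $p_M - q_M = \theta^{*}$ as in the RR case, a direct computation gives $\verb"p"-\verb"q" = \theta^{*}(2p_{lap}-1) = \theta^{*}(1 - e^{-\epsilon_k/2})$. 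Setting $\delta = (\verb"p"-\verb"q")/2$ and applying \Cref{lem:hoeffdinginequality} to each of $\widetilde{C}_{a_j a_l}(\textbf{L})$ and $\widetilde{C}_{a_l a_j}(\textbf{L})$, then combining them exactly as in~\Cref{equ:ldp-kwiksort-rr-error}, bounds the single-pair sign-flip probability by
\[
P_{\iota} \;\le\; 2\exp\!\left(-2\delta^{2}n^{*}\right) \;=\; 2\exp\!\left(-\bigl(1-e^{-\epsilon/(2K)}\bigr)^{2} K \cdot \frac{{\theta^{*}}^{2} n}{m(m-1)}\right).
\]

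Finally, I would sum over all $\binom{m}{2}$ alternative pairs, letting $X = \sum_{\iota} X_{\iota}$ count the number of pairs whose reported sign disagrees with the ground-truth sign, to obtain
\[
\mu \;=\; \mathbf{E}[X] \;\le\; 2\binom{m}{2}\exp\!\left(-\bigl(1-e^{-\epsilon/(2K)}\bigr)^{2} K \cdot \frac{{\theta^{*}}^{2} n}{m(m-1)}\right),
\]
and invoke the same multiplicative Chernoff tail $Pr[X \ge 6\mu] \le 2^{-6\mu}$ used in the RR proof to conclude. The main obstacle is the soft but essential first step: rigorously justifying that the threshold-induced channel is exactly a randomized-response style Bernoulli flip, and in particular confirming that the values $F(0.5) = 1 - \tfrac{1}{2}e^{-\epsilon_k/2}$ and $F(-0.5) = \tfrac{1}{2}e^{-\epsilon_k/2}$ of the Laplace CDF produce the tabulated $p_{lap}$ cleanly. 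Notably, in contrast to the RR analysis that relaxed $(e^{\epsilon_k}-1)/(e^{\epsilon_k}+1)$ via $e^{x} \ge 1+x$ into $\epsilon/(\epsilon+2K)$, here the factor $1-e^{-\epsilon/(2K)}$ already appears in closed form, so no additional approximation in the exponent is required.
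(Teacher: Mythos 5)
Your proposal is correct and takes essentially the same route as the paper, which likewise derives the transformation matrix $\textbf{M}_{lap}$ from the Laplace CDF at the $0.5$ threshold and then reuses the RR analysis with $p_{rr}$ replaced by $p_{lap}=1-\tfrac{1}{2}e^{-\epsilon_{k}/2}$, yielding $(2p_{lap}-1)^{2}K=(1-e^{-\epsilon/(2K)})^{2}K$ in the exponent. Your closing remark that no $e^{x}\geq x+1$ relaxation is needed here, unlike in the RR case, is also consistent with the paper's derivation.
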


Based on~\Cref{thr:utility-guarantee-lap},
it is found that the estimation error
of \texttt{LDP-KwikSort:Lap}
also depends on the expectation $\mu$,
and a smaller $\mu$ indicates a smaller error.
We observe that when the privacy budget $\epsilon$,
the number of agents $n$ and alternatives $m$,
and the dispersion parameter $\theta$ are fixed,
$\mu$ is influenced by the function
$g(K)=(1-e^{-\frac{\epsilon}{2K}})^{2} K$,
and it gets the approximate minimum value
when the number of queries $K<\frac{\epsilon}{2}$.
Other observations are the same as \texttt{LDP-KwikSort:RR},
and in the experiments
we will verify these theoretical observations.

\subsection{Computational Complexity}

As shown in~\Cref{tab:ldp-ra-complexity},
we analyze the computational complexity of the proposed protocol
with $K$ queries in terms of the following aspects:

\begin{table*}[h] \centering
\small
\caption{Computational complexity of \texttt{LDP-KwikSort} protocol}
\label{tab:ldp-ra-complexity}
  \begin{tabular}{p{2.5cm}<{\centering} p{2cm}<{\centering} p{4.5cm}
  <{\centering} p{4.5cm}<{\centering}}
    \hline
    \multicolumn{2}{c}{\diagbox[width=15em,trim=l]{Complexity}{Algorithm}} &
    \tabincell{c}{\textbf{Local perturbation}} & \textbf{Post-processing} \\
    \hline
    \multirow{2}{*}{Time}
    & Agent & $O(K)$ & \diagbox[dir=SW]{}{}  \\ \cline{2-2}
    & Curator & $O(nK)$ & $O(nK)+O(m^2)+O(m\log m)$  \\ \hline
    \multirow{2}{*}{Space}
    & Agent & $O(\log K)$ & \diagbox[dir=SW]{}{}  \\ \cline{2-2}
    & Curator & $O(n\log K)$ & $O(n\log K)$  \\ \hline
    \multirow{2}{*}{Communication}
    & Agent & $O(\log K)$ & $O(\log K)$  \\ \cline{2-2}
    & Curator & $O(n\log K)$ & $O(n\log K)$  \\ \hline
  \end{tabular}
\end{table*}

\paragraph{\textbf{Running time.}}
For each agent,
in the local perturbation algorithm
of solution \texttt{LDP-KwikSort:RR},
the basic operations are generating random number $K$ times
and number comparison $K$ times,
thus the complexity is $O(K)$.
For that of \texttt{LDP-KwikSort:Lap},
since the basic operations are generating random number $K$ times
and addition $K$ times,
the complexity is also $O(K)$.

On the curator's side,
in the local perturbation algorithms of two solutions,
it is required to generate $K$ queries to each agent,
thus the complexity is $O(nK)$.
For the post-processing algorithm,
the complexities of the classification phase
and the estimation \& computation phase are $O(nK)$
and $O(\binom{m}{2})\approx O(m^2)$,
respectively.
The execution of \texttt{KwikSort} algorithm
consumes $O(m\log m)$.
Thus,
the total time complexity of the curator's operation
is $O(2nK)+O(m^2)+O(m\log m)$. \\

\paragraph{\textbf{Processing memory.}}
For each agent,
the main processing memory lies in the $K$ queries,
which consumes $O(\log K)$ bits.
For the curator,
as it maintains
$n$ times the sums of at most $2K$ bits for each agent,
the processing memory is $O(2n\log K)$ bits.

\paragraph{\textbf{Communication cost.}}
The number of queries directly impacts the communication cost,
which includes $O(2\log K)$ for each agent
and $O(2n\log K)$ for the curator.

\section{Empirical Analysis}
\label{sec-ldp-ra-empirical-analysis}

In this section,
we present the performance evaluation of
the \texttt{LDP-KwikSort} protocol
and the competitors on both real and synthetic datasets.
\Cref{sec-experiment-settings}
introduces the experiment settings,
and~\Cref{sec-ldp-ra-results-of-K}-\ref{sec-ldp-ra-results-of-theta}
investigate how will the parameters
(number of queries $K$,
privacy budget $\epsilon$,
number of agent $n$ and alternative $m$,
and dispersion parameter $\theta$)
impact on performance.
\Cref{sec-ldp-ra-time-cost}
compares the time costs of involved solutions,
and~\Cref{sec-ldp-ra-discussion}
provides a summarized discussion.

\subsection{Experiment Settings}
\label{sec-experiment-settings}

\subsubsection{Competitors}

\paragraph{\textbf{KwikSort.}}
We adopt the approximate rank aggregation algorithm \texttt{KwikSort}
described in~\cite{AilonCN08JACM} as the non-private solution.
Since each agent responds to each query
with the true answer to the curator,
and the latter releases the aggregate ranking
without adding any noise,
the performances of \texttt{KwikSort} can be seen as
the empirical error lower bound in our experiments.

\paragraph{\textbf{DP-KwikSort.}}
We adopt the differentially private algorithm
\texttt{DP-KwikSort}~\cite{HayEM17SDM}
as the solution under the central model of DP.
When executing this algorithm,
each agent responds to each query
with the true answer to the curator,
and the latter adopts the Laplace mechanism
to introduce noises during the rank aggregation.
Specifically,
the algorithm adds noises into the results of the comparison function:
$\widetilde{\texttt{cmp}}_{\textbf{L}}(a_{j},a_{l})
= \texttt{cmp}_{\textbf{L}}(a_{j},a_{l}) + X_{dp}$
where $X_{dp} \sim \textsf{Lap}(1/\epsilon')$
and $\epsilon' = \epsilon / ((m-1)\log m)$.
Besides,
the number of queries in \texttt{DP-KwikSort}
is defaulted as $K=\binom{m}{2}$.

\subsubsection{Datasets and Configuration}

The experiments are conducted
on synthetic datasets and three real-world datasets
(\textsf{TurkDots},
\textsf{TurkPuzzle} and \textsf{SUSHI}).
Among them,
datasets \textsf{TurkDots} and \textsf{TurkPuzzle}~\cite{MaoPC13AAAI}
were collected from the crowdsourcing marketplace
\emph{Amazon Mechanical Turk},
which respectively contains $n=795$ and $n=793$
agents' full ranking preference profiles
over $m=4$ alternatives.
Dataset \textsf{SUSHI}~\cite{Kamishima03KDD} contains the full rankings
of $m=10$ types of sushi from $n=5000$ agents in a questionnaire survey.
The synthetic datasets were generated by the \textsf{Mallows} model with R package
\texttt{PerMallows} $1.13$~\cite{PerMallows16}.

The involved protocols and algorithms are implemented
in Python $2.7$ based on the package \texttt{pwlistorder} $0.1$
\cite{pwlistorder17},
and executed on an Intel Core i$5-3210$M $2.50$GHz machine with $6$GB memory.
In each experiment,
the protocols and algorithms were tested $30$ times,
and their mean score of the adopted utility metrics was reported.

\subsubsection{Utility Metrics}

\paragraph{\textbf{Error rate.}}
In experiments we coin the measurement \emph{error rate}
to reflect how accurate the estimated version
of aggregate pairwise comparison profile $\widehat{\texttt{cmp}}(\textbf{L})$
agrees with the ground truth $\texttt{cmp}(\textbf{L})$.
Specifically,
for all possible alternative pairs $a_{j}$ and $a_{l}$ where $j<l$,
\begin{equation}
\begin{aligned}
Error Rate =
\frac{|\{ \texttt{cmp}_{\textbf{L}}(a_{j},a_{l}) > 0 \
and \ \widehat{\texttt{cmp}}_{\textbf{L}}(a_{j},a_{l}) < 0 \}|}{\binom{m}{2}}\\ +
\frac{|\{ \texttt{cmp}_{\textbf{L}}(a_{j},a_{l}) < 0 \
and \ \widehat{\texttt{cmp}}_{\textbf{L}}(a_{j},a_{l}) > 0 \}|}{\binom{m}{2}}.
\end{aligned}
\end{equation} \\

\paragraph{\textbf{Average Kendall tau distance.}}
As mentioned before,
we measure the achieved accuracy
of the aggregate ranking $\widetilde{L}_{\mathcal{P}}$
by adopting the average \emph{Kendall tau} distance
$\overline{\textbf{K}} (\widetilde{L}_{\mathcal{P}},\textbf{L}) = \frac{1}{n} \sum_{i \in N}
\textbf{K} (\widetilde{L}_{\mathcal{P}},L_{i})$.
Furthermore,
for the convenience of comparison with different $m$,
we normalize the values by $m(m-1)/2$.

\subsection{The Impact of Query Amount}
\label{sec-ldp-ra-results-of-K}

The number of queries determines the amount
of provided information
about each agent's private ranking preference profile.
Intuitively,
a relatively large $K$ helps the curator to generate
an approximate optimal aggregate ranking.
However,
for \texttt{LDP-KwikSort} protocol,
since the overall privacy budget $\epsilon$
of each agent will be split into $K$ parts
for reporting each query,
a large number of queries will lead to more noises per answer,
which may make the collected information chaotic.
According to the theoretical analysis
in~\Cref{sec-ldp-ra-utility-guarantee},
we observe that two solutions of \texttt{LDP-KwikSort}
can get the approximate minimum error of estimation
around $K=\frac{\epsilon}{2}$.
To verify this conclusion,
we ran \texttt{LDP-KwikSort:RR} and \texttt{LDP-KwikSort:Lap}
on three real-world datasets and three synthetic datasets,
and set the privacy budget $\epsilon \in \{1.0,2.0,4.0,10.0\}$
as well as varying the number of queries
$K \in \{1,2,3,4,m+1,\binom{m}{2}\}$,
for observing how the parameter $K$ impacts on
the performance of \texttt{LDP-KwikSort}
under the utility metrics such as the error rate
and the average Kendall tau distance.
The results are shown in~\Cref{fig:varying-k-rr} and~\Cref{fig:varying-k-lap}.

The results demonstrate that these two solutions
of \texttt{LDP-KwikSort} protocol can achieve
the approximate minimum error of estimation
around $K=\frac{\epsilon}{2}$.
For instance,
when $\epsilon=1.0,2.0$,
with  the increasing number of queries,
we observe that each solution gets a larger error
and when $K=1$ the error is the global minimum.
As for $\epsilon=4.0,10.0$,
each solution achieves the minimum error
when $K=\frac{\epsilon}{2}$.
We also observe that
this property is more obvious
on the synthetic datasets,
which is due to the theoretical conclusion
being made on the assumption of the \textsf{Mallows} model.
Thus,
in the following experiments,
we relate the choice of $K$ to the privacy budget $\epsilon$,
and select the appropriate $K$
by comparing the values of error functions $g(K)$
at the integers which are less or large than $\frac{\epsilon}{2}$.
This strategy will help us to tune the solutions
to their optimal performance.

\begin{figure*}[p]
\centering
\subfigtopskip=2pt
\subfigbottomskip=1pt
\subfigcapskip=-10pt
\includegraphics[width = 0.25\linewidth]{./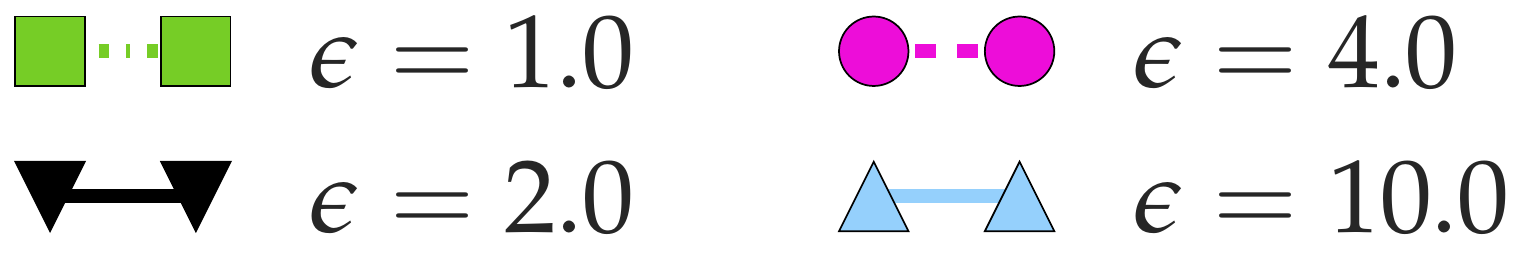}\\
\vspace{0.2cm}
\subfigure[]{\label{subfigure:fig:varying-k-rr:a}
    \includegraphics[width = 0.31\linewidth]{./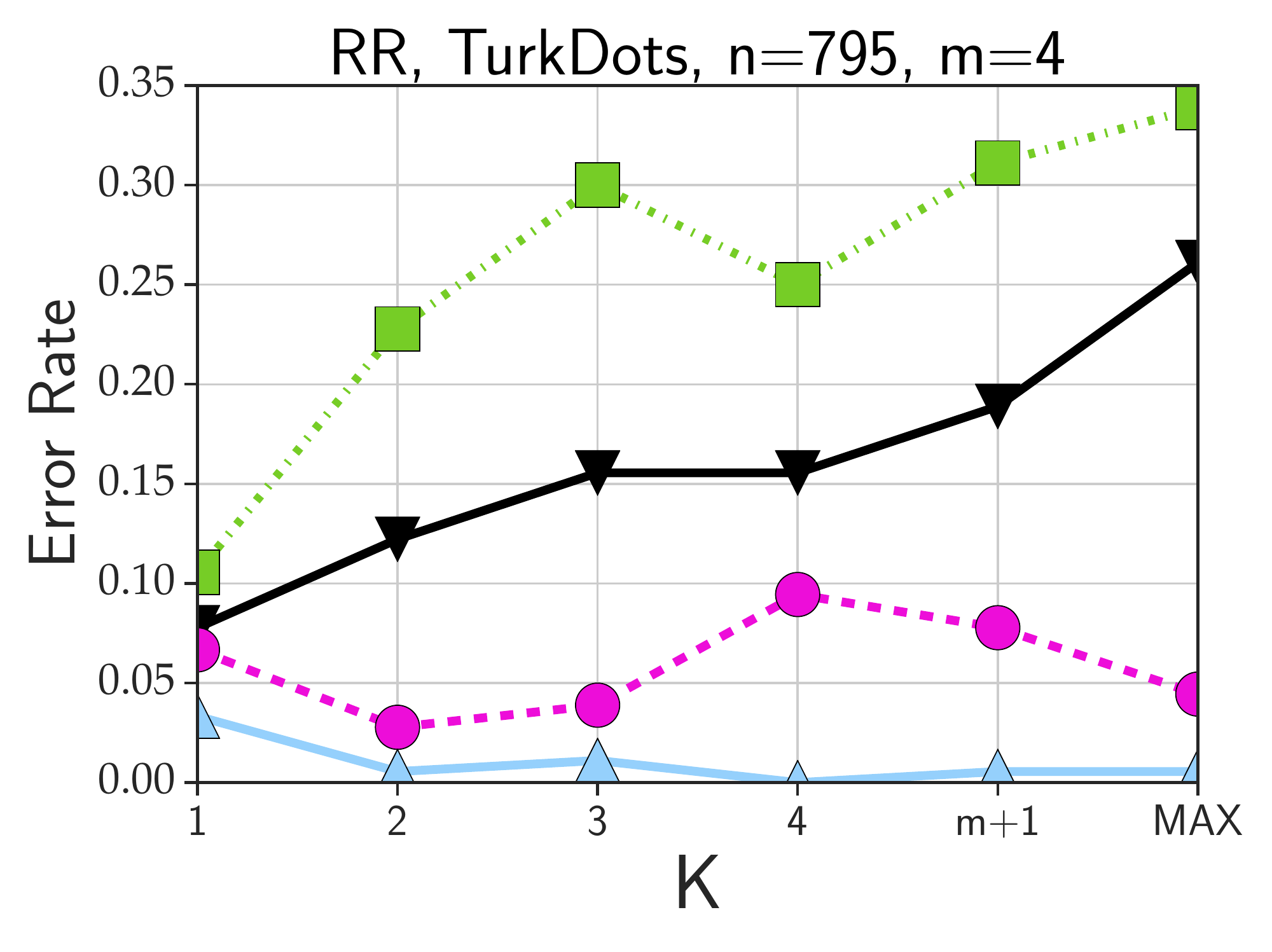}}\
\subfigure[]{\label{subfigure:fig:varying-k-rr:b}
    \includegraphics[width = 0.31\linewidth]{./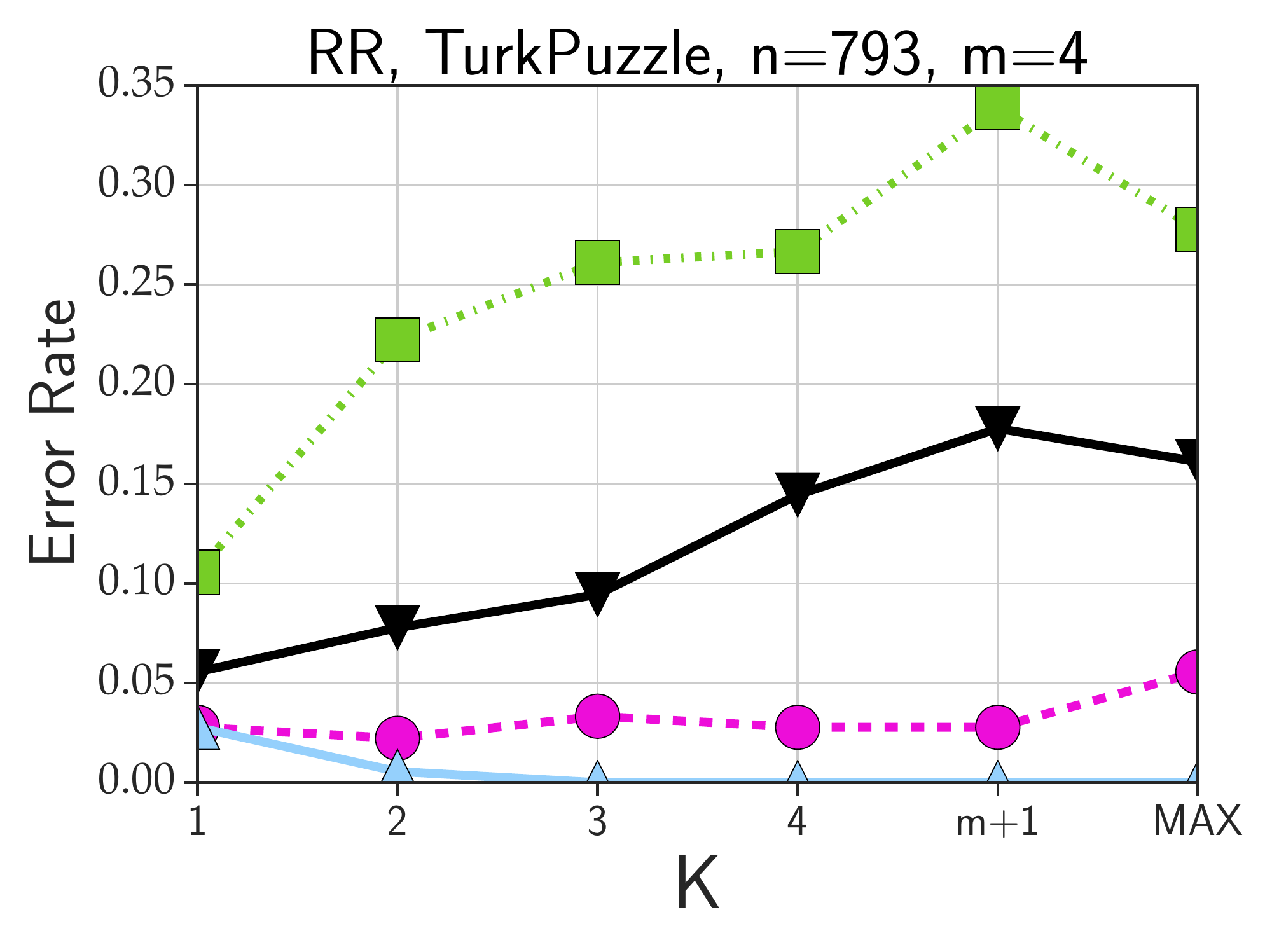}}\
\subfigure[]{\label{subfigure:fig:varying-k-rr:c}
    \includegraphics[width = 0.31\linewidth]{./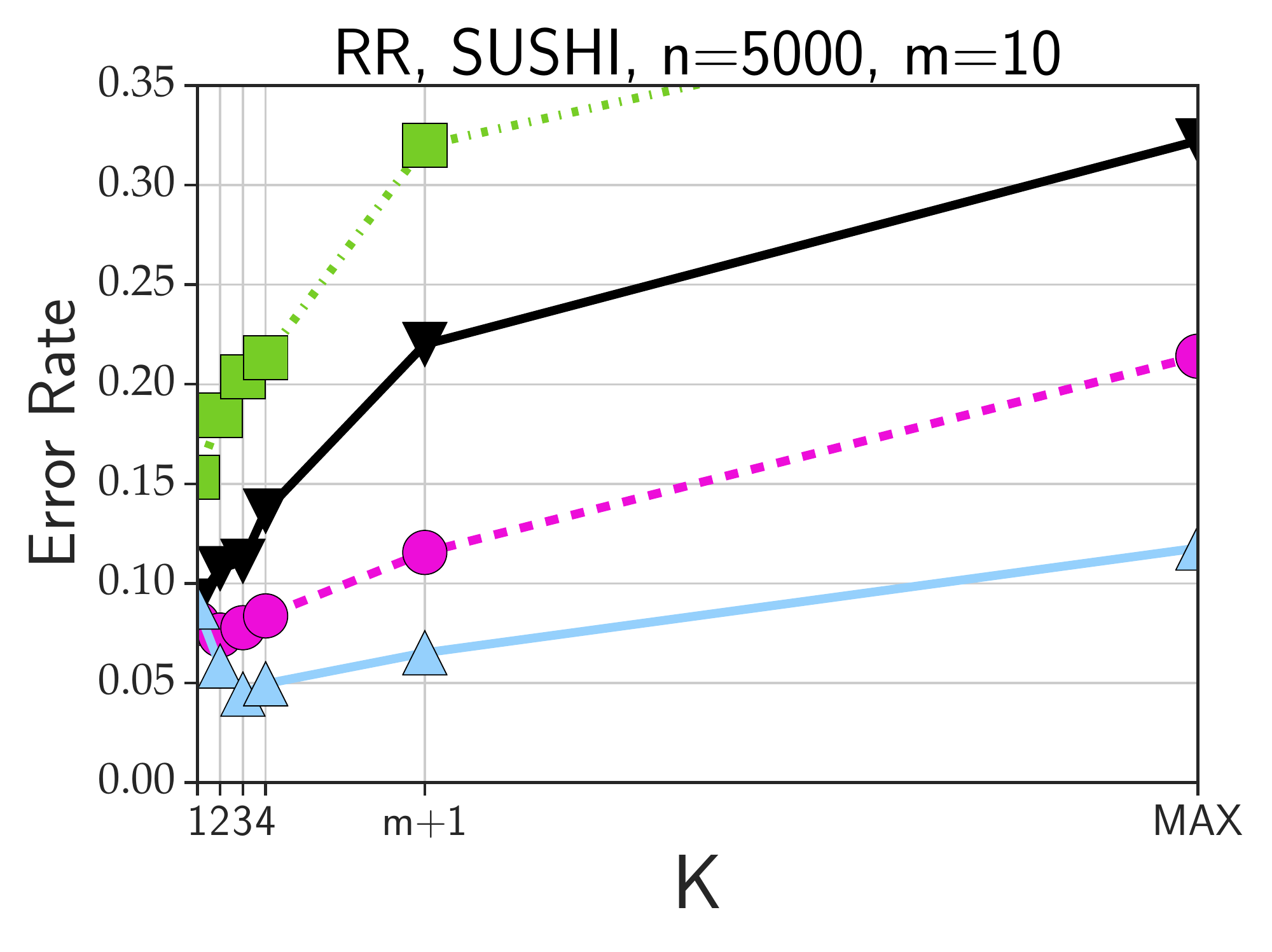}}\
\subfigure[]{\label{subfigure:fig:varying-k-rr:d}
    \includegraphics[width = 0.31\linewidth]{./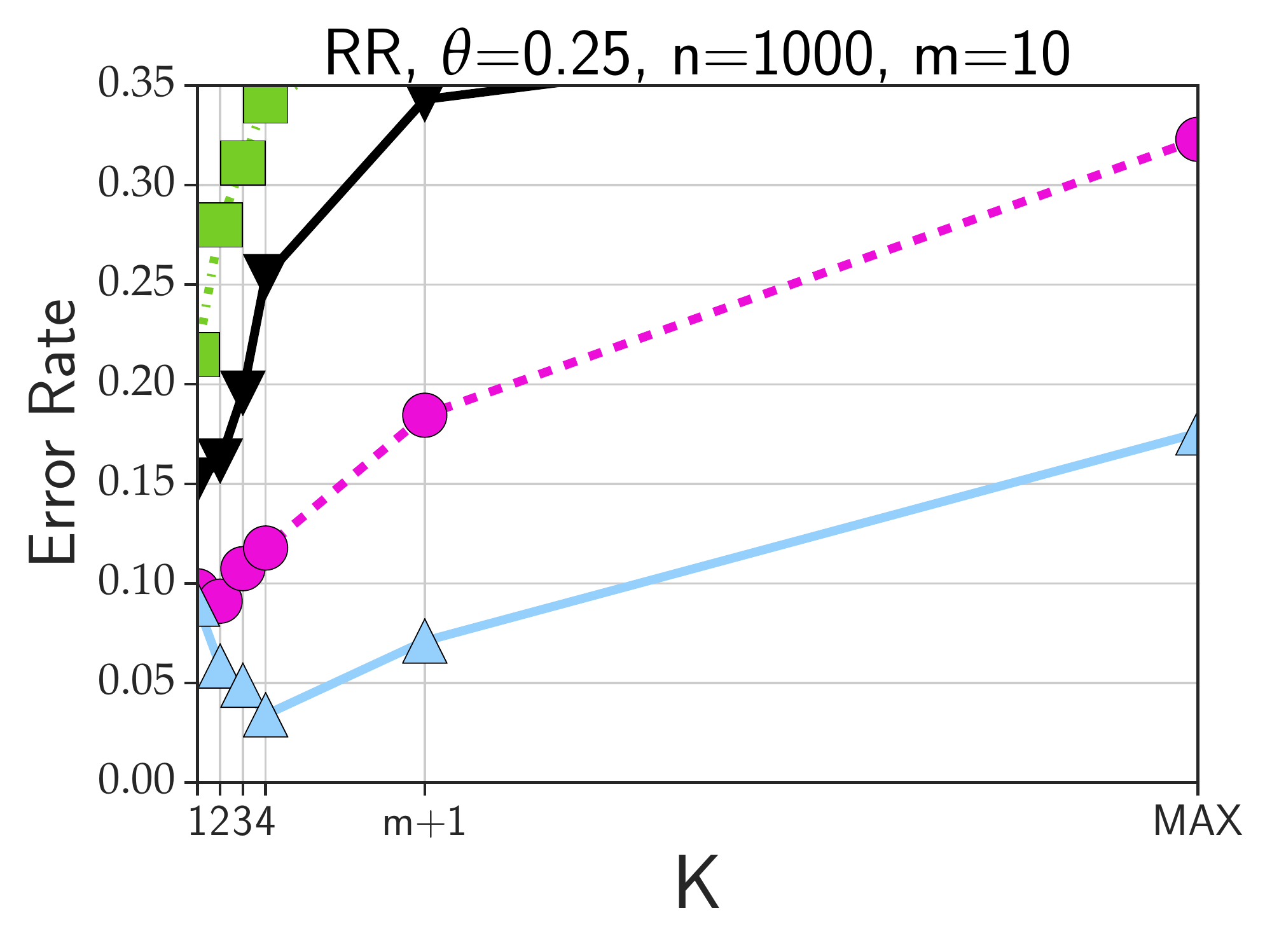}}\
\subfigure[]{\label{subfigure:fig:varying-k-rr:e}
    \includegraphics[width = 0.31\linewidth]{./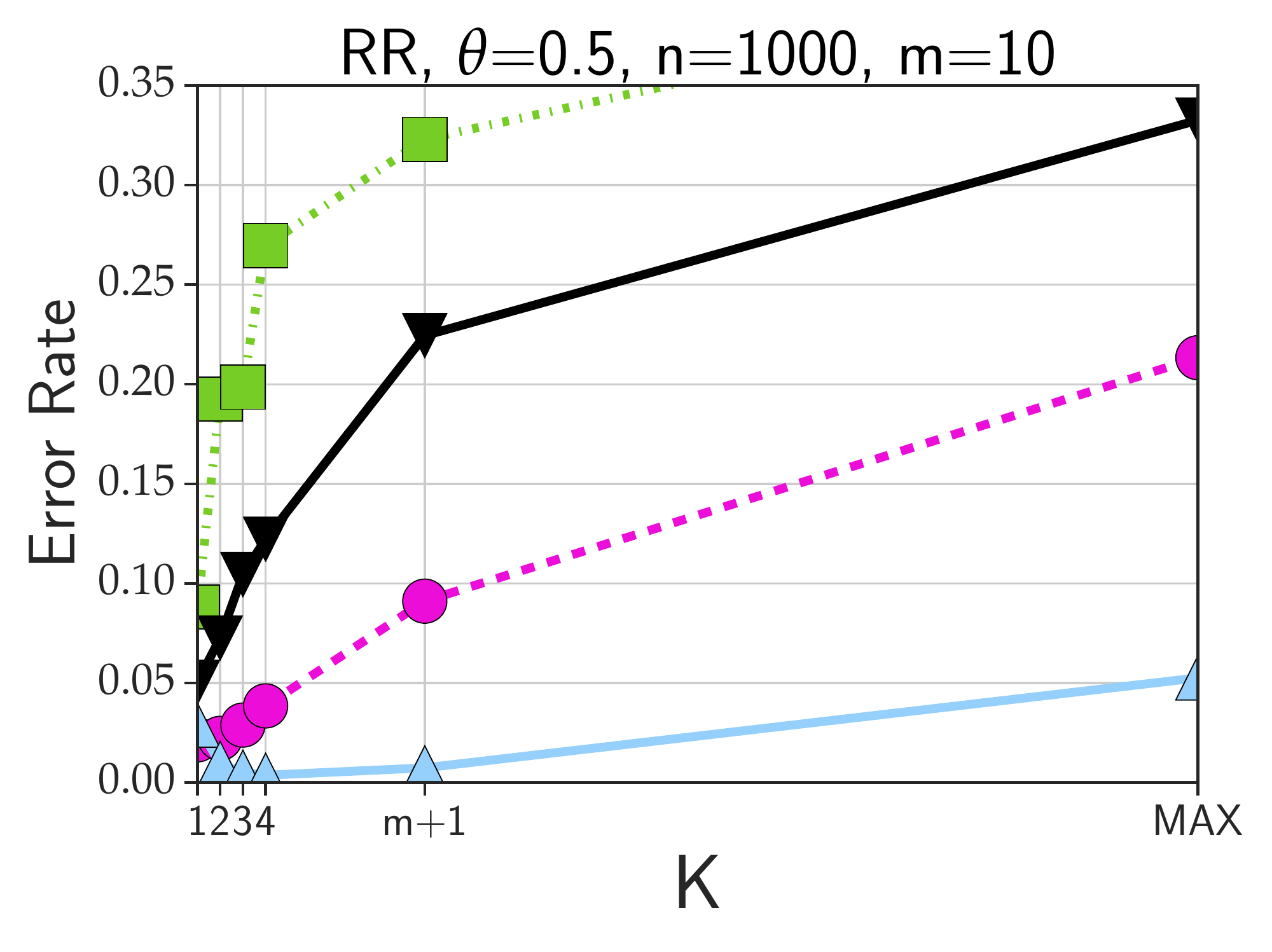}}\
\subfigure[]{\label{subfigure:fig:varying-k-rr:f}
    \includegraphics[width = 0.31\linewidth]{./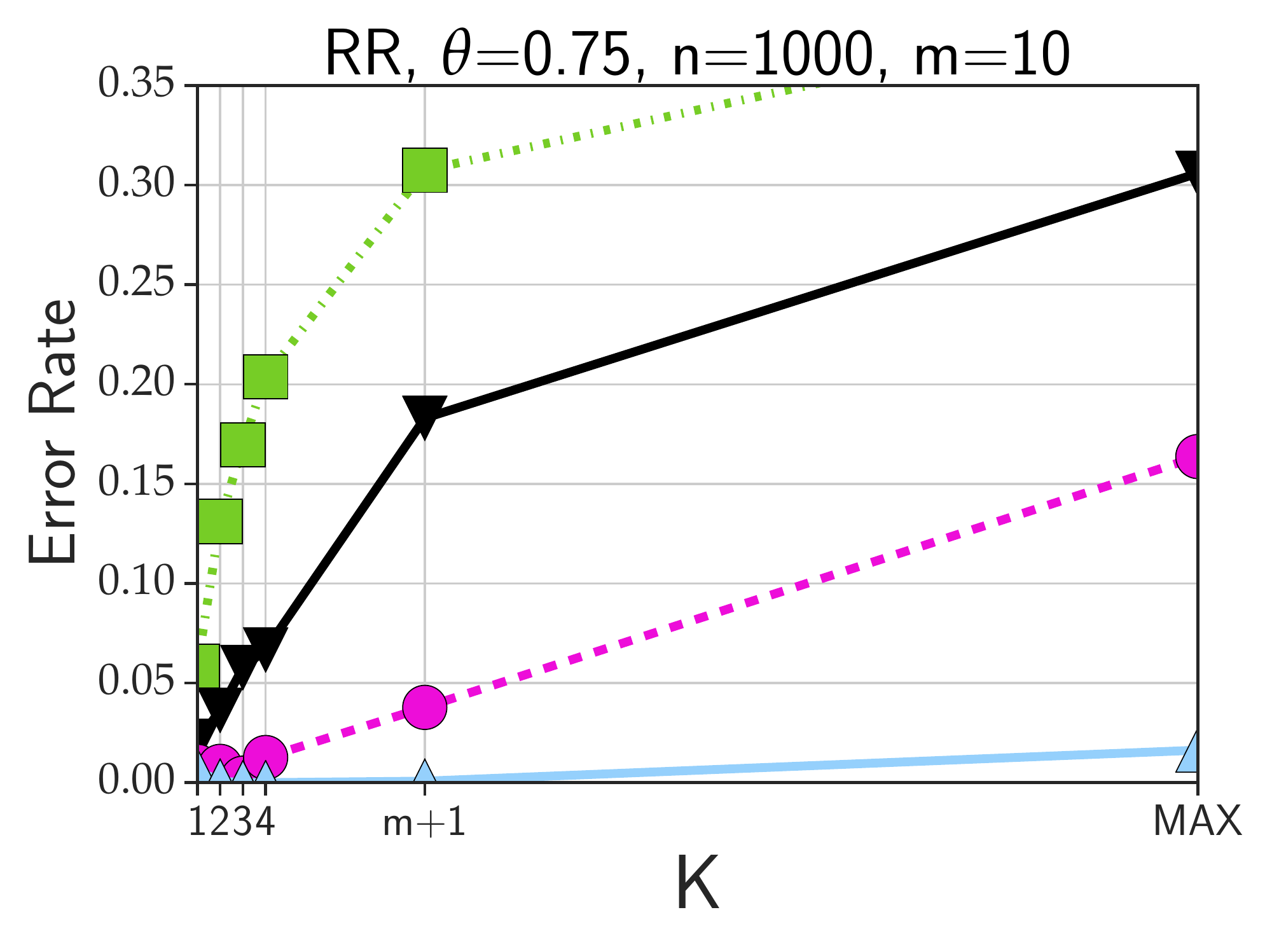}}\
\subfigure[]{\label{subfigure:fig:varying-k-rr:g}
    \includegraphics[width = 0.31\linewidth]{./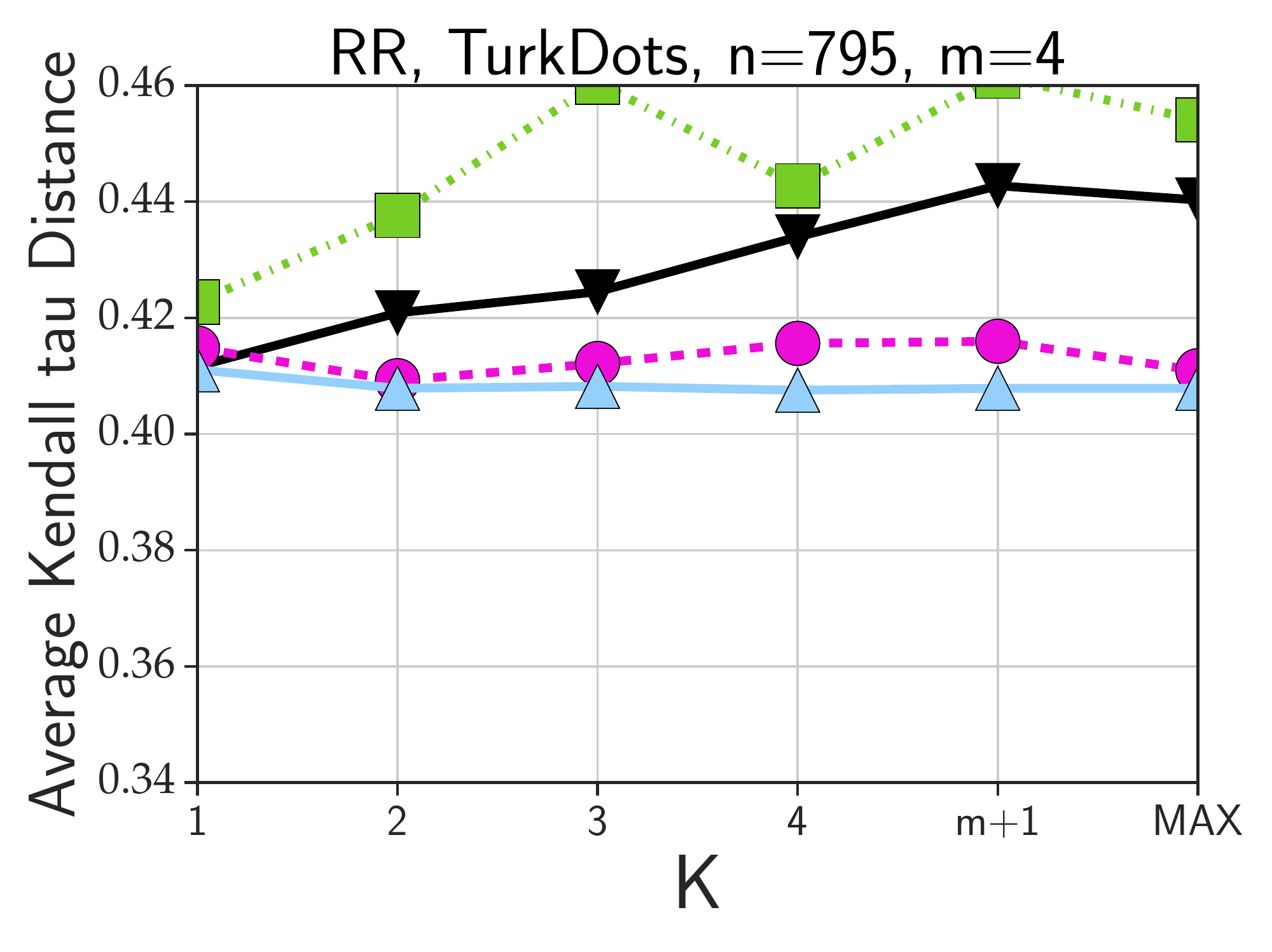}}\
\subfigure[]{\label{subfigure:fig:varying-k-rr:h}
    \includegraphics[width = 0.31\linewidth]{./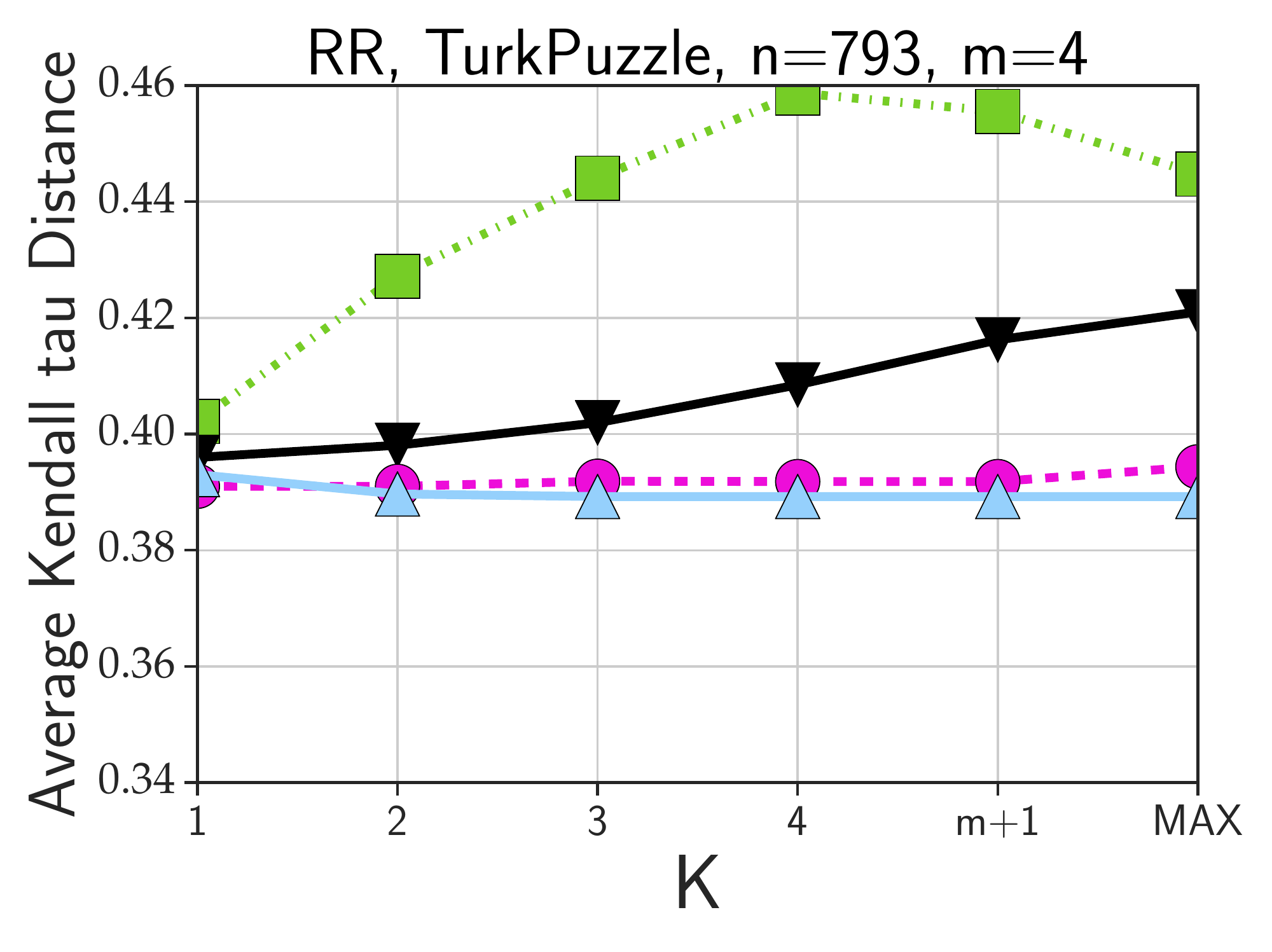}}\
\subfigure[]{\label{subfigure:fig:varying-k-rr:i}
    \includegraphics[width = 0.31\linewidth]{./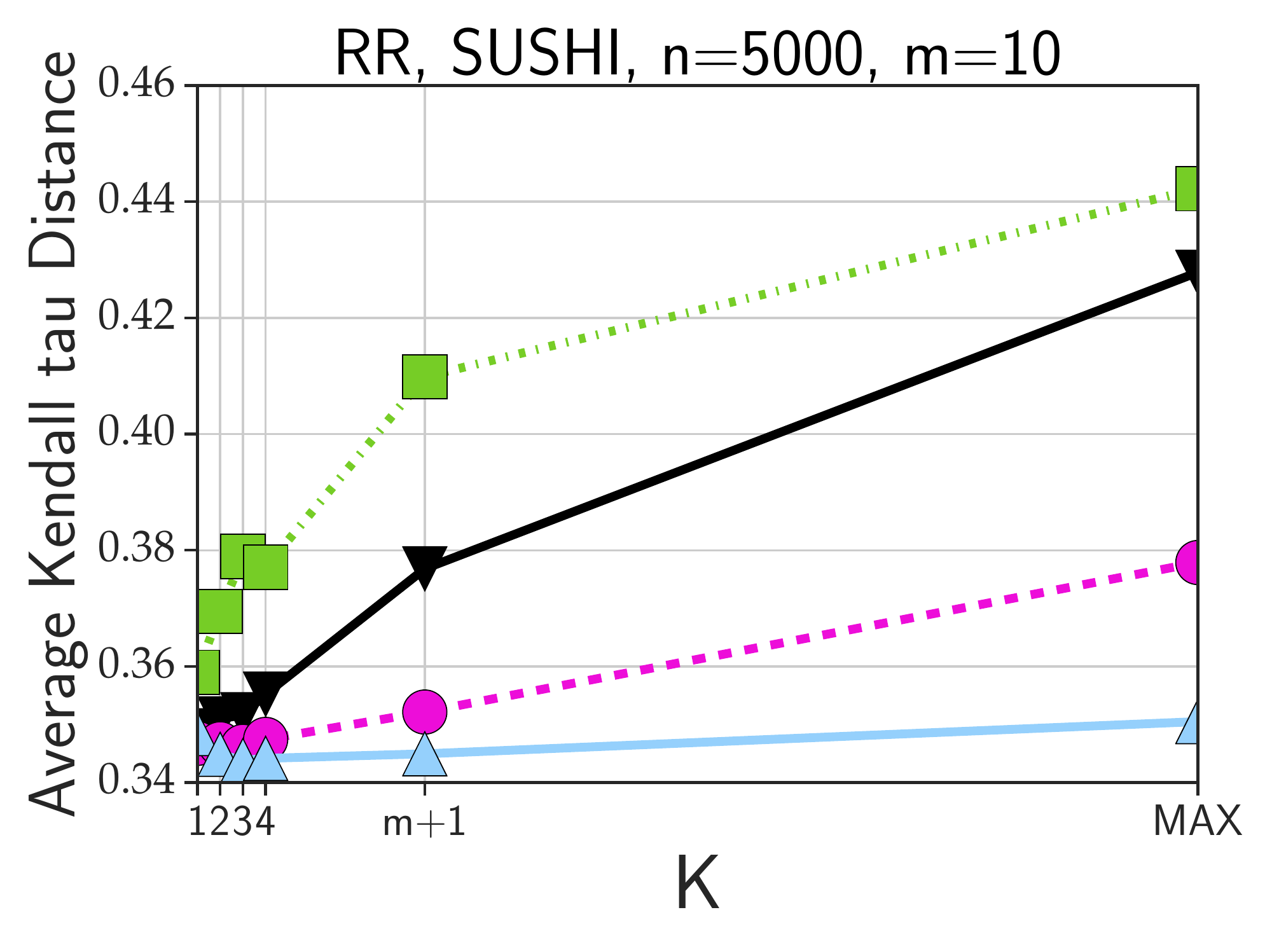}}\
\subfigure[]{\label{subfigure:fig:varying-k-rr:j}
    \includegraphics[width = 0.31\linewidth]{./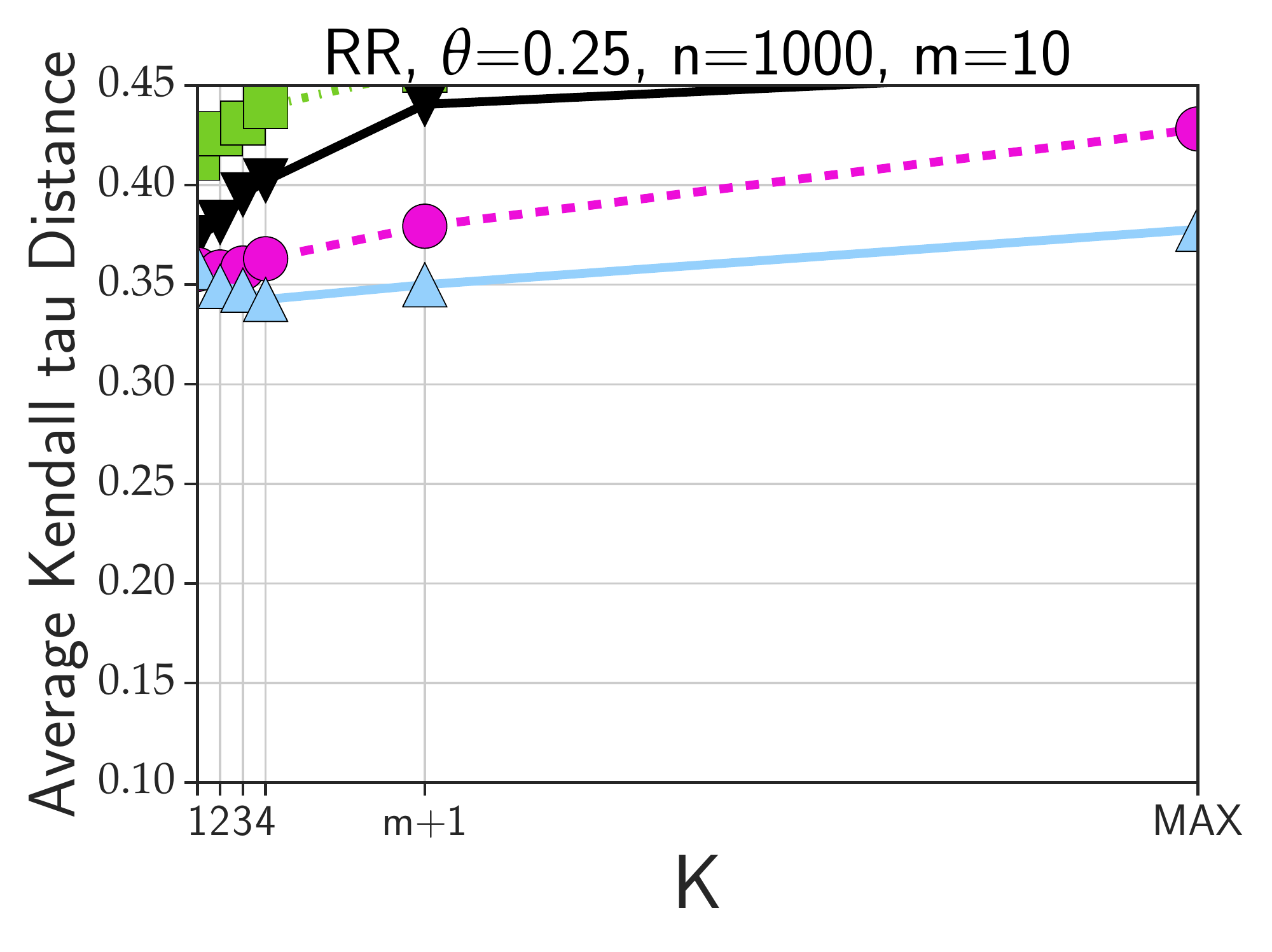}}\
\subfigure[]{\label{subfigure:fig:varying-k-rr:k}
    \includegraphics[width = 0.31\linewidth]{./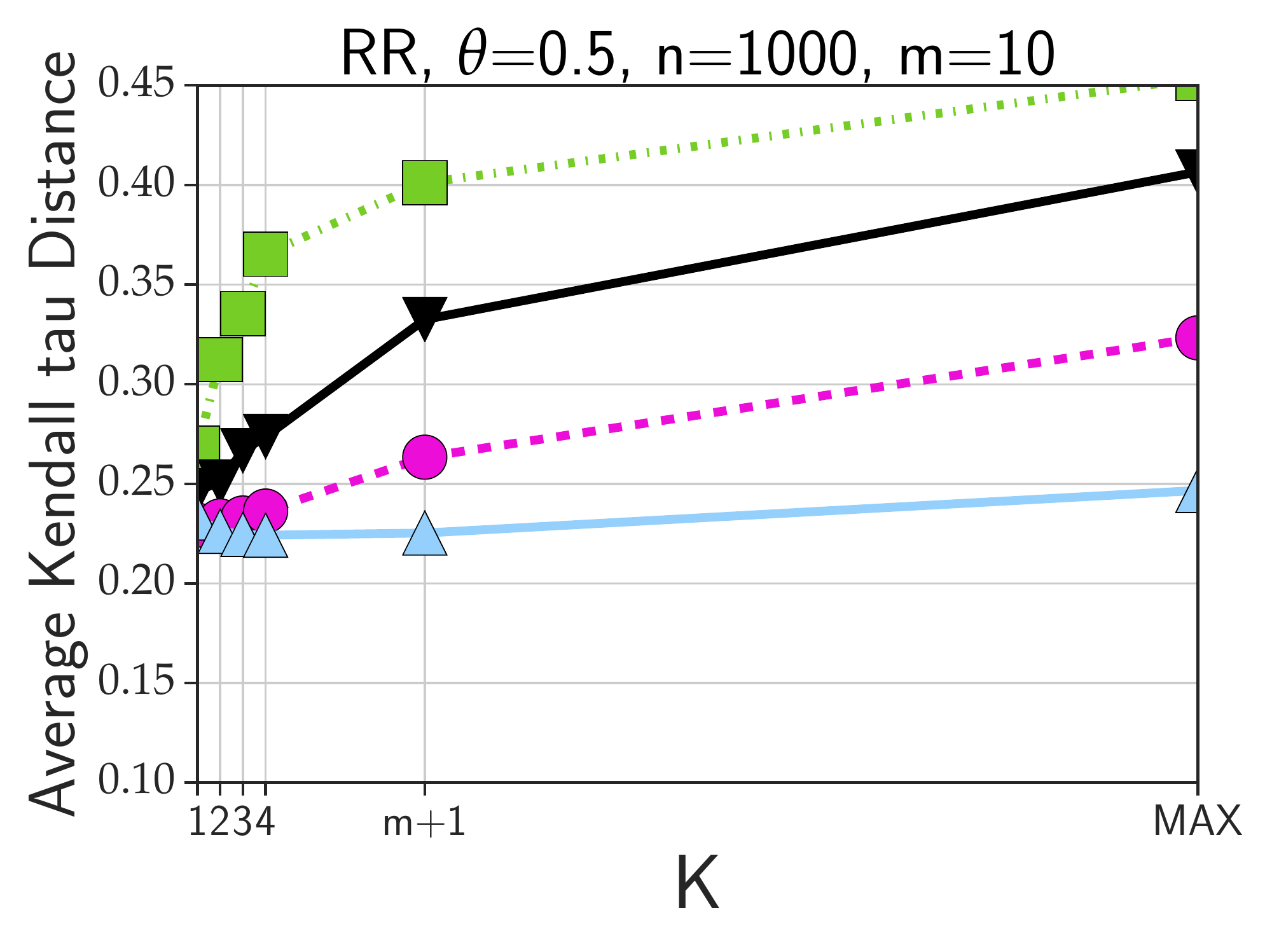}}\
\subfigure[]{\label{subfigure:fig:varying-k-rr:l}
    \includegraphics[width = 0.31\linewidth]{./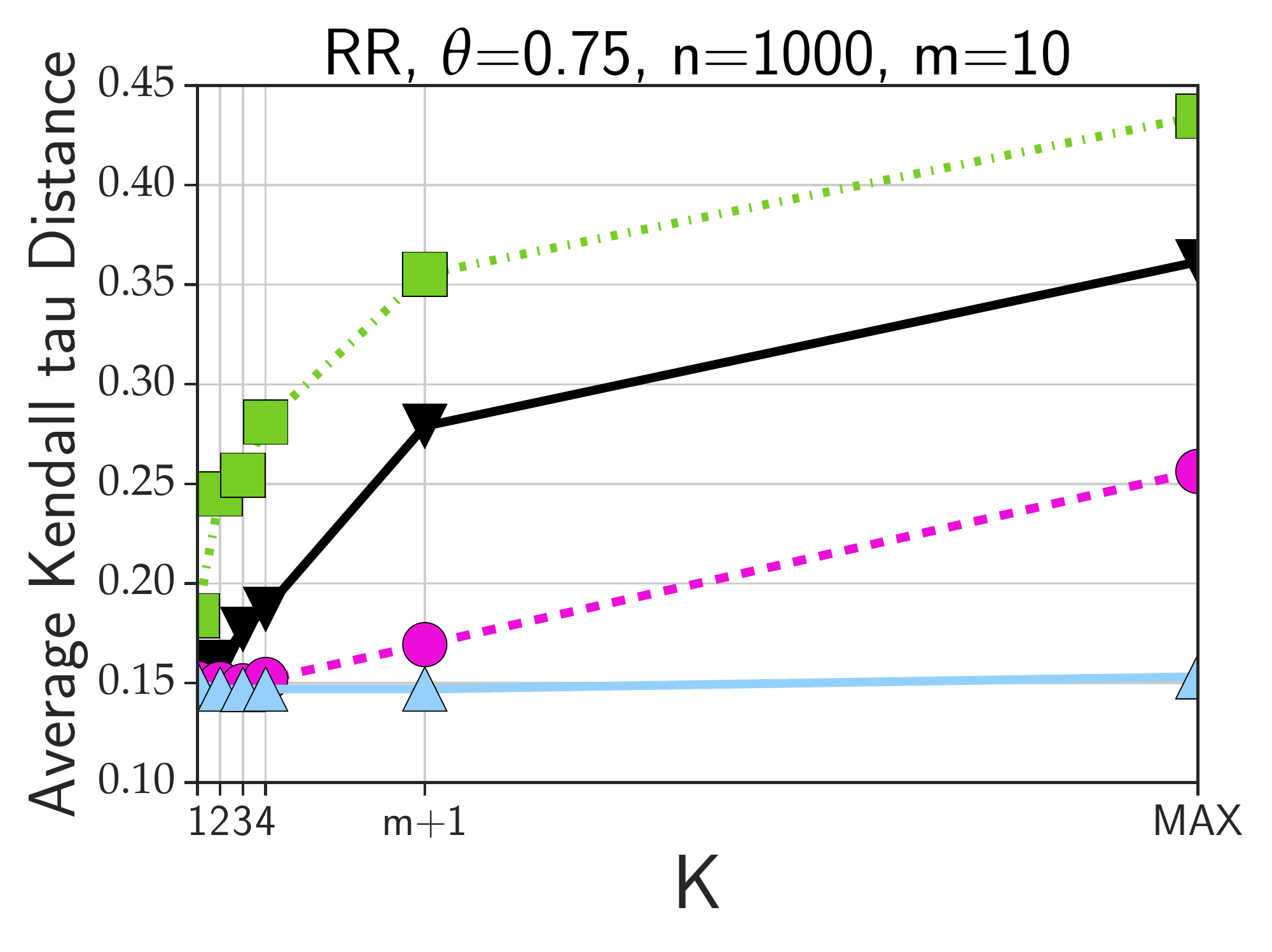}}\
\caption{Performance of \texttt{LDP-KwikSort:RR} in terms of error rate and average Kendall tau distance on the real-world and synthetic datasets,
across varying the number of queries $K$.}
\label{fig:varying-k-rr}
\end{figure*}

\begin{figure*}[p]
\centering
\subfigtopskip=2pt
\subfigbottomskip=1pt
\subfigcapskip=-10pt
\includegraphics[width = 0.25\linewidth]{./fig_legend1.pdf}\\
\vspace{0.2cm}
\subfigure[]{\label{subfigure:fig:varying-k-lap:a}
    \includegraphics[width = 0.31\linewidth]{./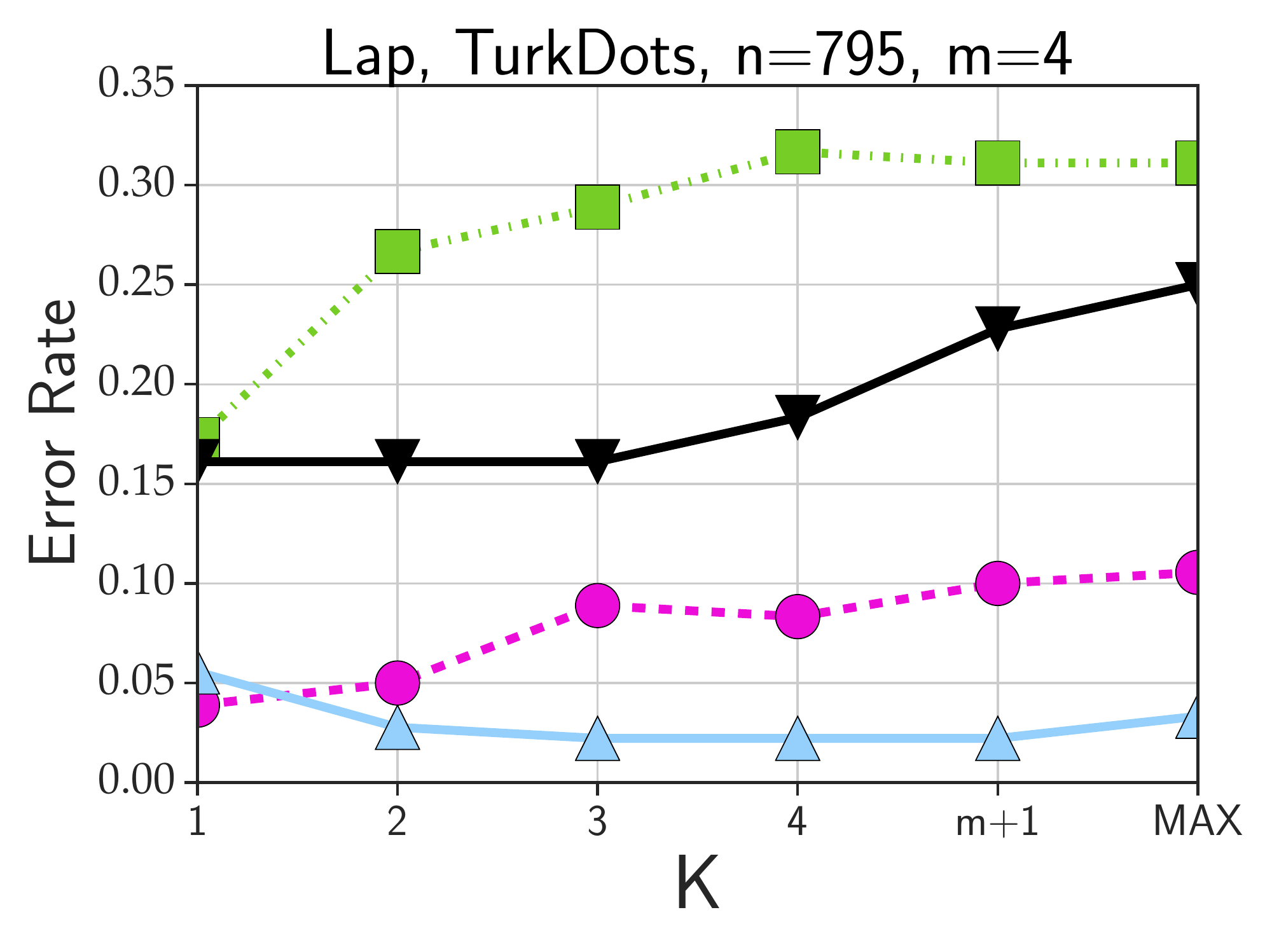}}\
\subfigure[]{\label{subfigure:fig:varying-k-lap:b}
    \includegraphics[width = 0.31\linewidth]{./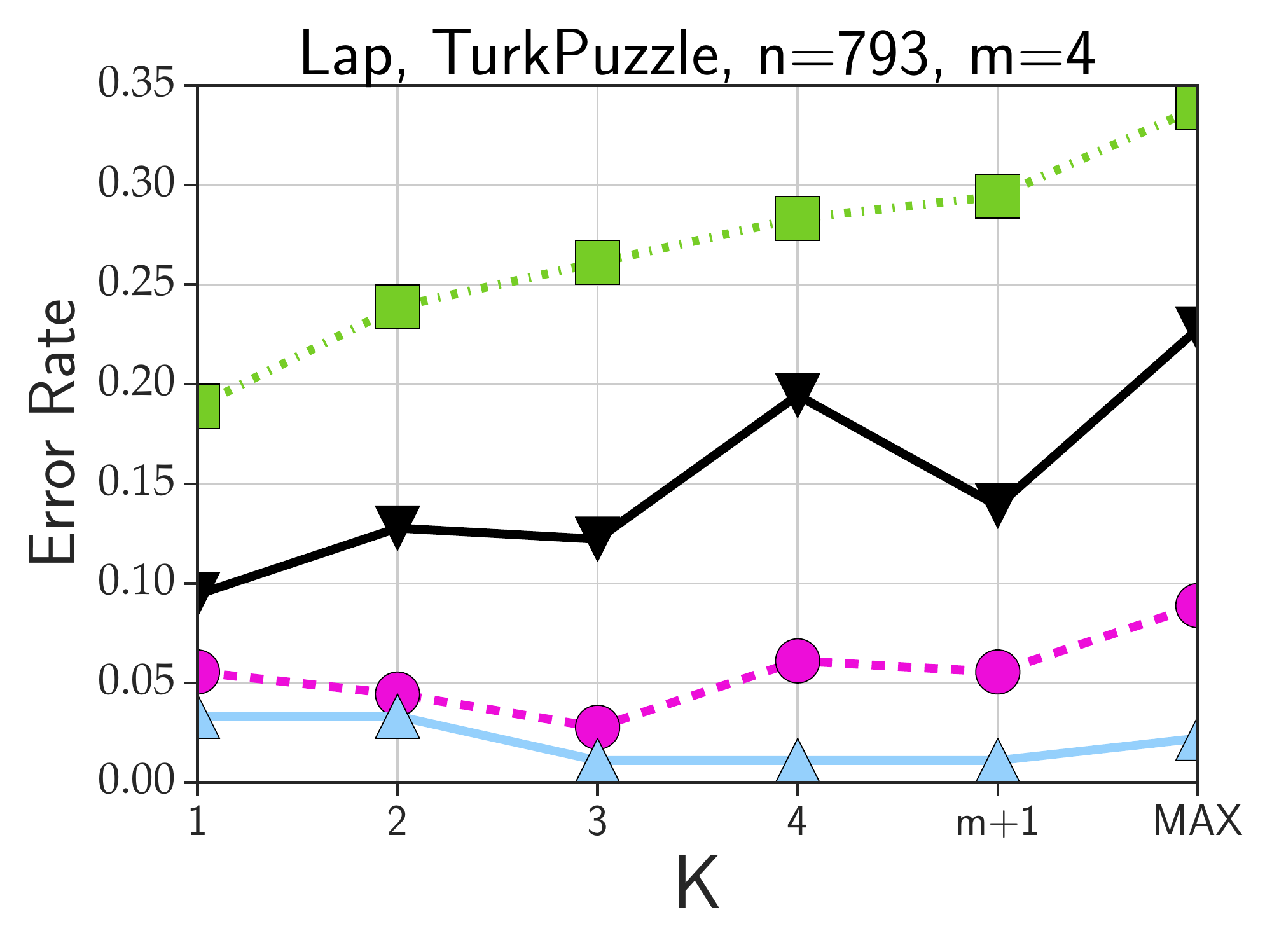}}\
\subfigure[]{\label{subfigure:fig:varying-k-lap:c}
    \includegraphics[width = 0.31\linewidth]{./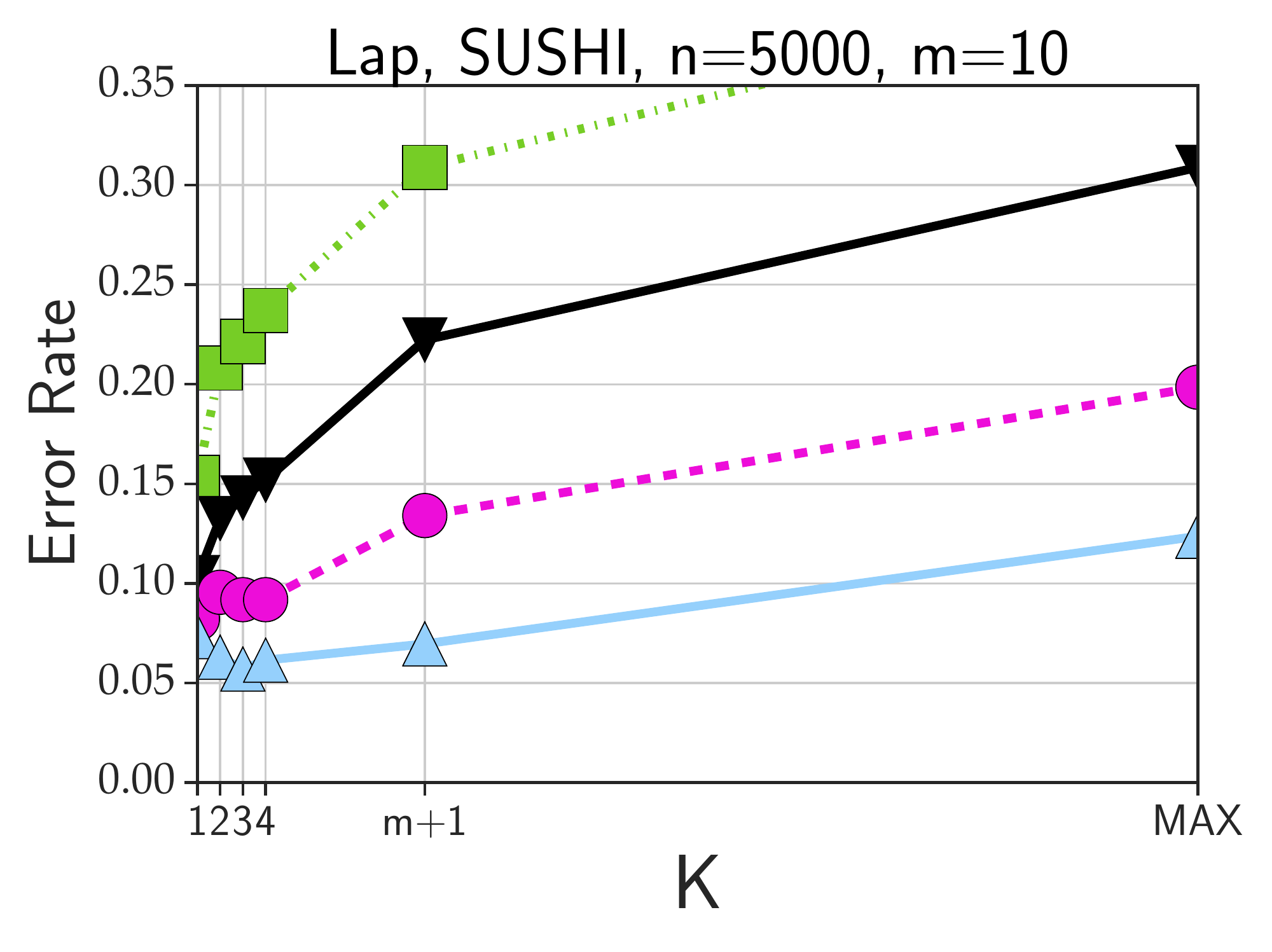}}\
\subfigure[]{\label{subfigure:fig:varying-k-lap:d}
    \includegraphics[width = 0.31\linewidth]{./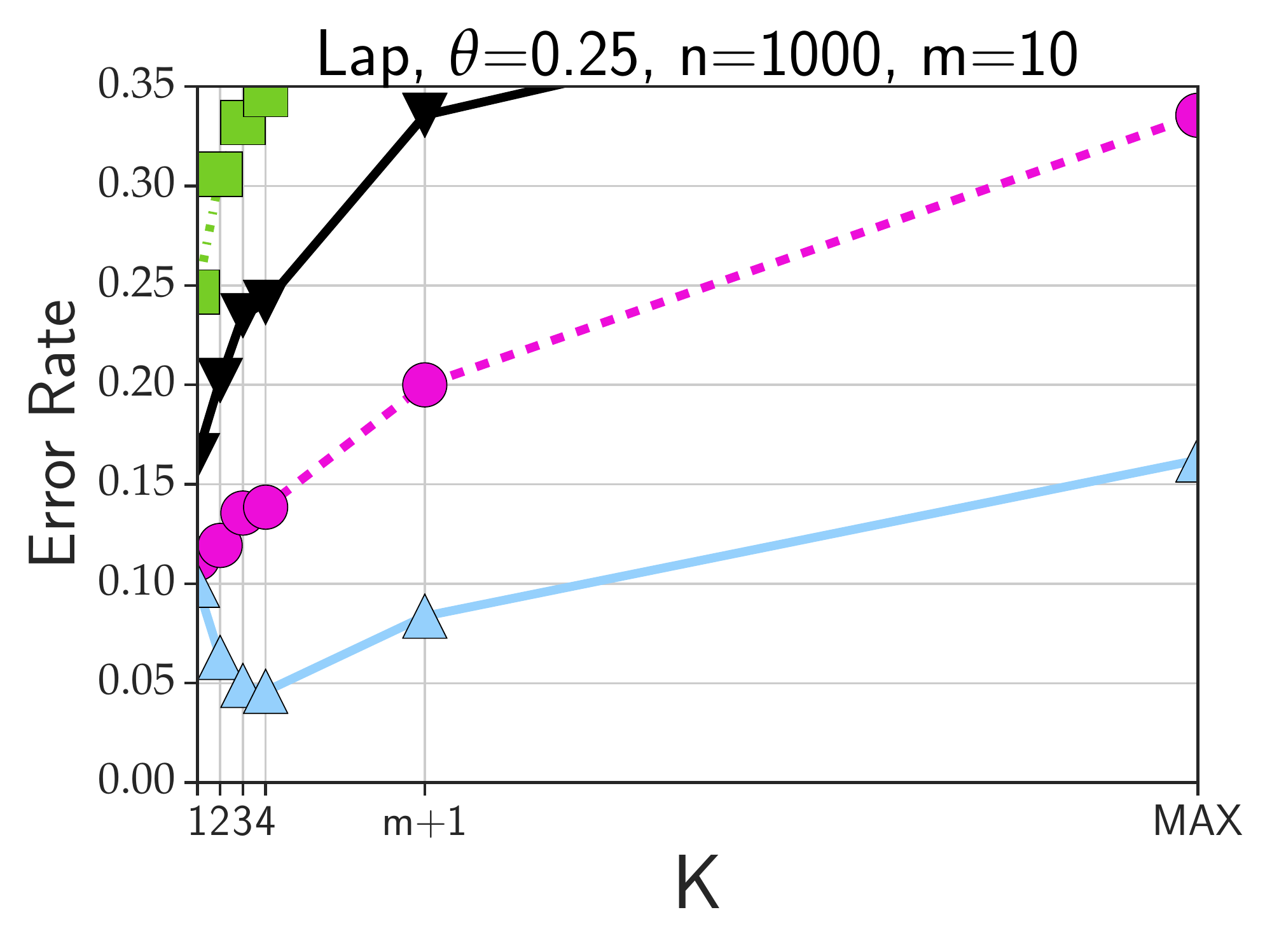}}\
\subfigure[]{\label{subfigure:fig:varying-k-lap:e}
    \includegraphics[width = 0.31\linewidth]{./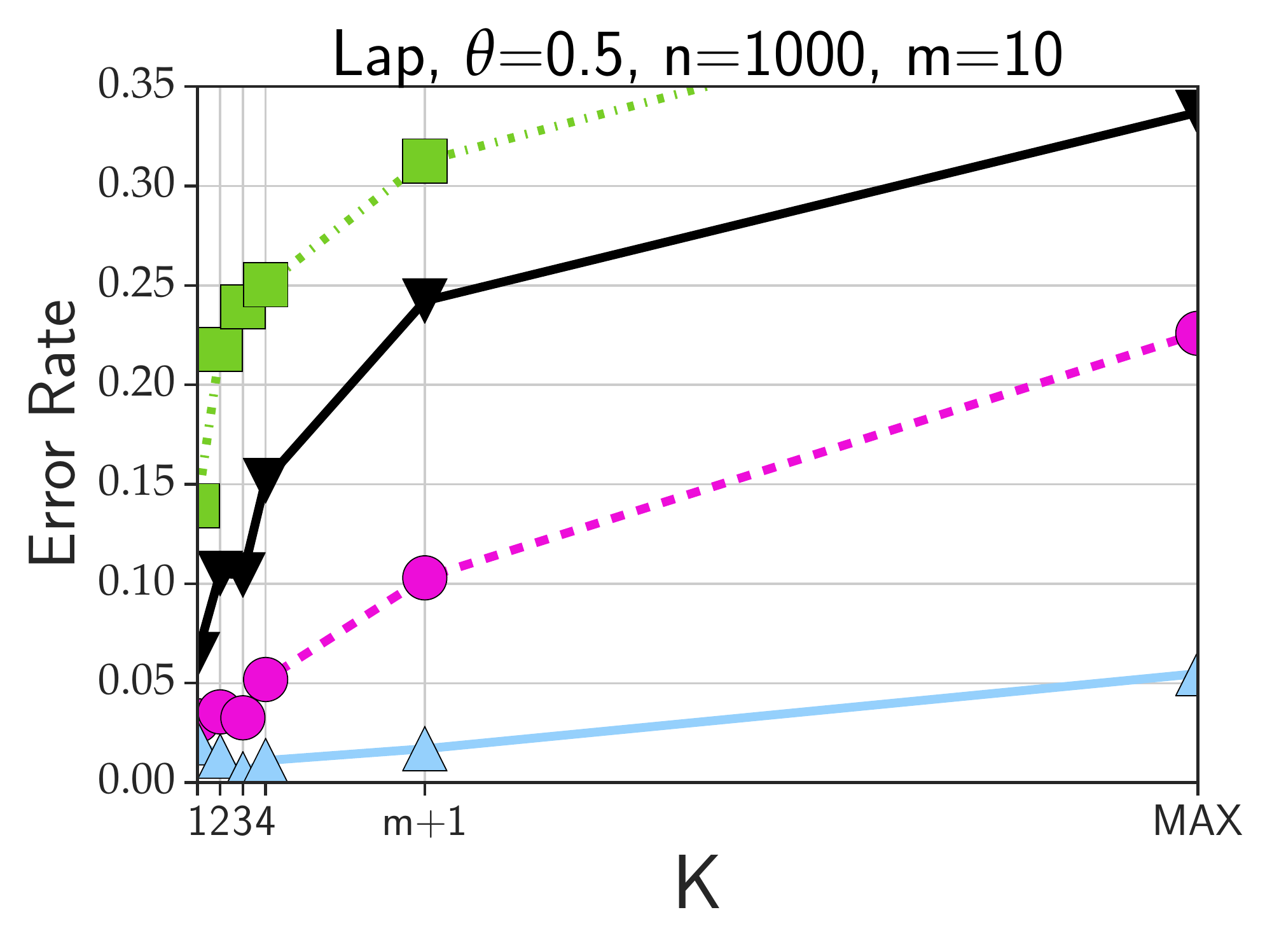}}\
\subfigure[]{\label{subfigure:fig:varying-k-lap:f}
    \includegraphics[width = 0.31\linewidth]{./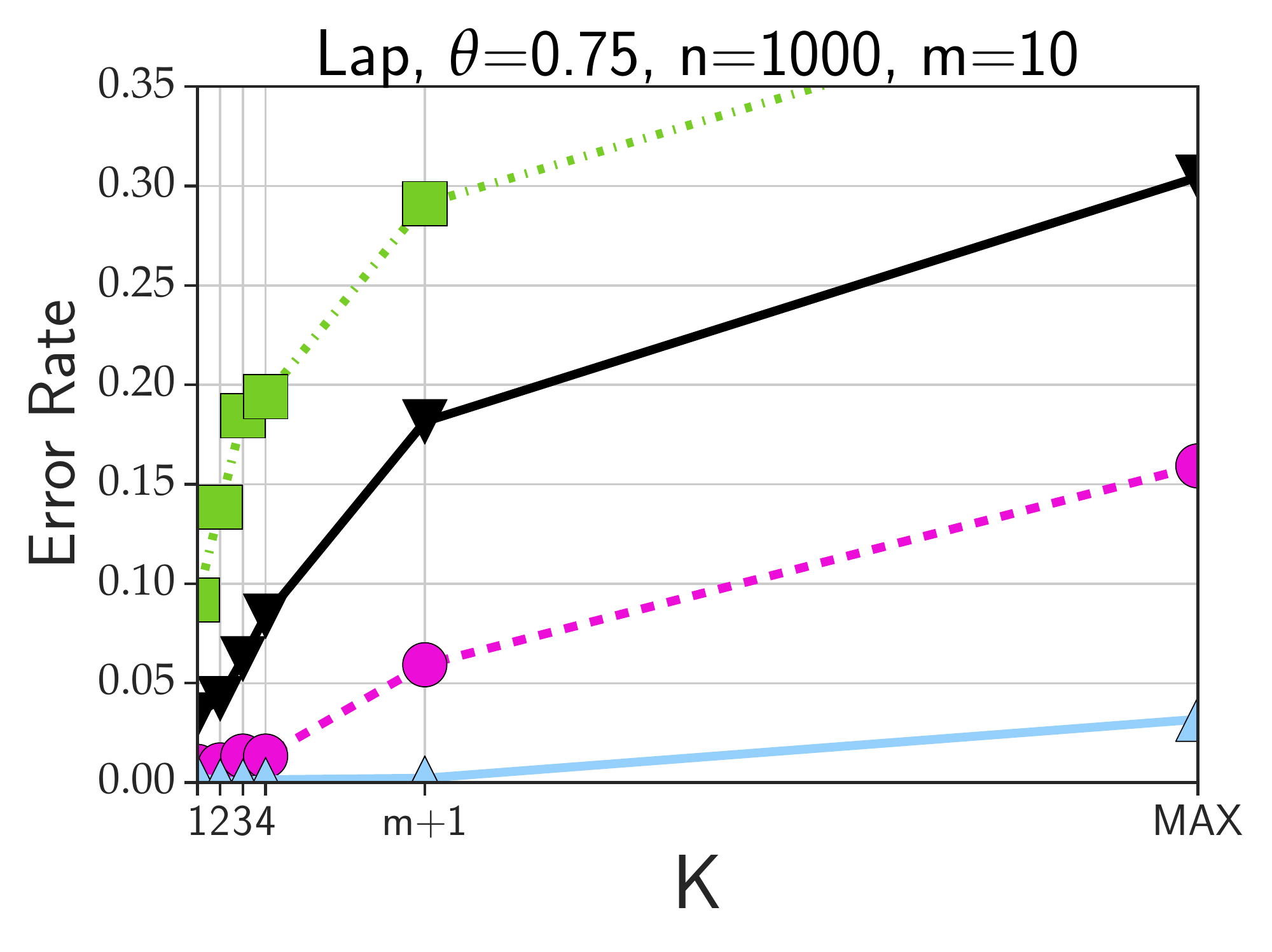}}\
\subfigure[]{\label{subfigure:fig:varying-k-lap:g}
    \includegraphics[width = 0.31\linewidth]{./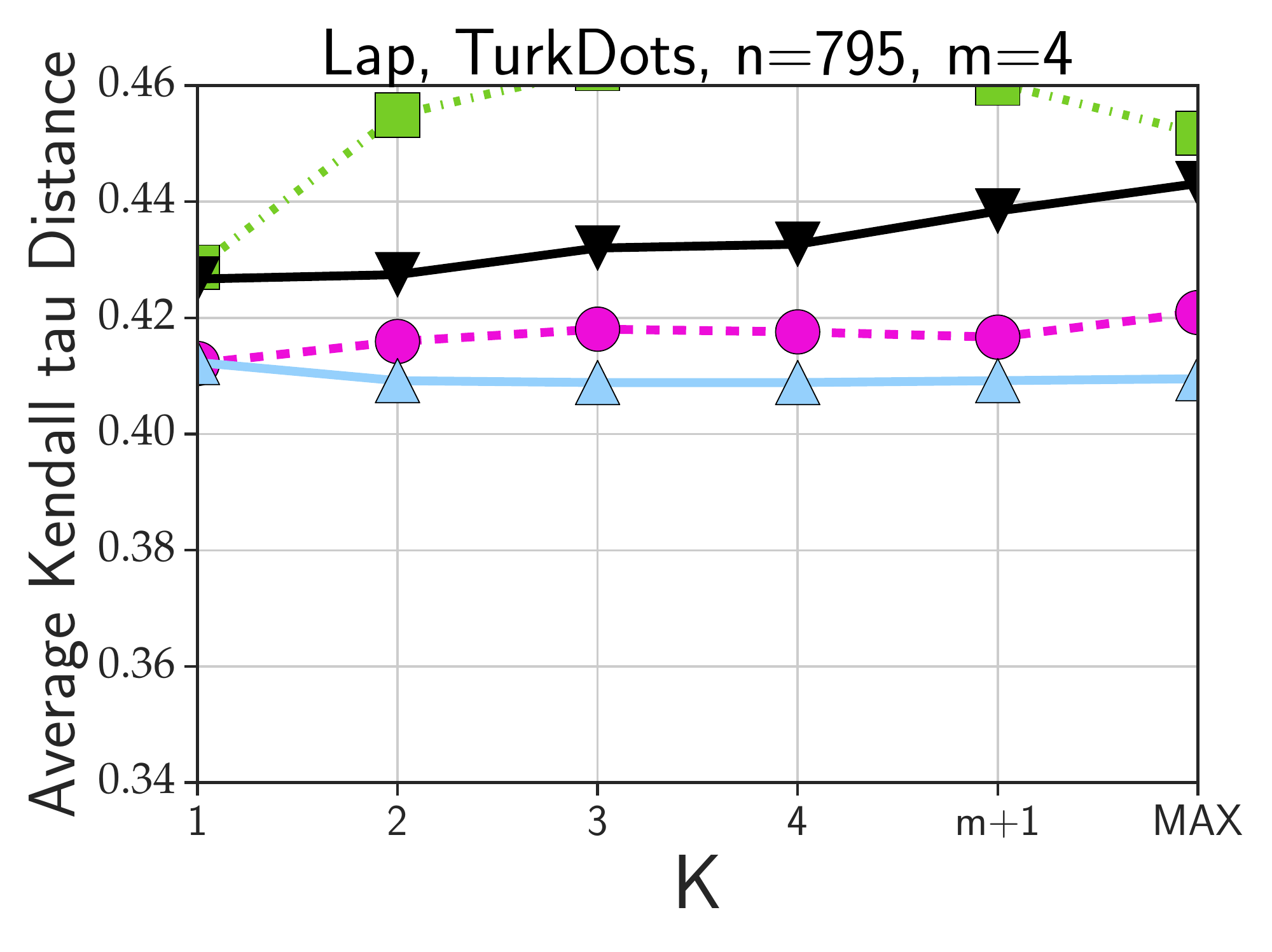}}\
\subfigure[]{\label{subfigure:fig:varying-k-lap:h}
    \includegraphics[width = 0.31\linewidth]{./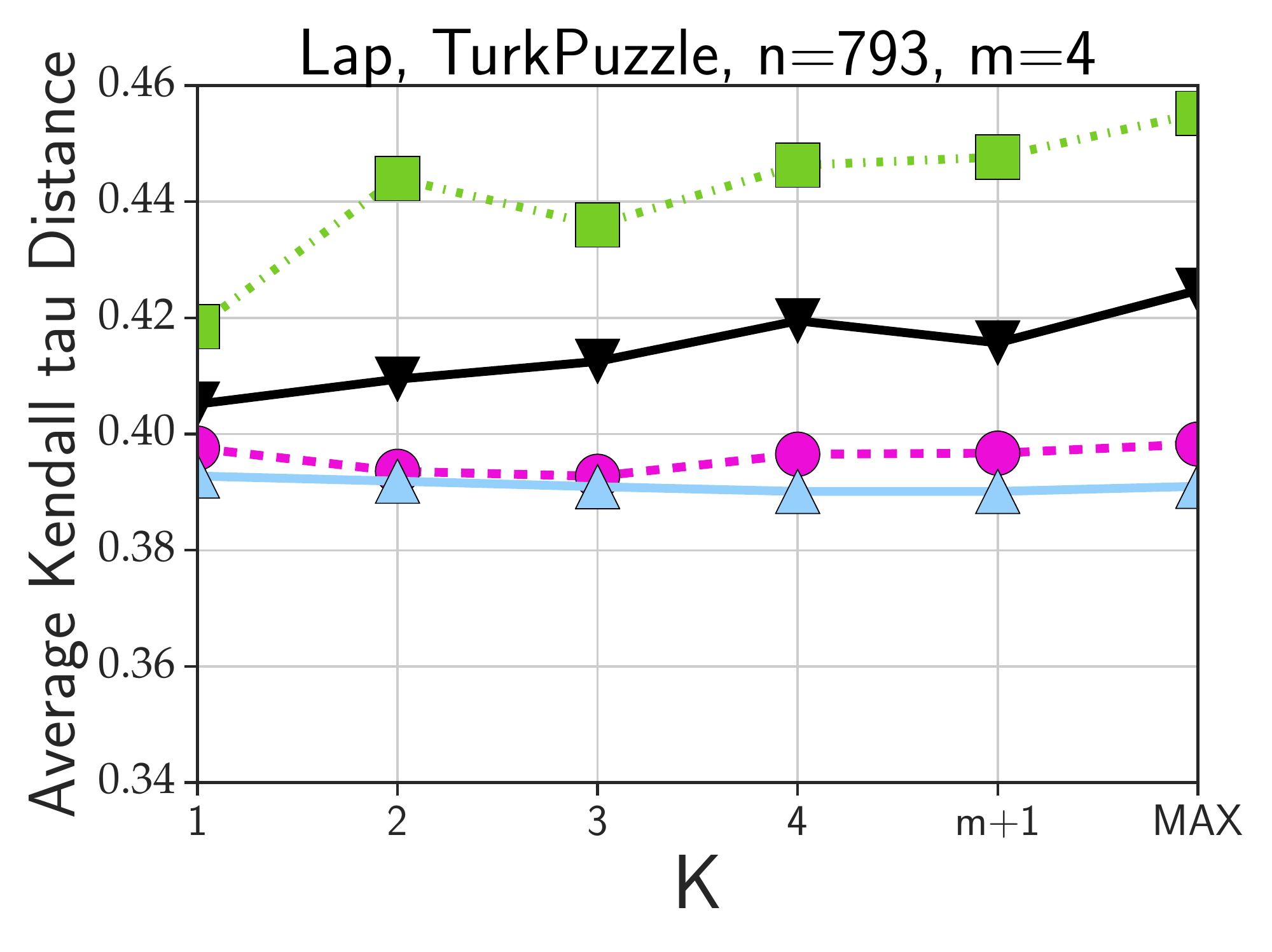}}\
\subfigure[]{\label{subfigure:fig:varying-k-lap:i}
    \includegraphics[width = 0.31\linewidth]{./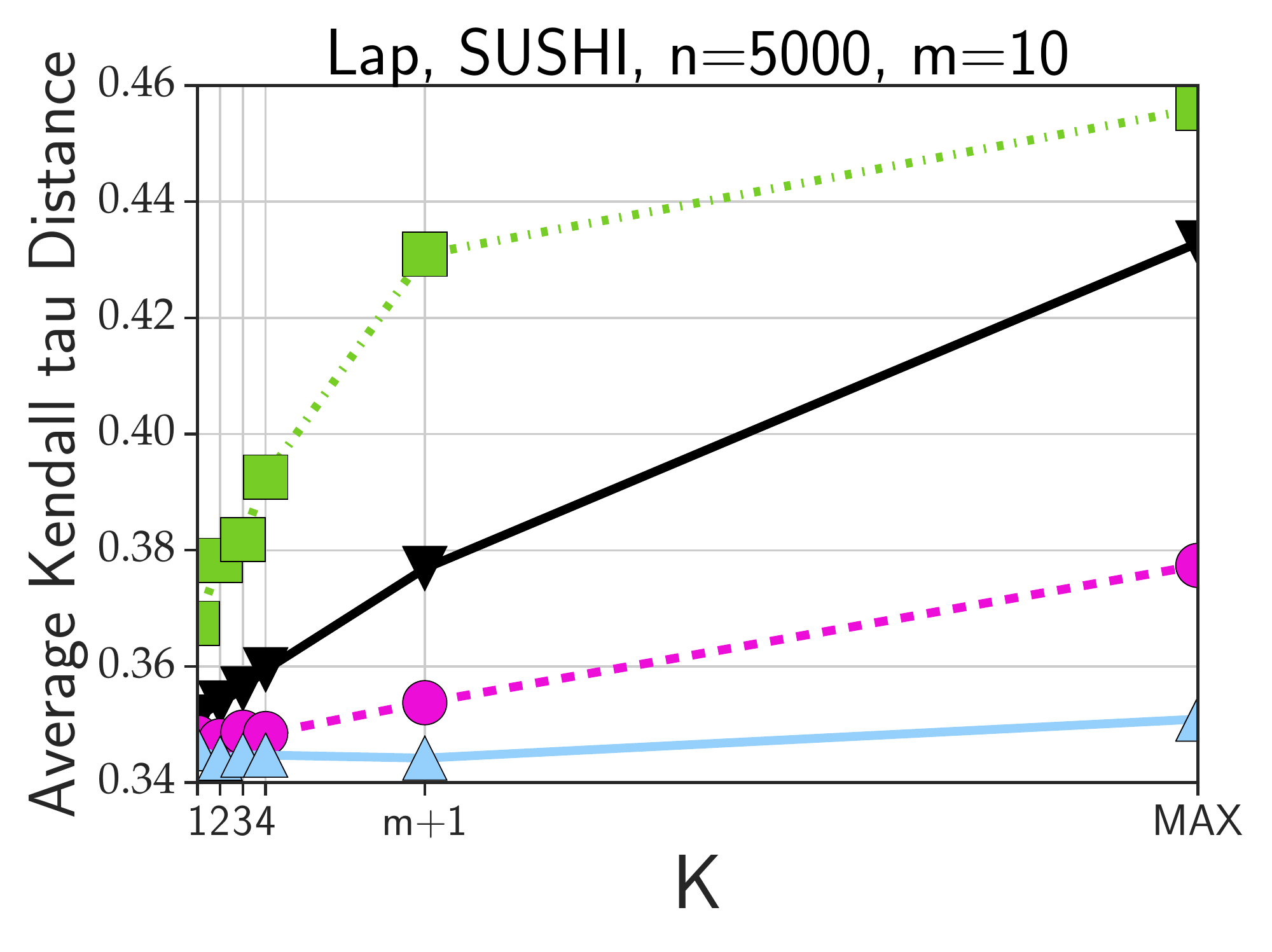}}\
\subfigure[]{\label{subfigure:fig:varying-k-lap:j}
    \includegraphics[width = 0.31\linewidth]{./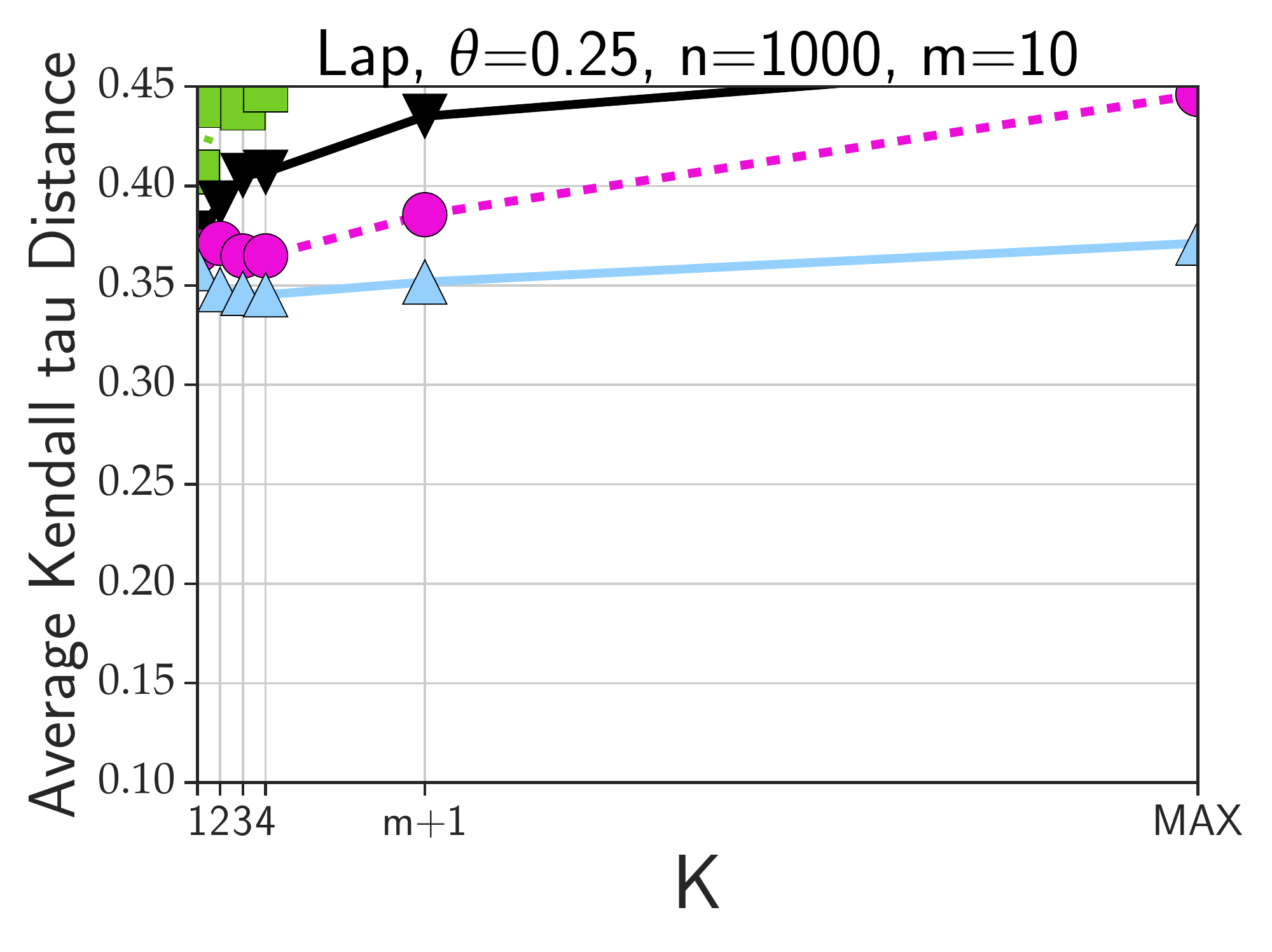}}\
\subfigure[]{\label{subfigure:fig:varying-k-lap:k}
    \includegraphics[width = 0.31\linewidth]{./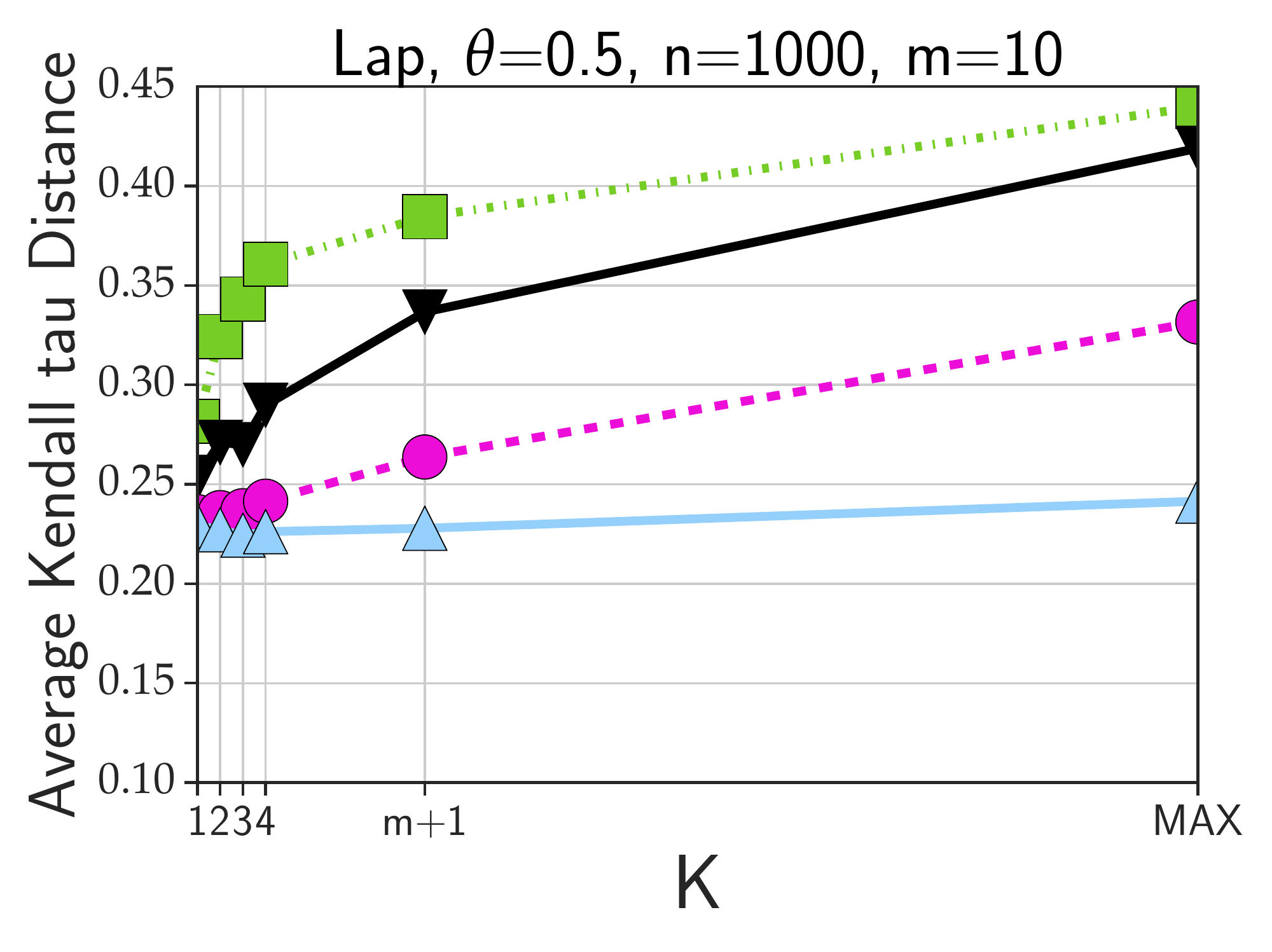}}\
\subfigure[]{\label{subfigure:fig:varying-k-lap:l}
    \includegraphics[width = 0.31\linewidth]{./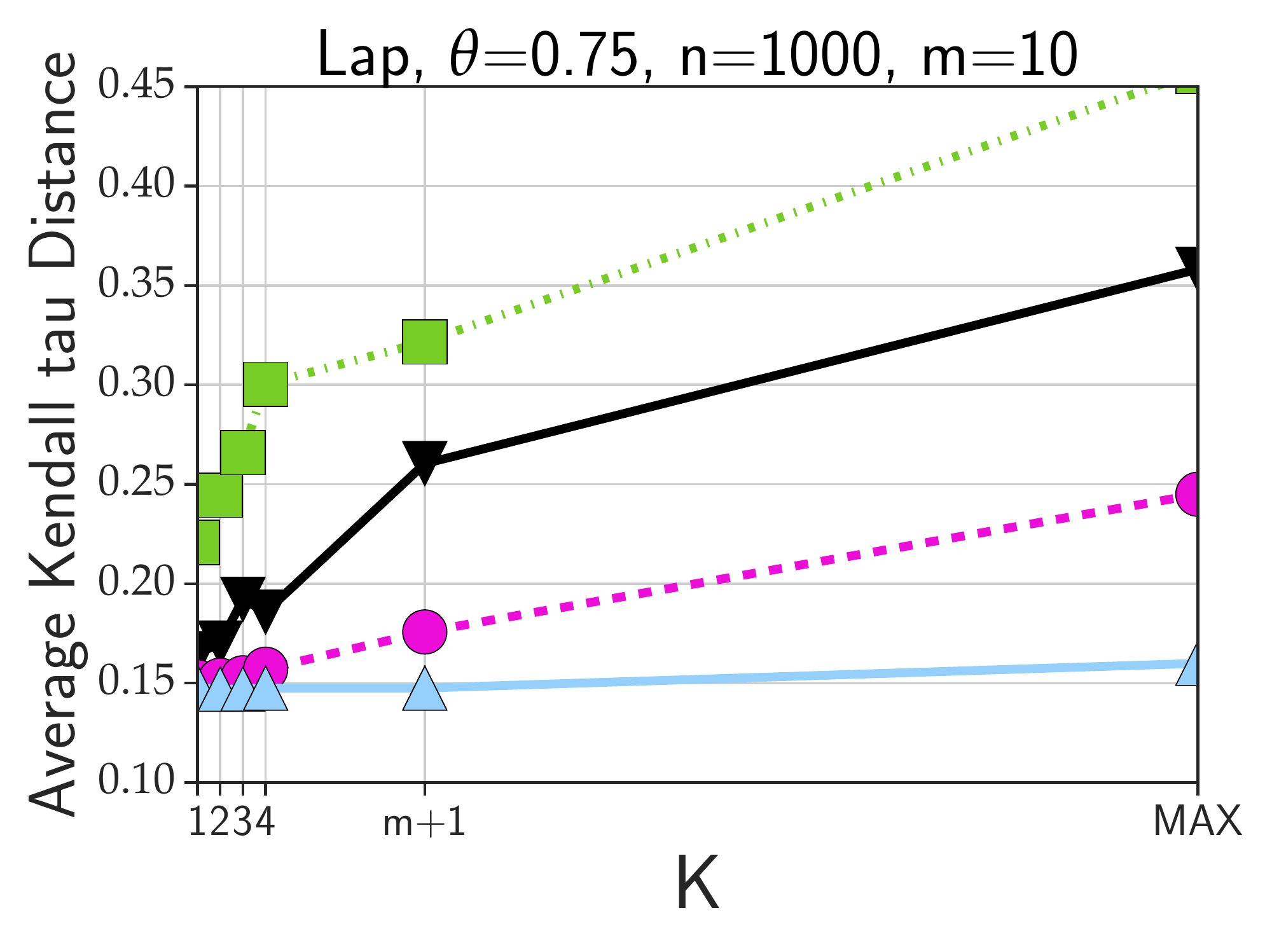}}\
\caption{Performance of \texttt{LDP-KwikSort:Lap} in terms of error rate and average Kendall tau distance on the real-world and synthetic datasets,
across varying the number of queries $K$.}
\label{fig:varying-k-lap}
\end{figure*}

\subsection{The Impact of Privacy Budget}
\label{sec-ldp-ra-results-of-epsilon}

Next,
we consider varying the privacy budget of each agent
in the range $\{0.1,...,5.0\}$,
and observe how this parameter impacts on
the performance of \texttt{LDP-KwikSort:RR}
and \texttt{LDP-KwikSort:Lap}
under the utility metrics such as the error rate
and the average Kendall tau distance.
For the \texttt{DP-KwikSort} algorithm,
as it is based on the local model of DP
and the consumption of the privacy budget
is originate from the curator,
we also vary $\epsilon \in \{0.1,...,5.0\}$.
In this experiment,
we ran these two solutions of \texttt{LDP-KwikSort}
and its competitors
on three real-world datasets and three synthetic datasets.
For the \textsf{TurkDots},
\textsf{TurkPuzzle},
and \textsf{SUSHI} datasets,
the numbers of agents and alternatives are fixed by default.
For the synthetic datasets,
we set their dispersion parameter $\theta=0.25, 0.5, 0.75$,
as well as $n=5000$ and $m=15$.
The results are shown in~\Cref{fig:avgkt-varying-epsilon}.

Firstly,
the non-private \texttt{KwikSort}
determines the overall lower error bound
of average Kendall tau distance,
and \texttt{DP-KwikSort} can approach it with a small privacy budget,
say $\epsilon=0.1$.
Secondly,
with increasing the privacy budget,
both of \texttt{LDP-KwikSort:RR} and \texttt{LDP-KwikSort:Lap}
get lower errors,
and the former outperforms the latter.
Besides,
the performance changing trends of solutions
under two metrics are consistent,
which is due to the error rate reflecting
the accuracy of intermediate results
while the average Kendall tau distance reflecting
the utility of the aggregate ranking.

\begin{figure*}[p]
\centering
\subfigtopskip=2pt
\subfigbottomskip=1pt
\subfigcapskip=-10pt
\includegraphics[width = 0.36\linewidth]{./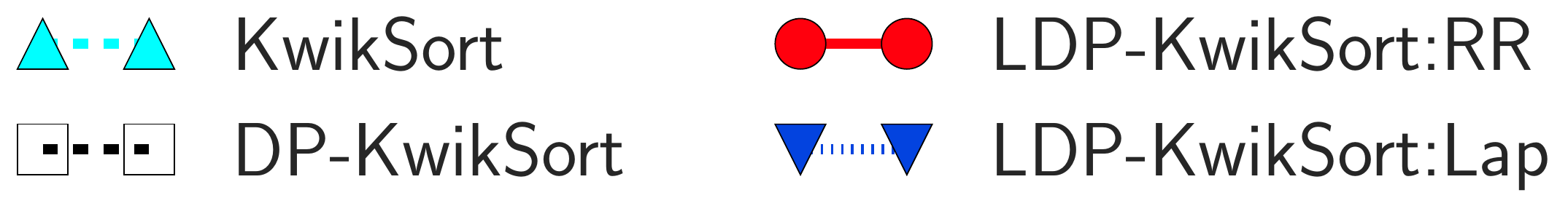}\\
\vspace{0.2cm}
\subfigure[]{\label{subfigure:fig:avgkt-varying-epsilon:a}
    \includegraphics[width = 0.31\linewidth]{./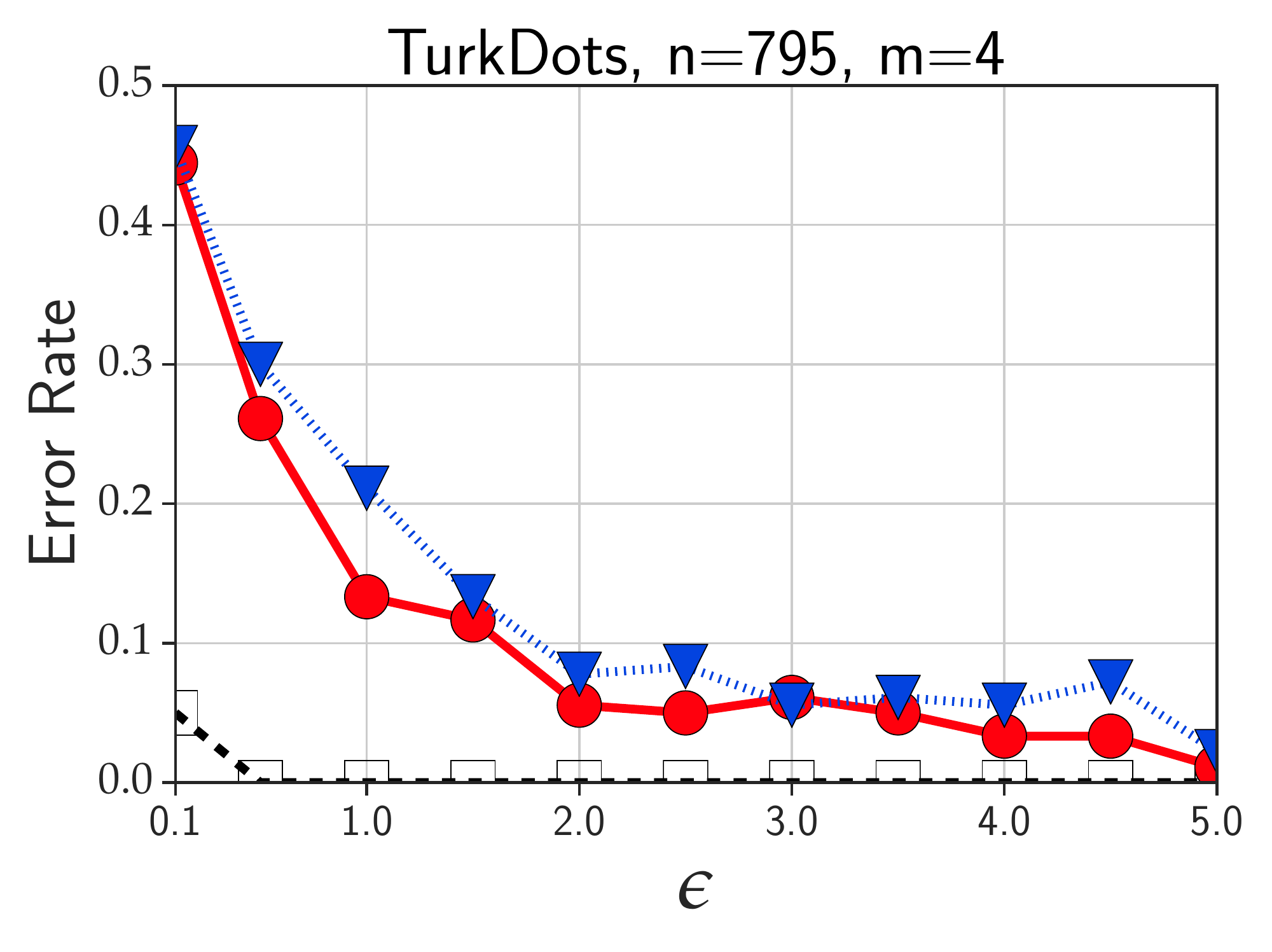}}\
\subfigure[]{\label{subfigure:fig:avgkt-varying-epsilon:b}
    \includegraphics[width = 0.31\linewidth]{./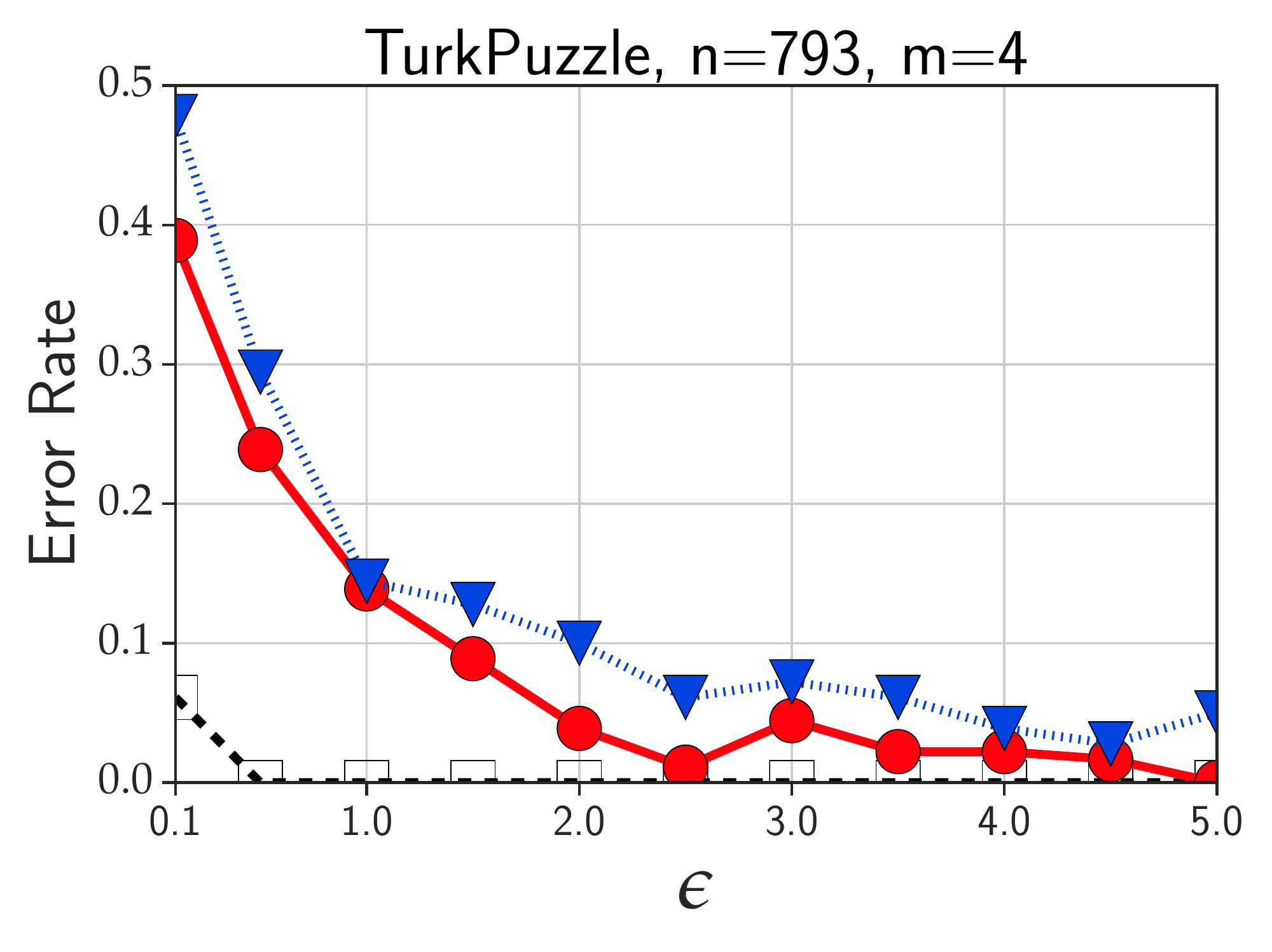}}\
\subfigure[]{\label{subfigure:fig:avgkt-varying-epsilon:c}
    \includegraphics[width = 0.31\linewidth]{./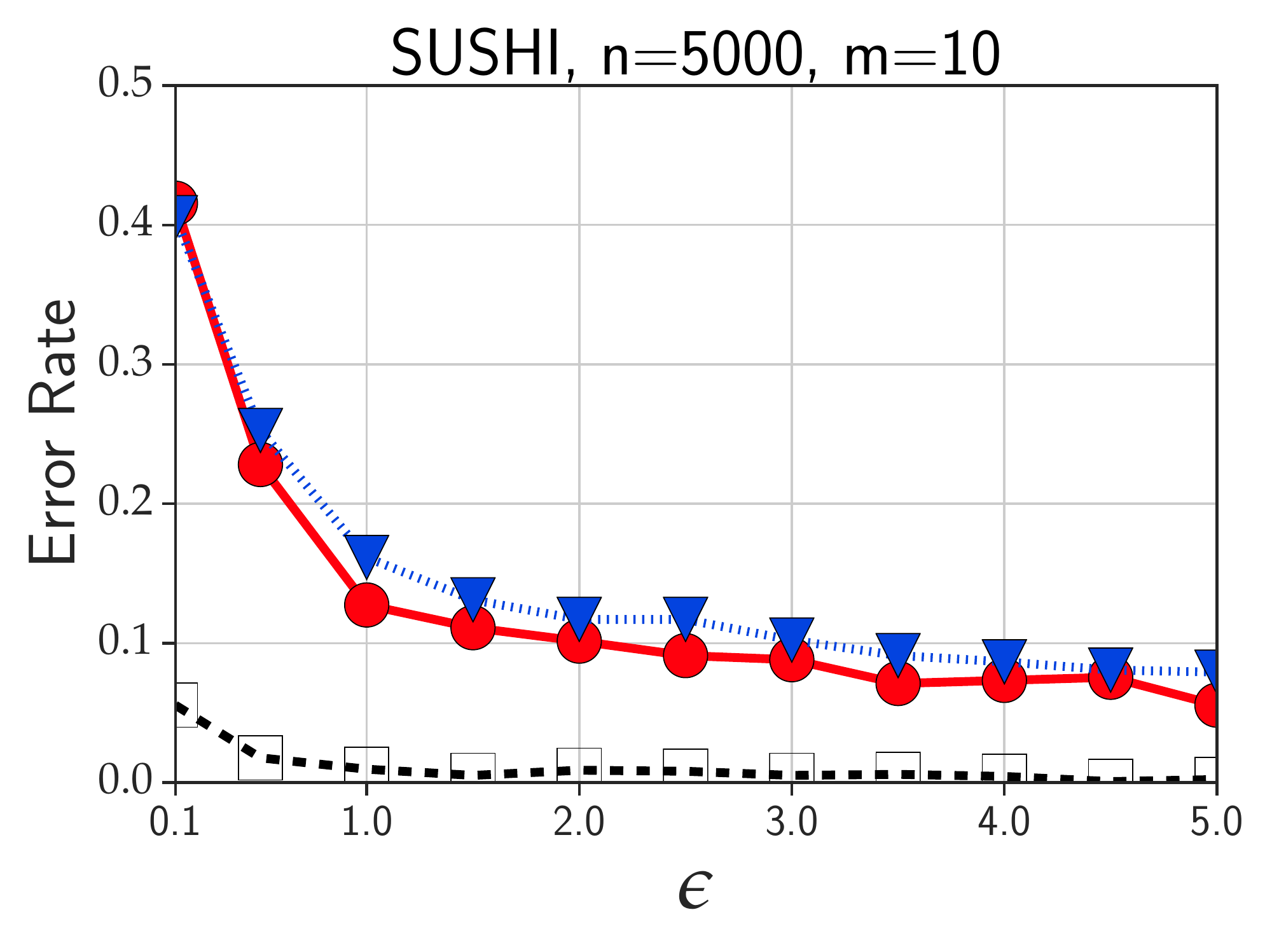}}\
\subfigure[]{\label{subfigure:fig:avgkt-varying-epsilon:d}
    \includegraphics[width = 0.31\linewidth]{./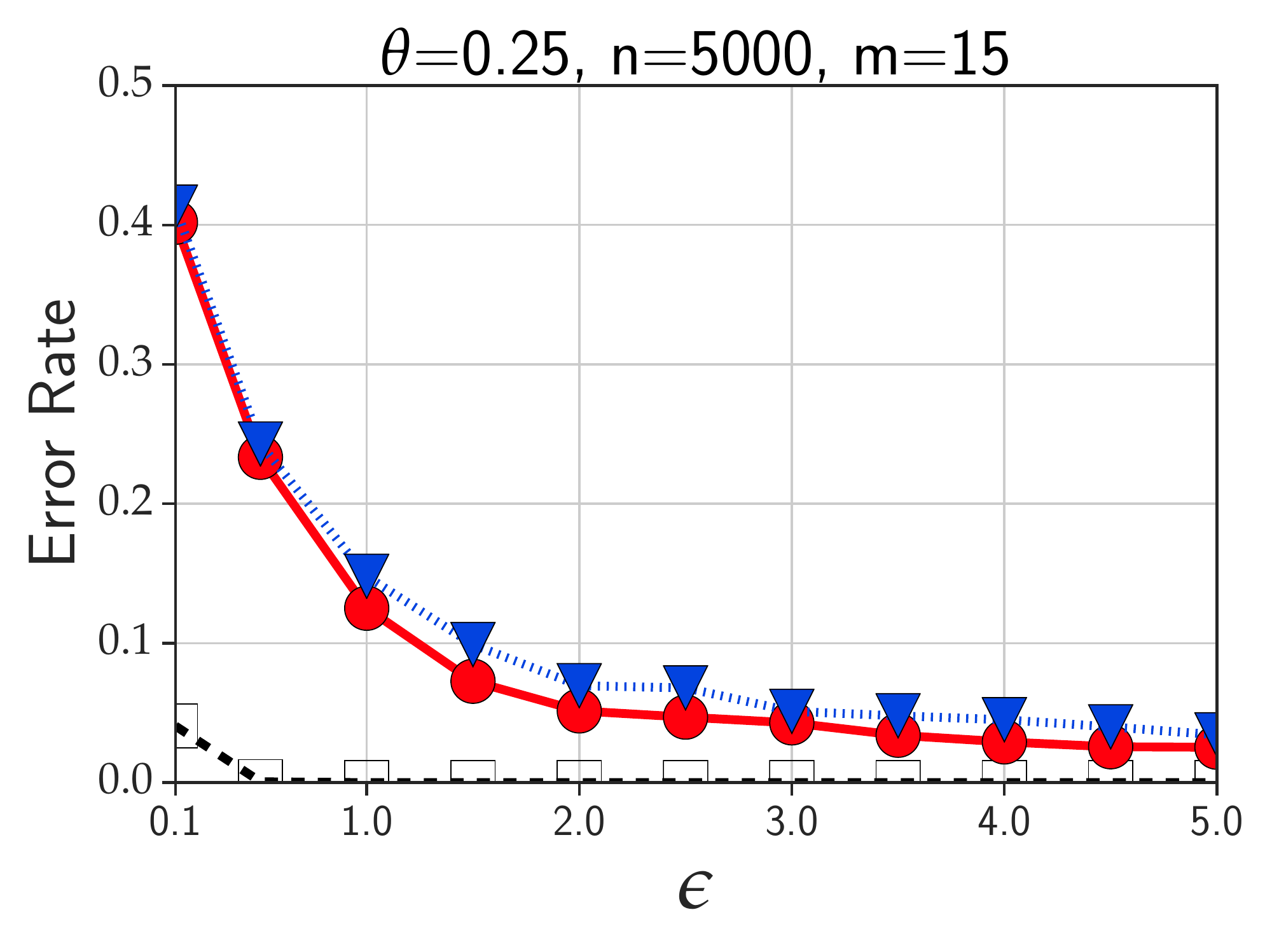}}\
\subfigure[]{\label{subfigure:fig:avgkt-varying-epsilon:e}
    \includegraphics[width = 0.31\linewidth]{./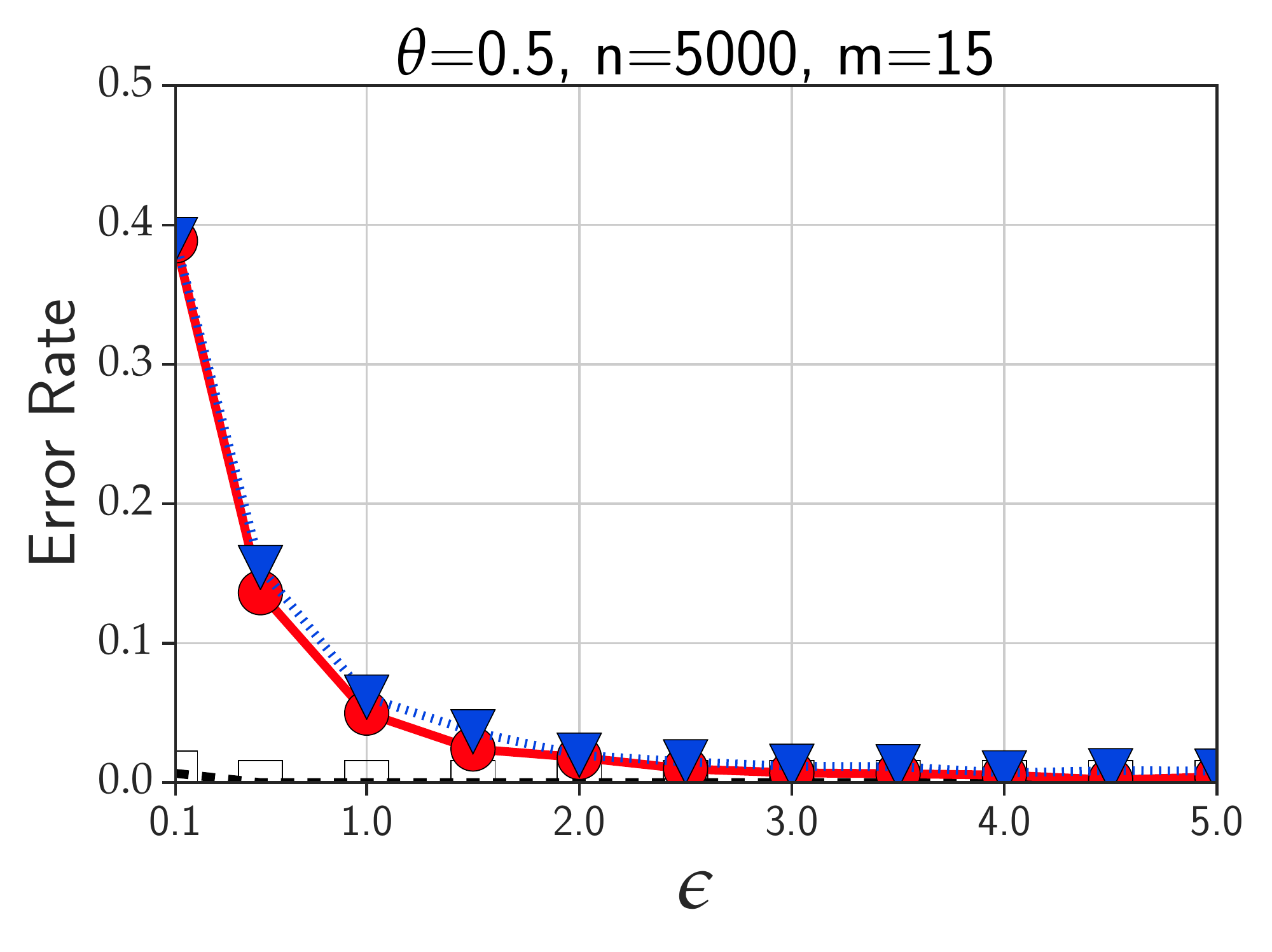}}\
\subfigure[]{\label{subfigure:fig:avgkt-varying-epsilon:f}
    \includegraphics[width = 0.31\linewidth]{./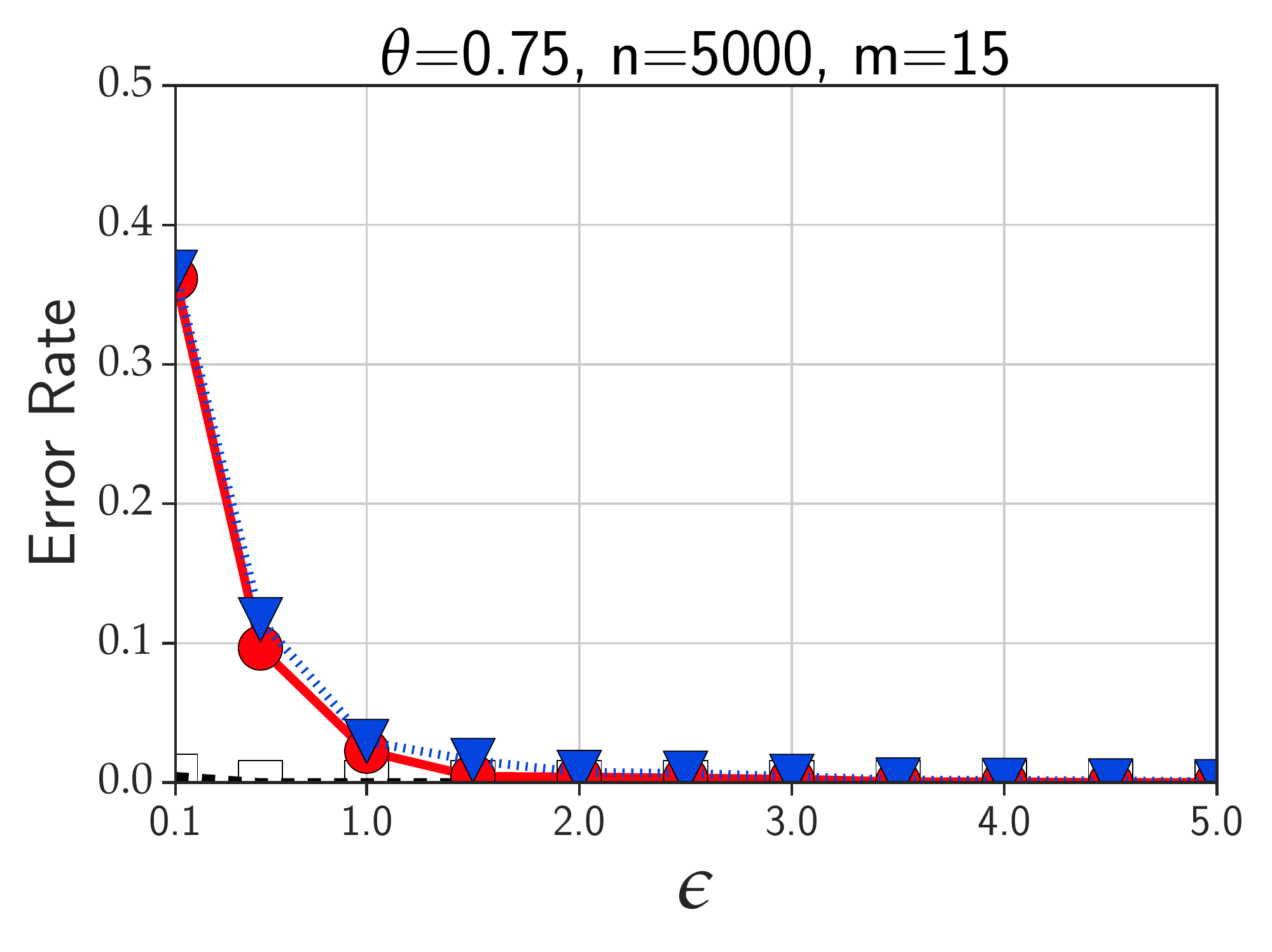}}\
\subfigure[]{\label{subfigure:fig:avgkt-varying-epsilon:g}
    \includegraphics[width = 0.31\linewidth]{./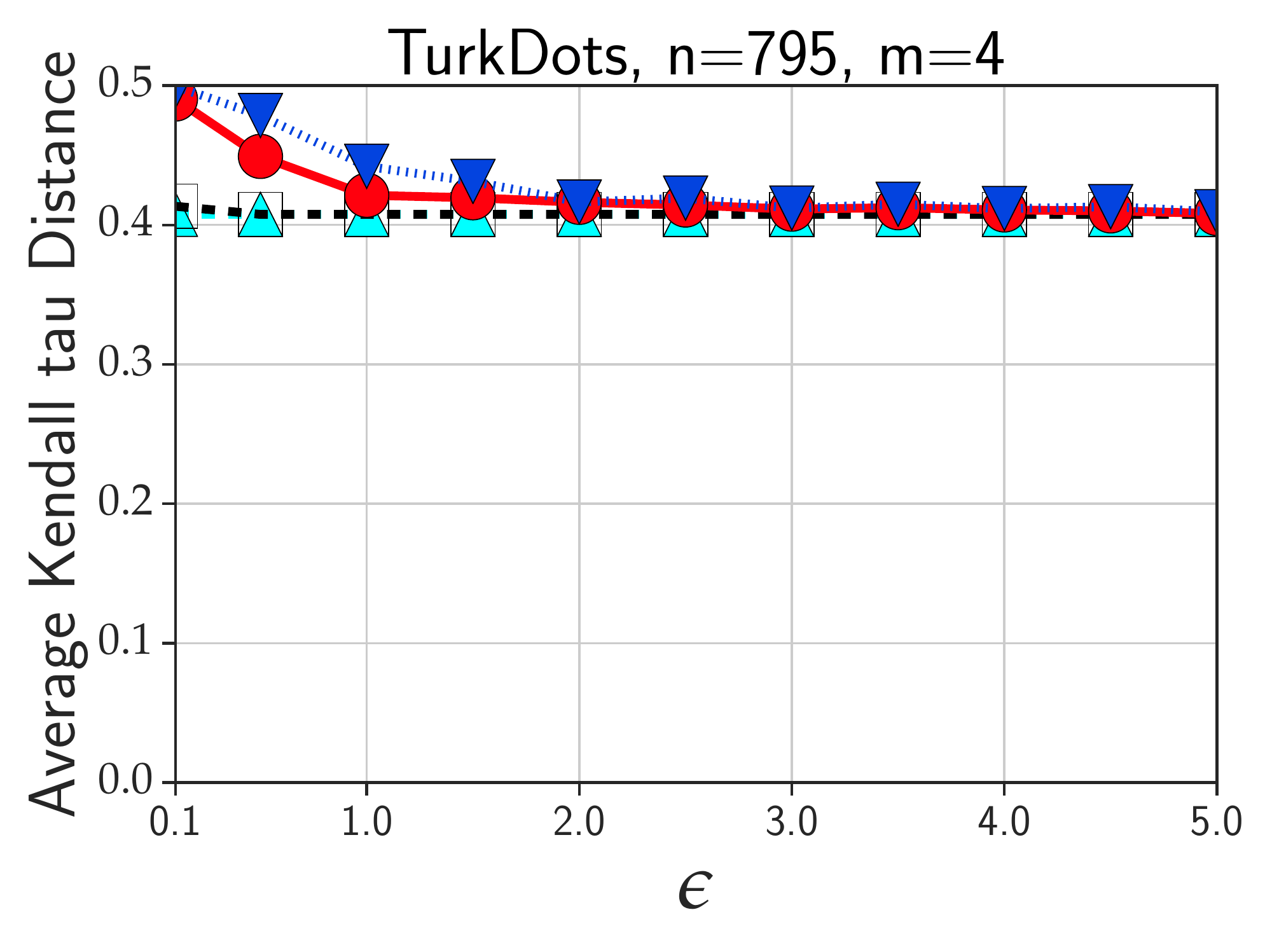}}\
\subfigure[]{\label{subfigure:fig:avgkt-varying-epsilon:h}
    \includegraphics[width = 0.31\linewidth]{./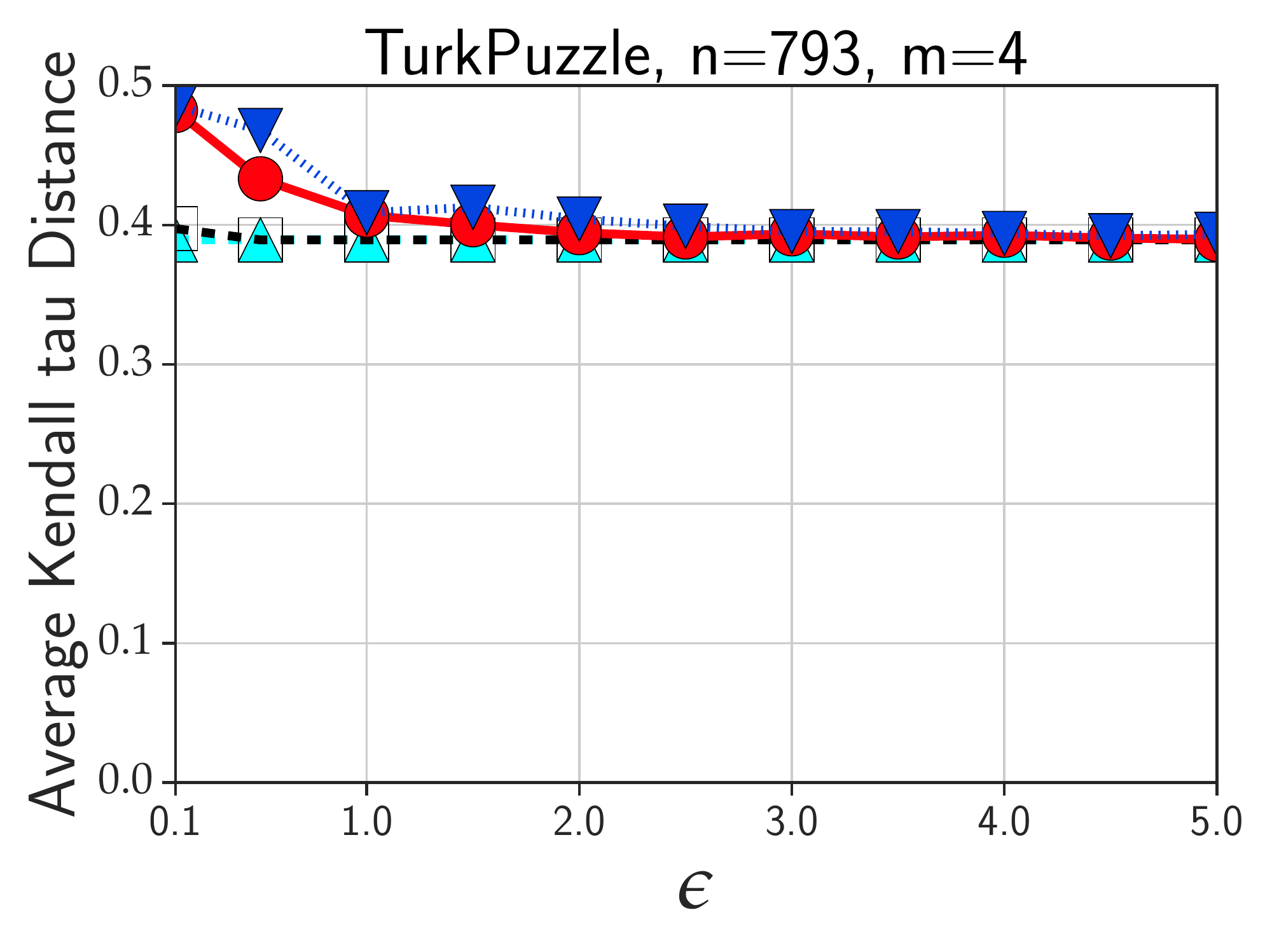}}\
\subfigure[]{\label{subfigure:fig:avgkt-varying-epsilon:i}
    \includegraphics[width = 0.31\linewidth]{./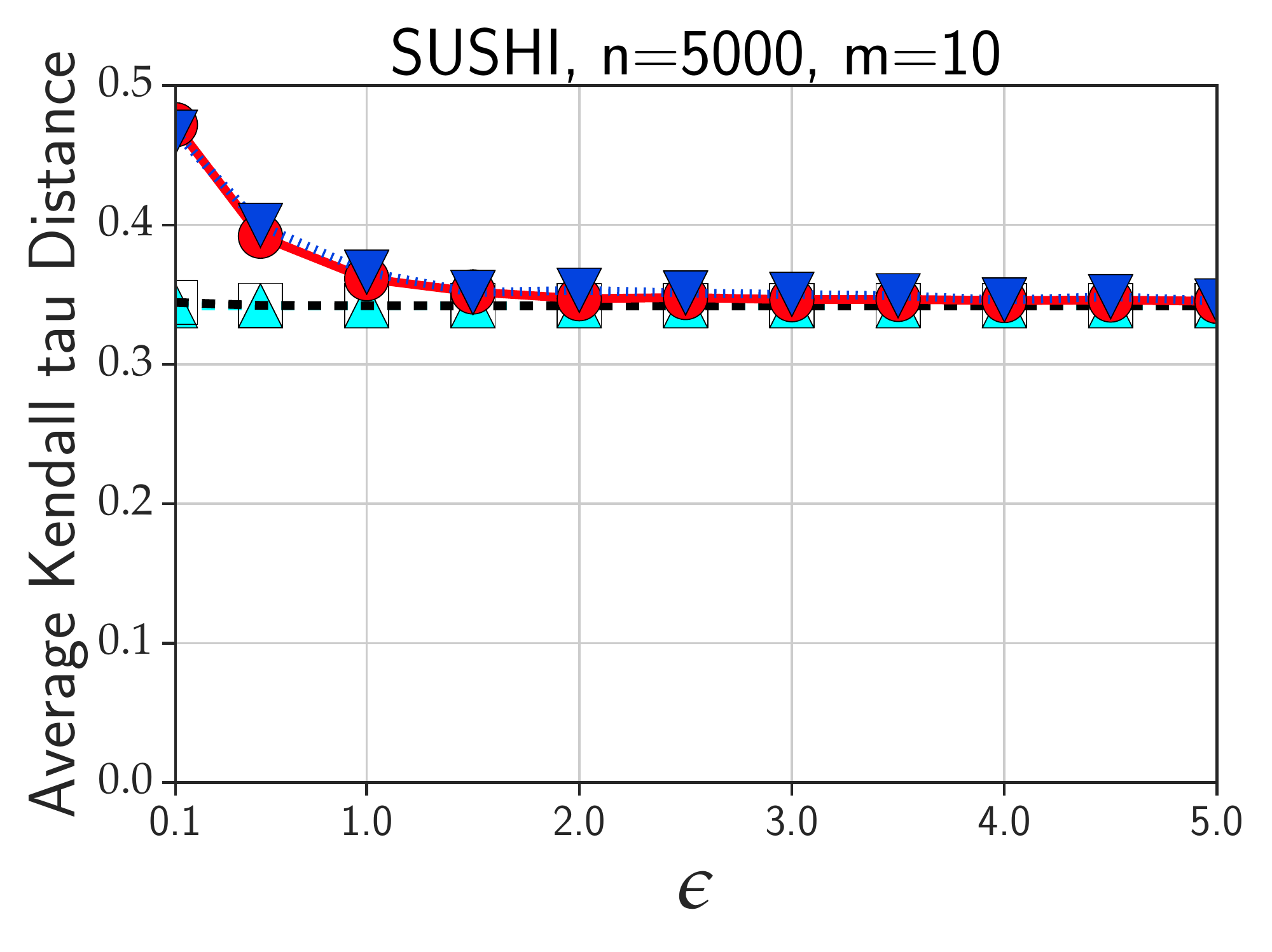}}\
\subfigure[]{\label{subfigure:fig:avgkt-varying-epsilon:j}
    \includegraphics[width = 0.31\linewidth]{./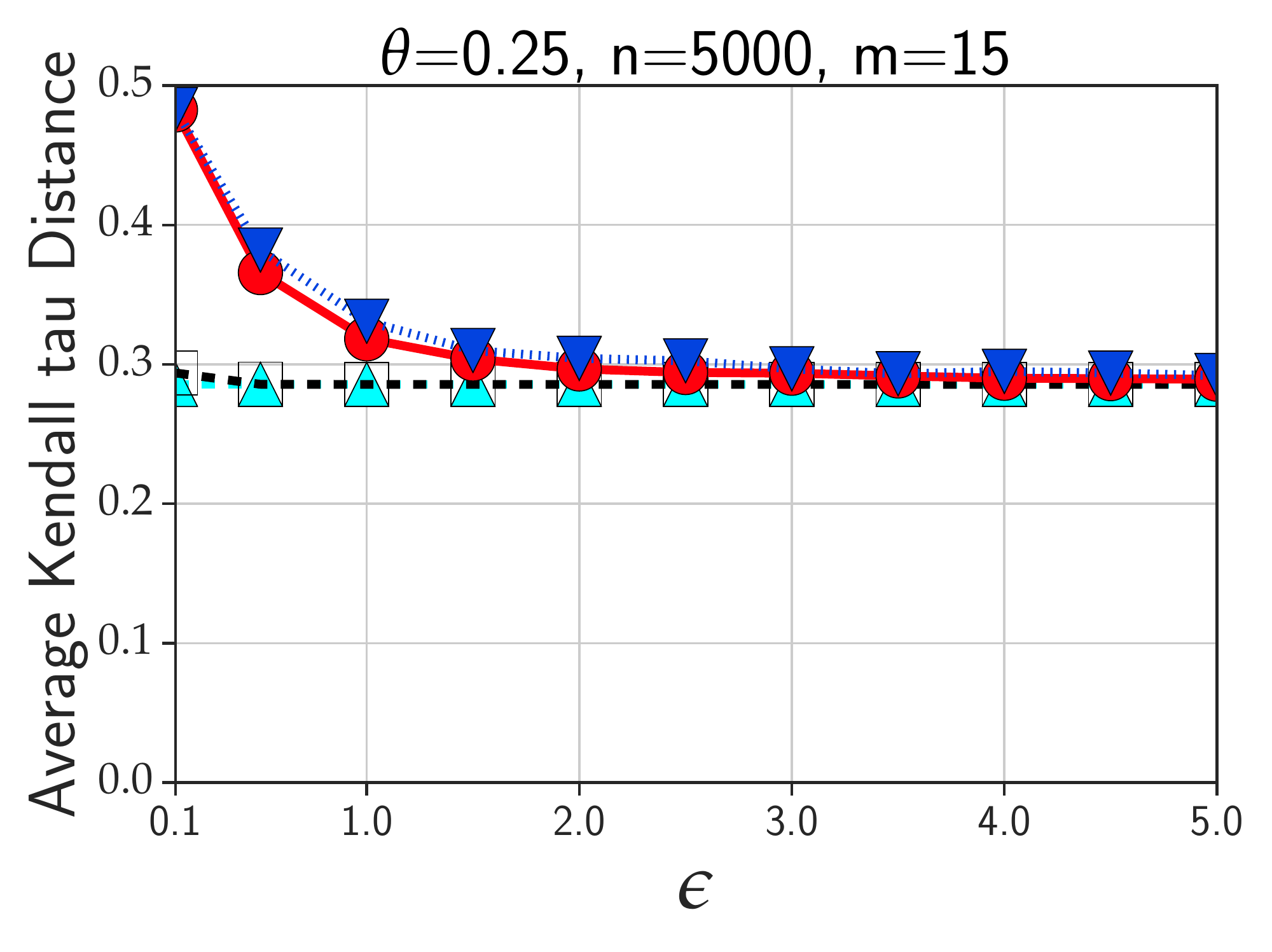}}\
\subfigure[]{\label{subfigure:fig:avgkt-varying-epsilon:k}
    \includegraphics[width = 0.31\linewidth]{./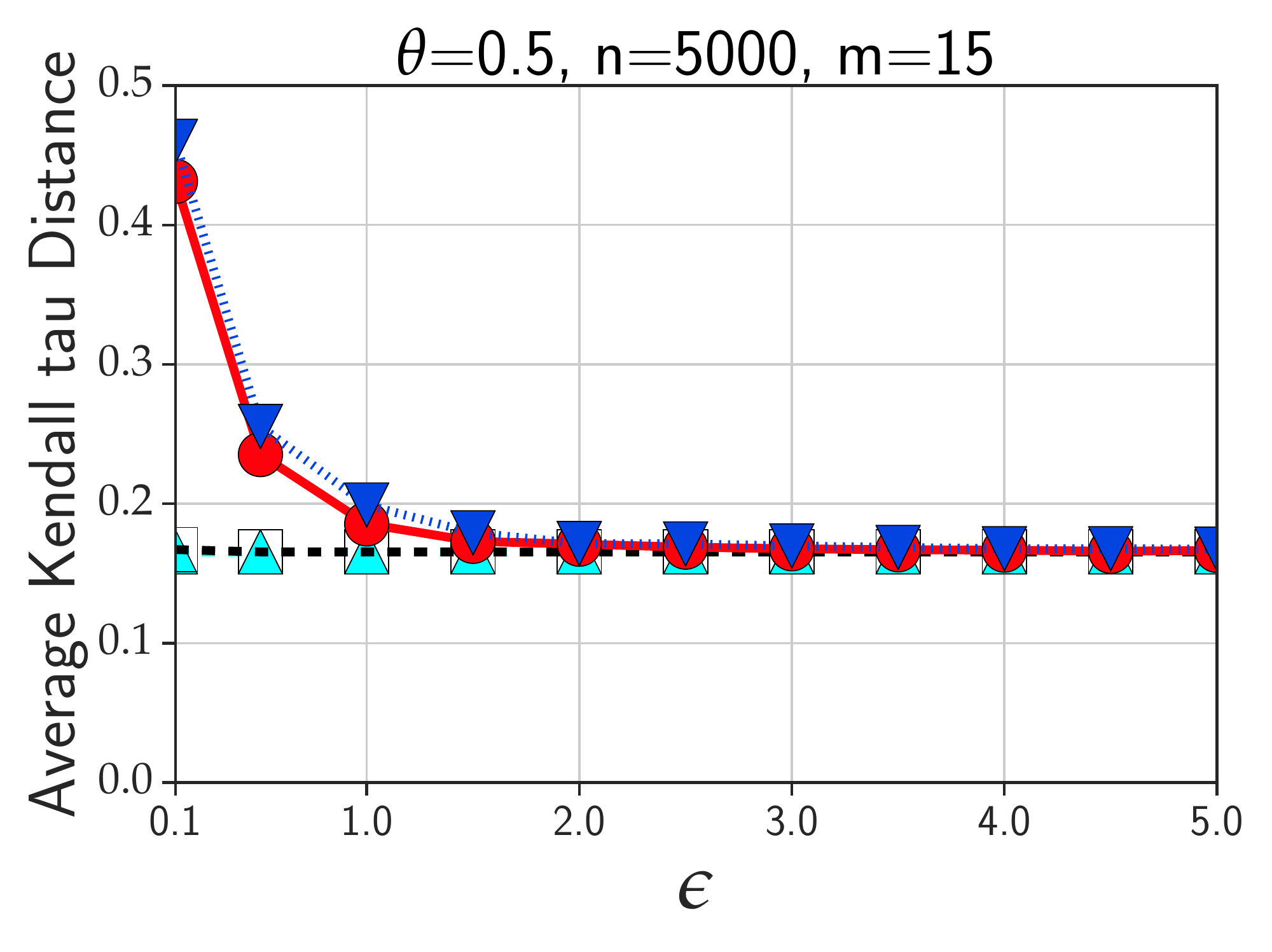}}\
\subfigure[]{\label{subfigure:fig:avgkt-varying-epsilon:l}
    \includegraphics[width = 0.31\linewidth]{./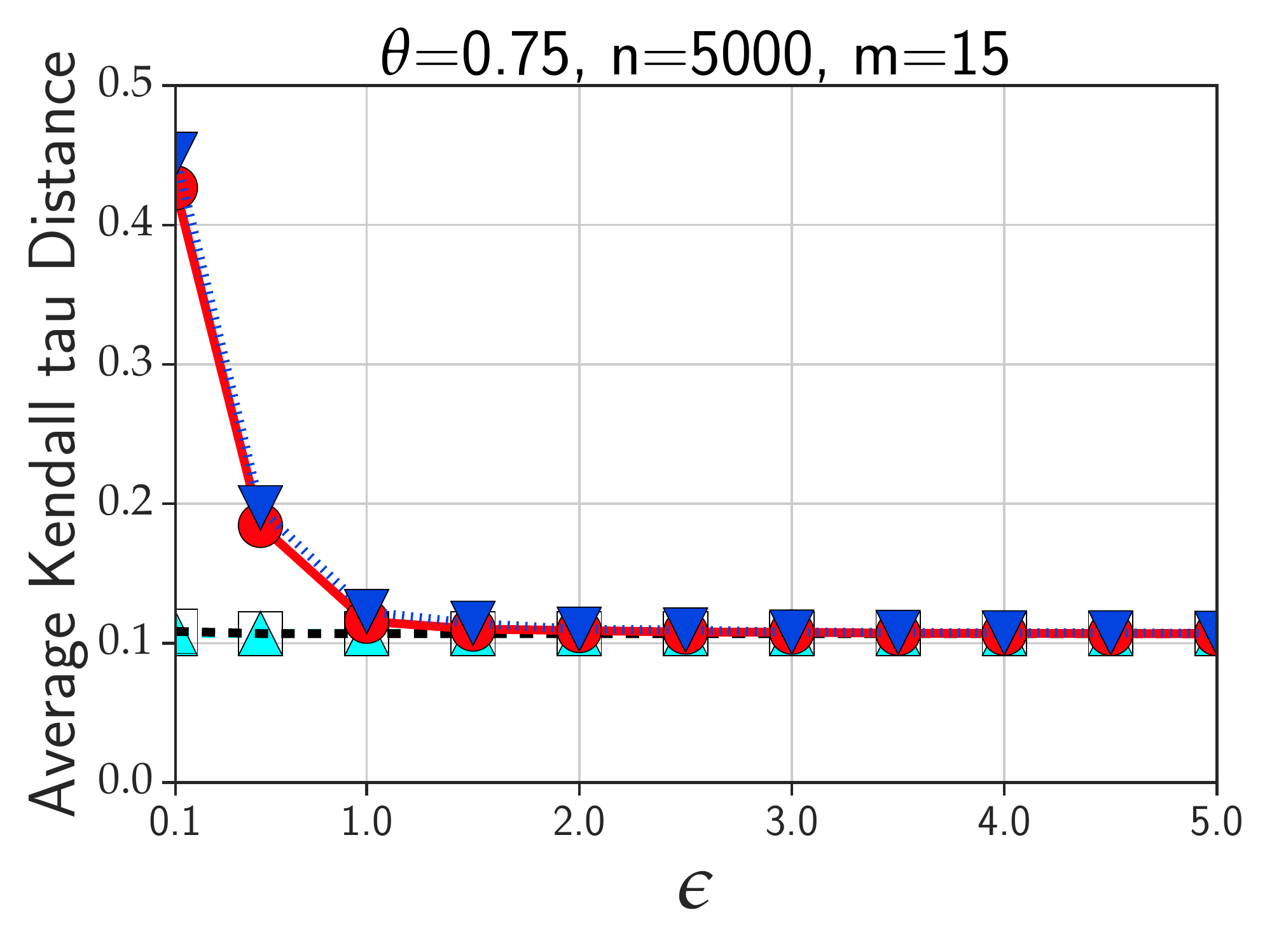}}\
\caption{Comparison of protocols in terms of error rate and average Kendall tau distance
on the real-world and synthetic datasets,
across varying the privacy budget $\epsilon \in \{0.1,...5.0\}$.}
\label{fig:avgkt-varying-epsilon}
\end{figure*}

\subsection{The Impact of Agent and Alternative Amount}
\label{sec-ldp-ra-results-of-n-m}

Intuitively,
more alternatives that need to be ranked
challenge the protocols for generating an optimal ranking.
In order to maintain the results with an acceptable utility,
the increase of the number of agents may help the curator
to collect more information for rank aggregation.
To investigate how the numbers of agents $n$
and alternatives $m$ impact on the performance,
we ran these two solutions of \texttt{LDP-KwikSort}
and its competitors
on three real-world datasets and nine synthetic datasets,
and consider to vary $n \in \{10,...,10000\}$
and $m \in \{4,10,15,30,45\}$,
for observing the performance under
the average Kendall tau distance.
The results are shown in~\Cref{fig:avgkt-varying-agents}.

\begin{figure*}[p]
\centering
\subfigtopskip=2pt
\subfigbottomskip=1pt
\subfigcapskip=-10pt
\includegraphics[width = 0.36\linewidth]{./fig_legend2.pdf}\\
\vspace{0.2cm}
\subfigure[]{\label{subfigure:fig:avgkt-varying-agents:a}
    \includegraphics[width = 0.3\linewidth]{./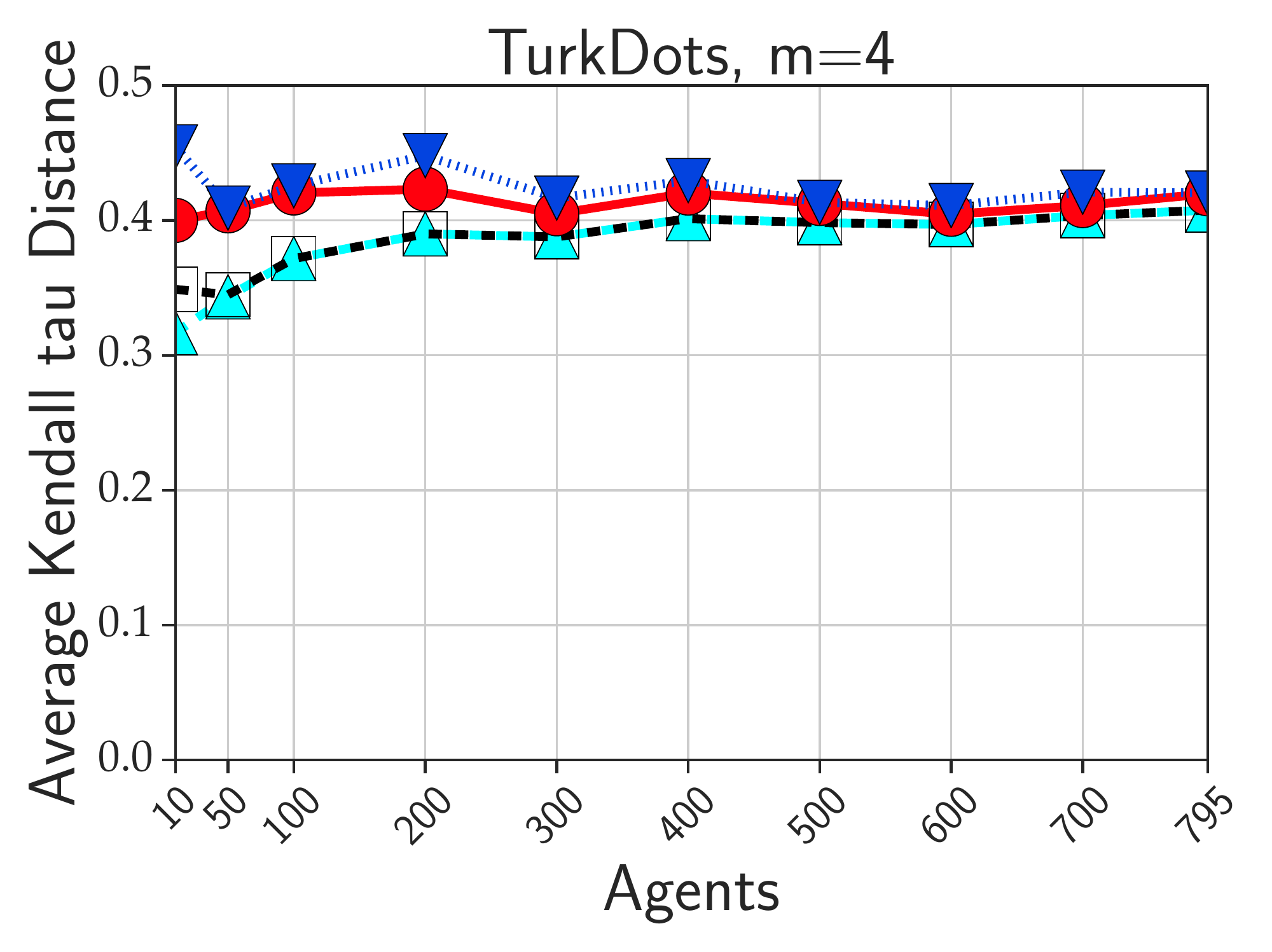}}\
\subfigure[]{\label{subfigure:fig:avgkt-varying-agents:b}
    \includegraphics[width = 0.3\linewidth]{./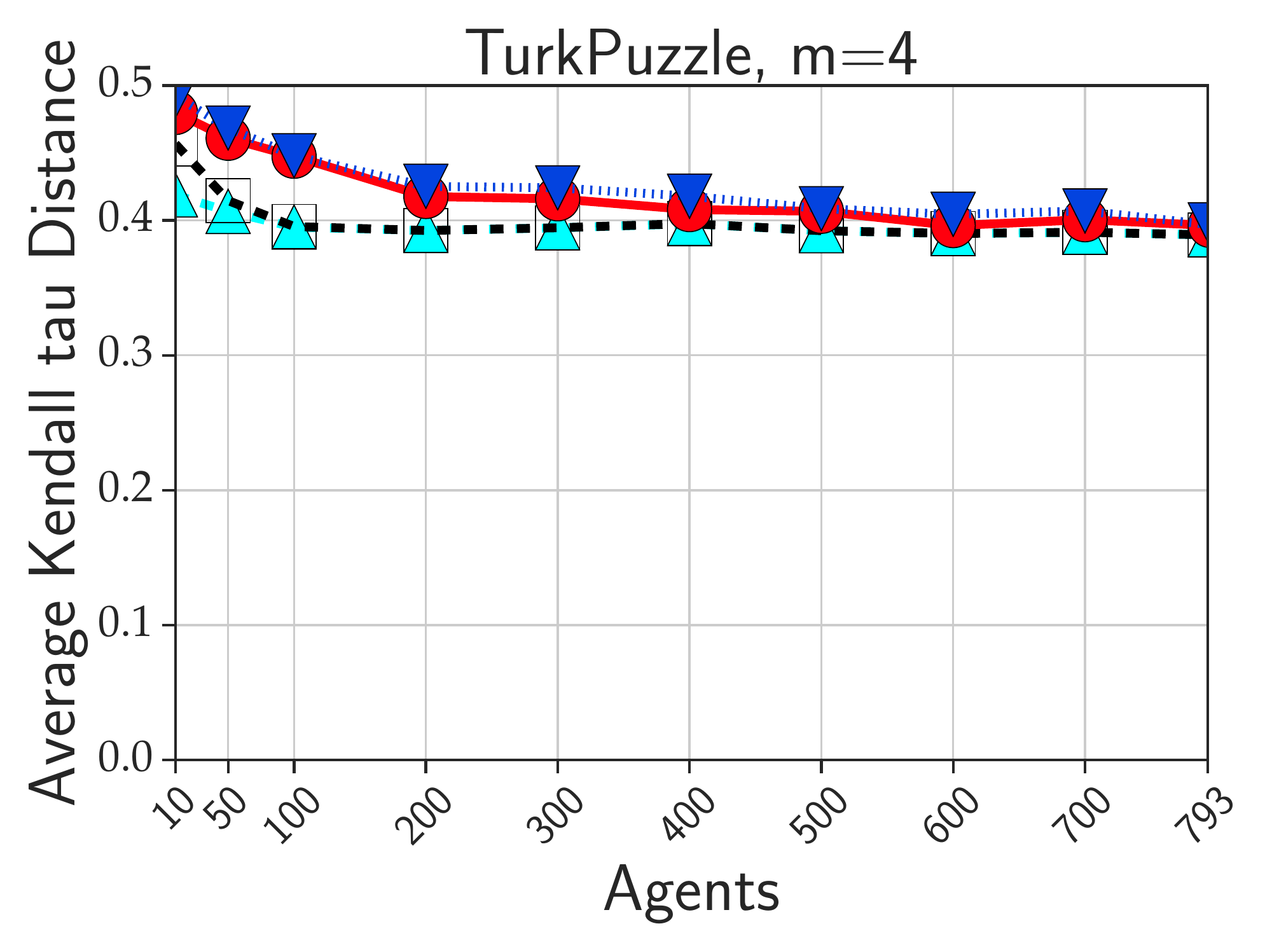}}\
\subfigure[]{\label{subfigure:fig:avgkt-varying-agents:c}
    \includegraphics[width = 0.3\linewidth]{./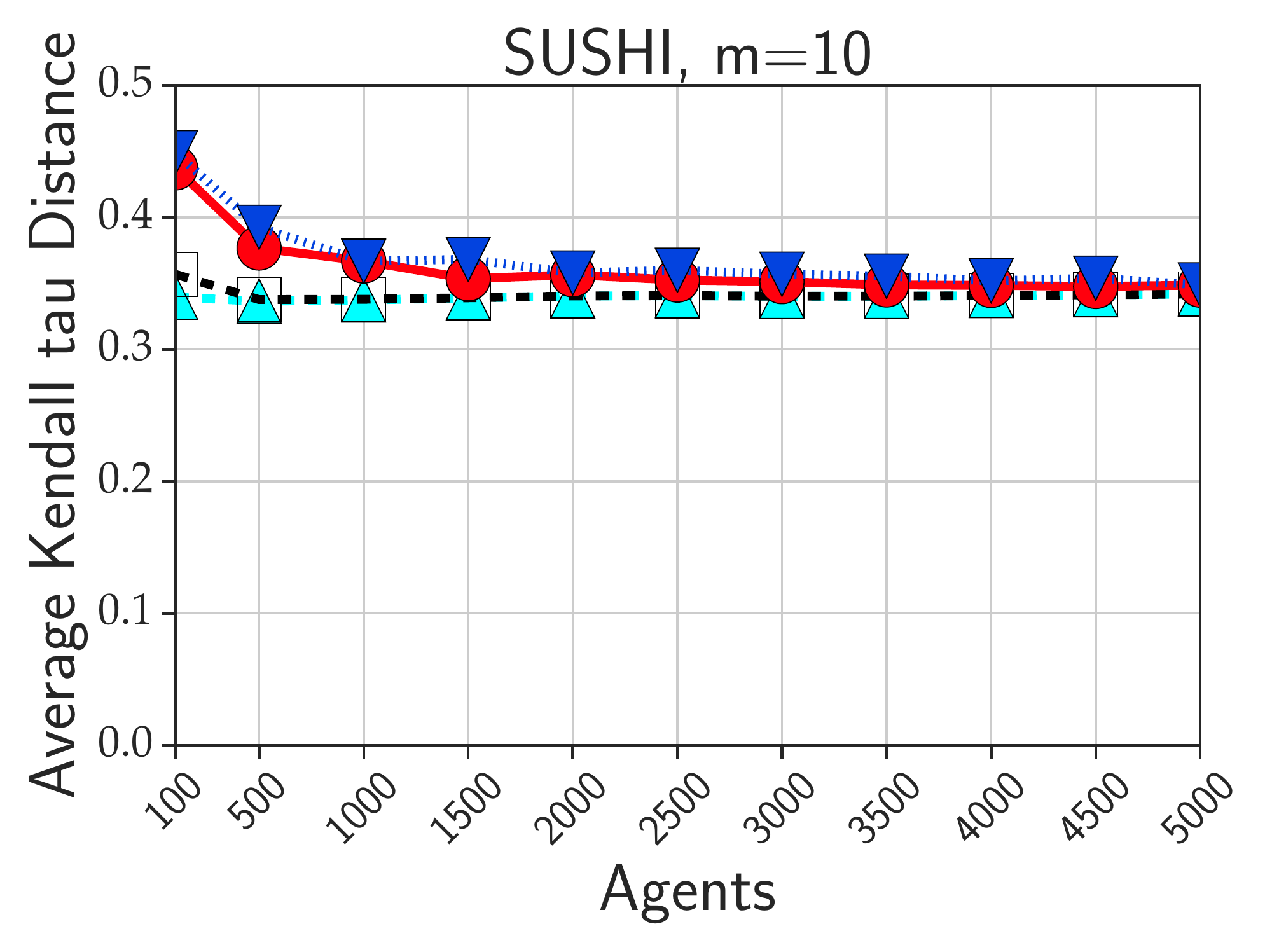}}\
\subfigure[]{\label{subfigure:fig:avgkt-varying-agents:d}
    \includegraphics[width = 0.3\linewidth]{./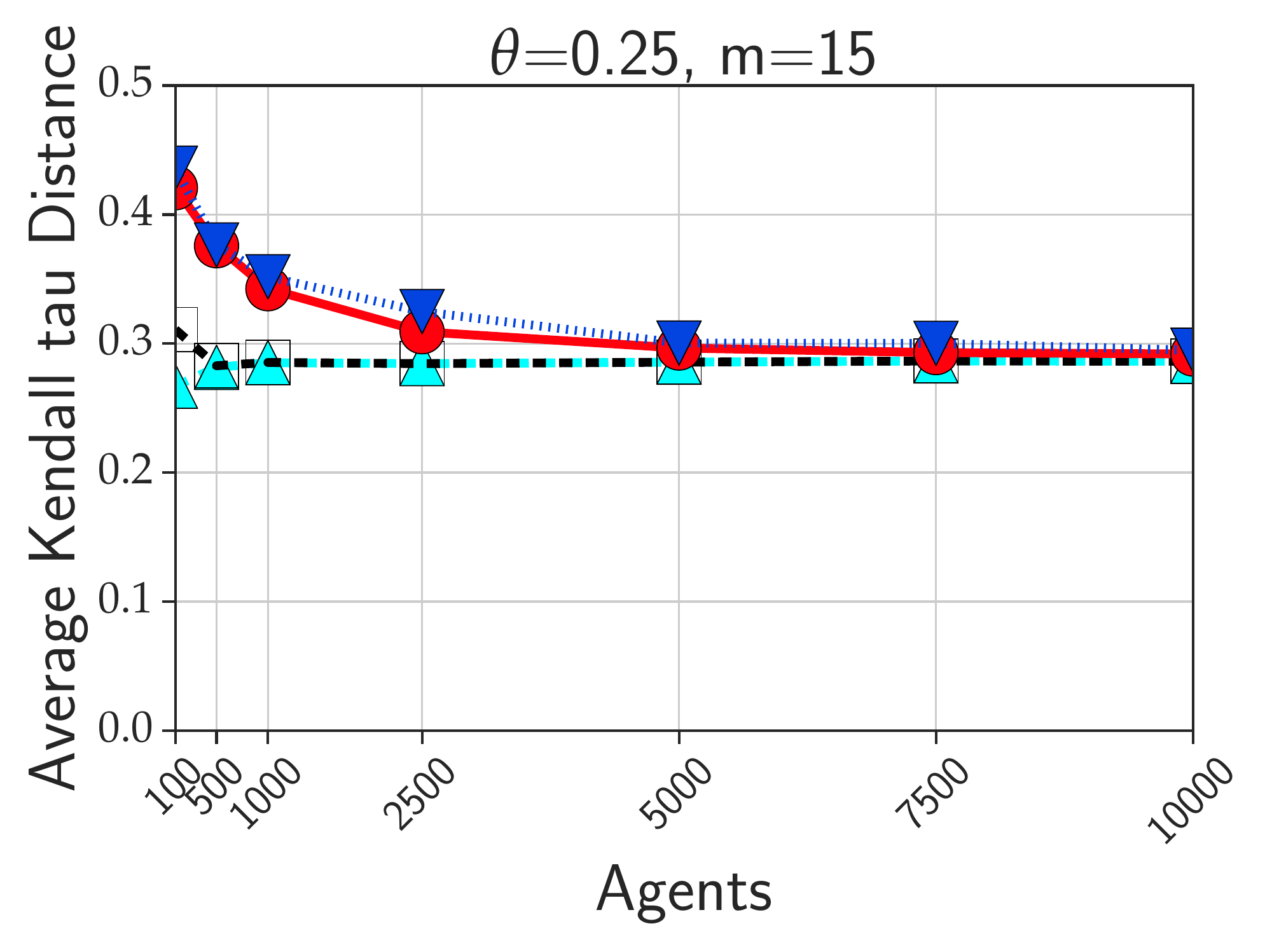}}\
\subfigure[]{\label{subfigure:fig:avgkt-varying-agents:e}
    \includegraphics[width = 0.3\linewidth]{./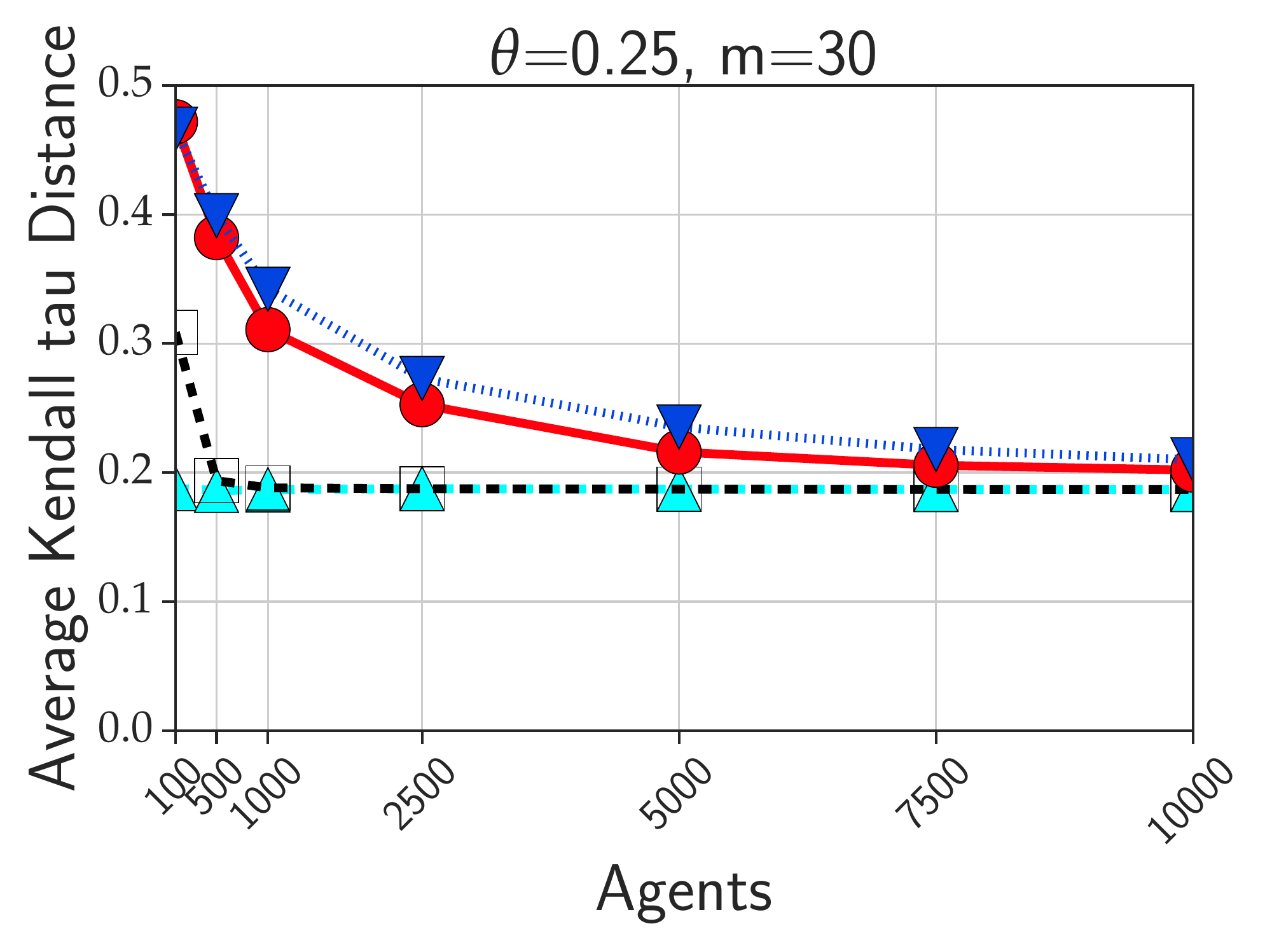}}\
\subfigure[]{\label{subfigure:fig:avgkt-varying-agents:f}
    \includegraphics[width = 0.3\linewidth]{./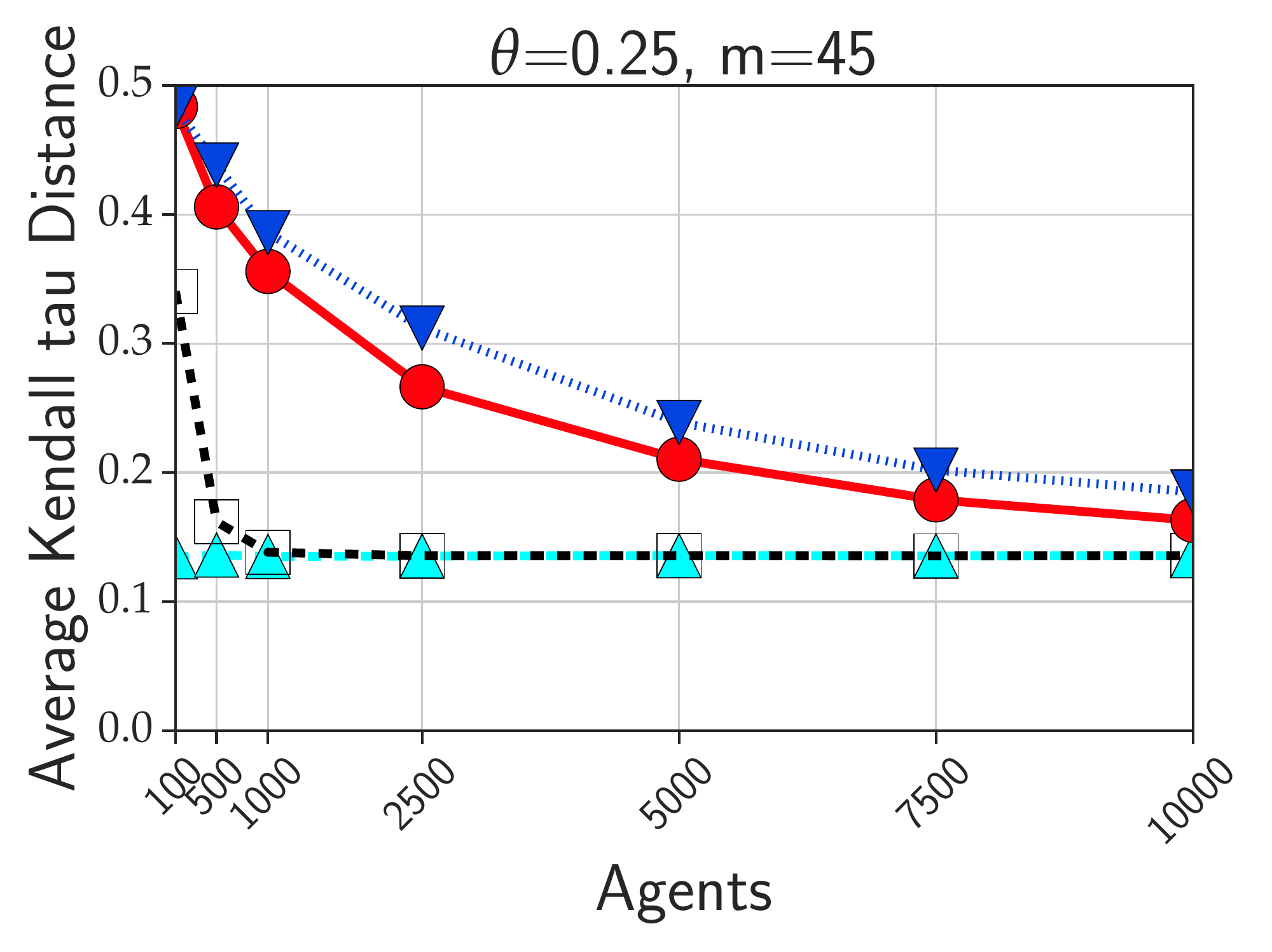}}\
\subfigure[]{\label{subfigure:fig:avgkt-varying-agents:g}
    \includegraphics[width = 0.3\linewidth]{./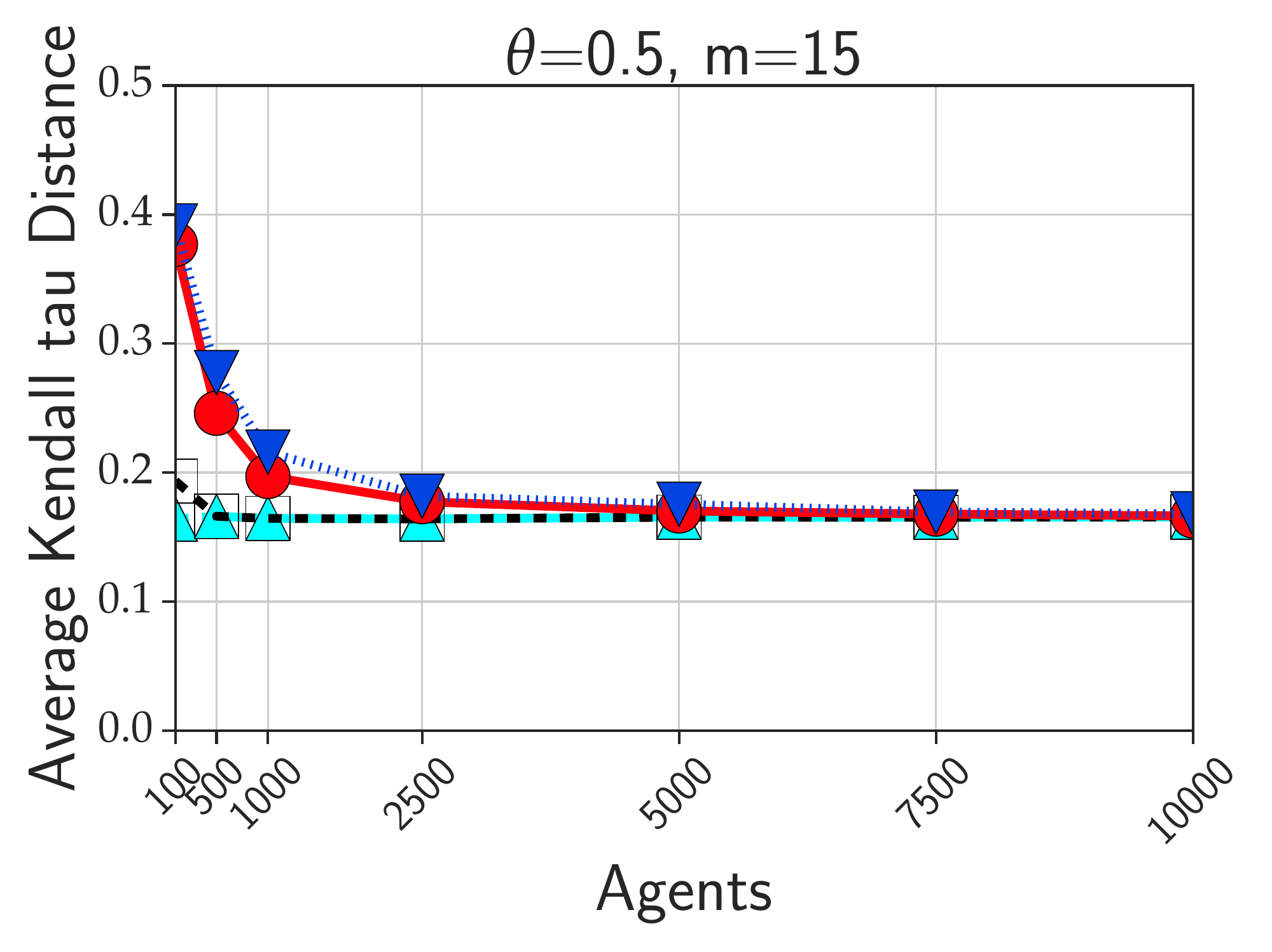}}\
\subfigure[]{\label{subfigure:fig:avgkt-varying-agents:h}
    \includegraphics[width = 0.3\linewidth]{./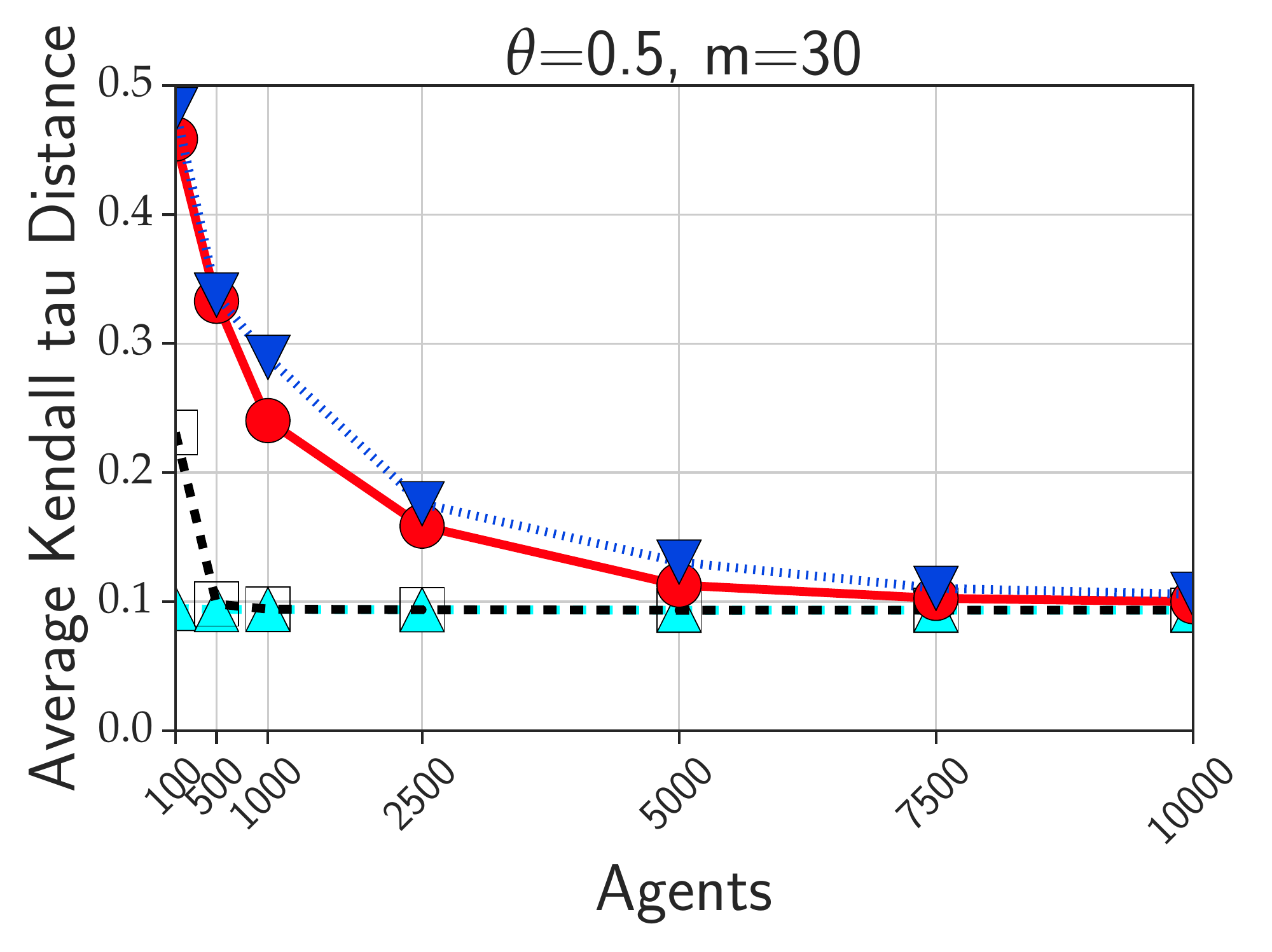}}\
\subfigure[]{\label{subfigure:fig:avgkt-varying-agents:i}
    \includegraphics[width = 0.3\linewidth]{./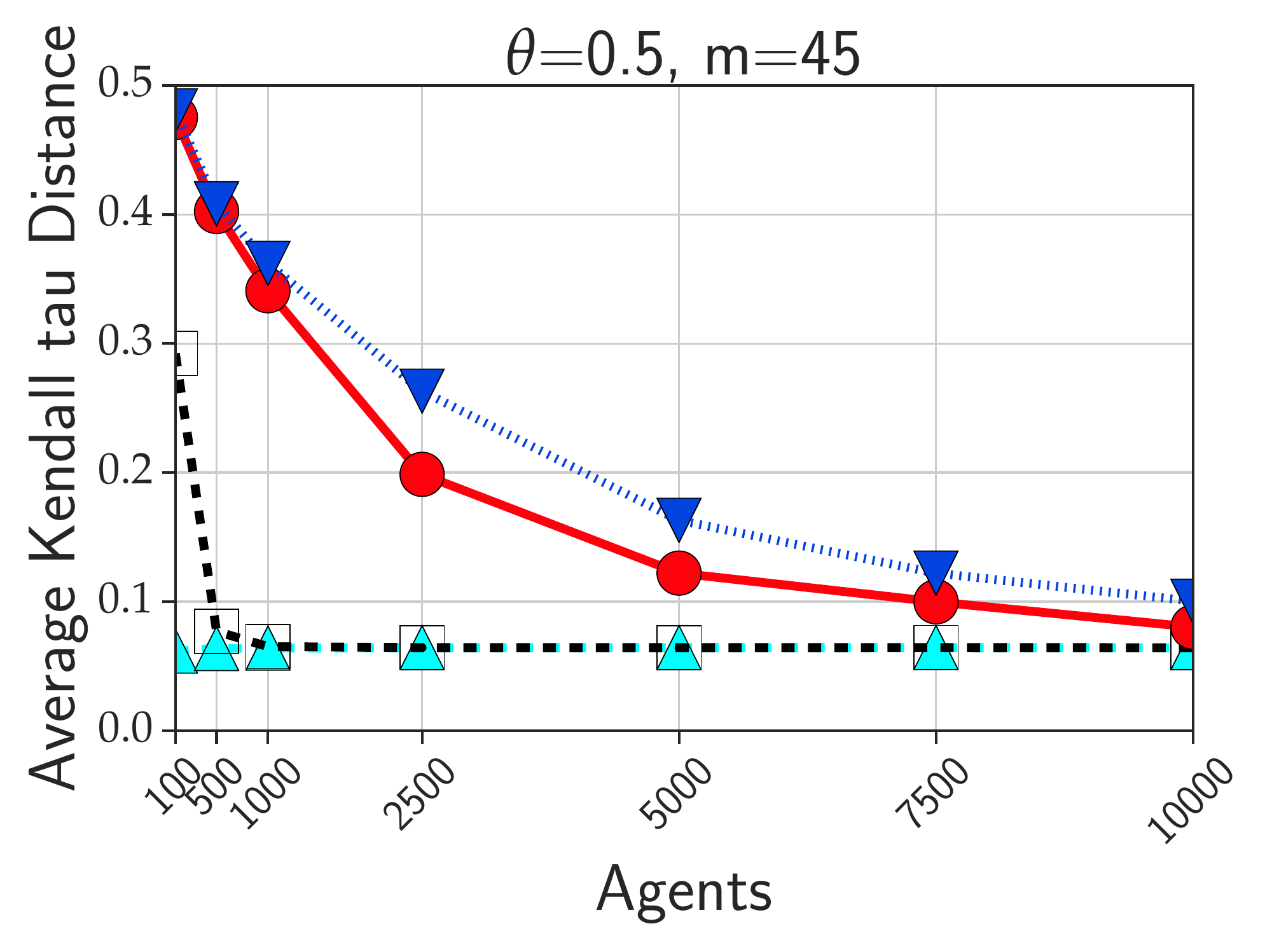}}\
\subfigure[]{\label{subfigure:fig:avgkt-varying-agents:j}
    \includegraphics[width = 0.3\linewidth]{./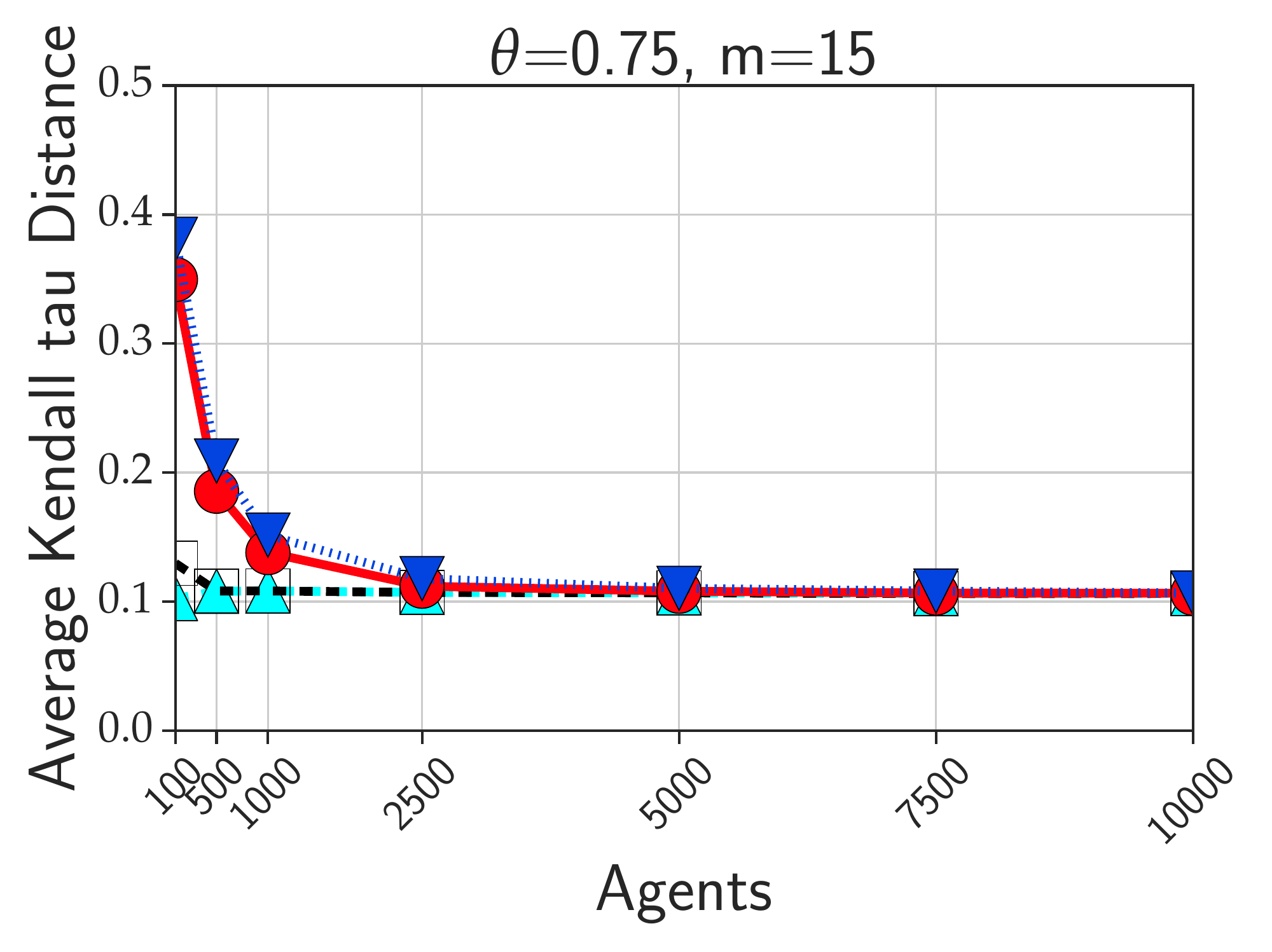}}\
\subfigure[]{\label{subfigure:fig:avgkt-varying-agents:k}
    \includegraphics[width = 0.3\linewidth]{./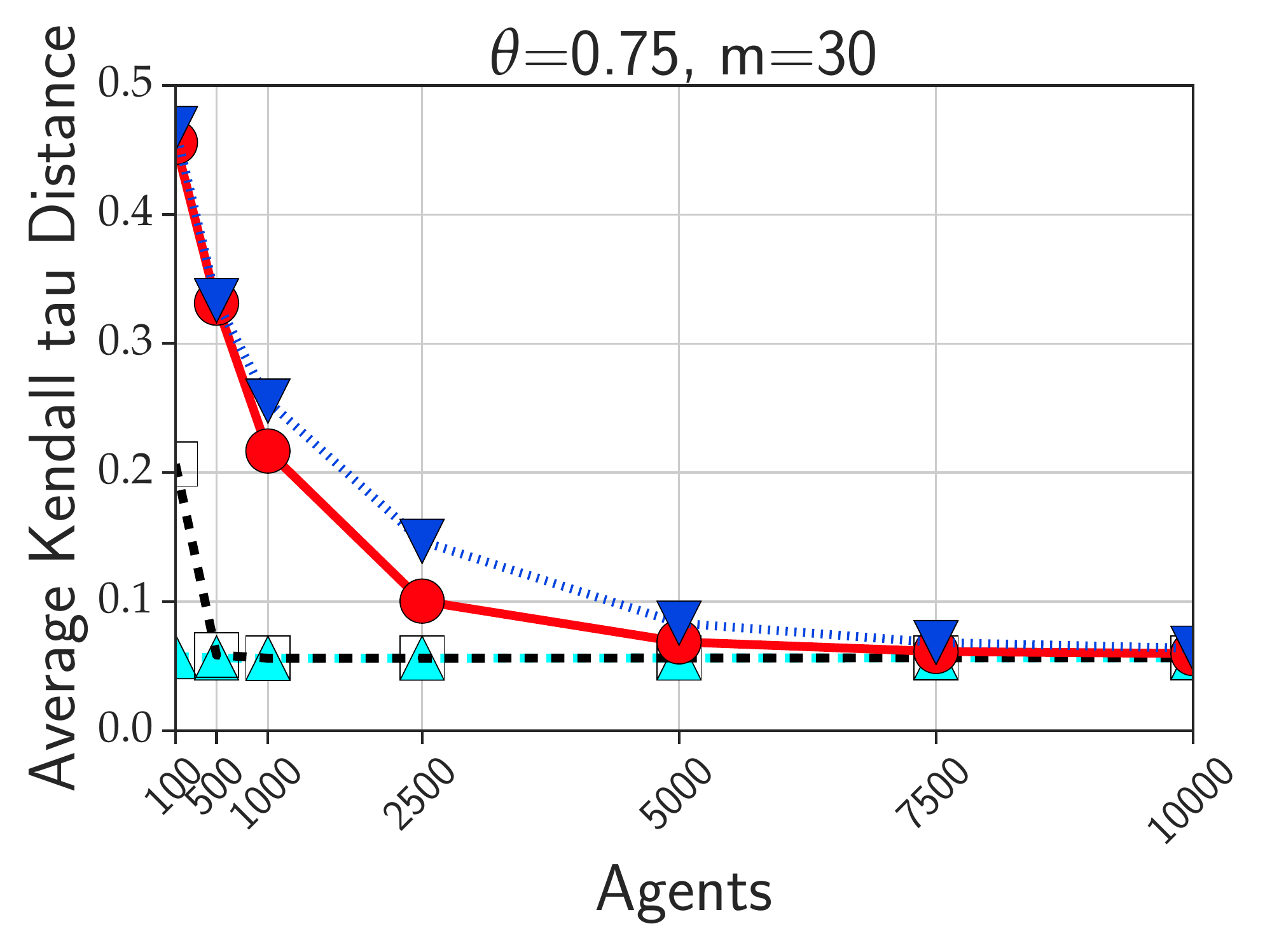}}\
\subfigure[]{\label{subfigure:fig:avgkt-varying-agents:l}
    \includegraphics[width = 0.3\linewidth]{./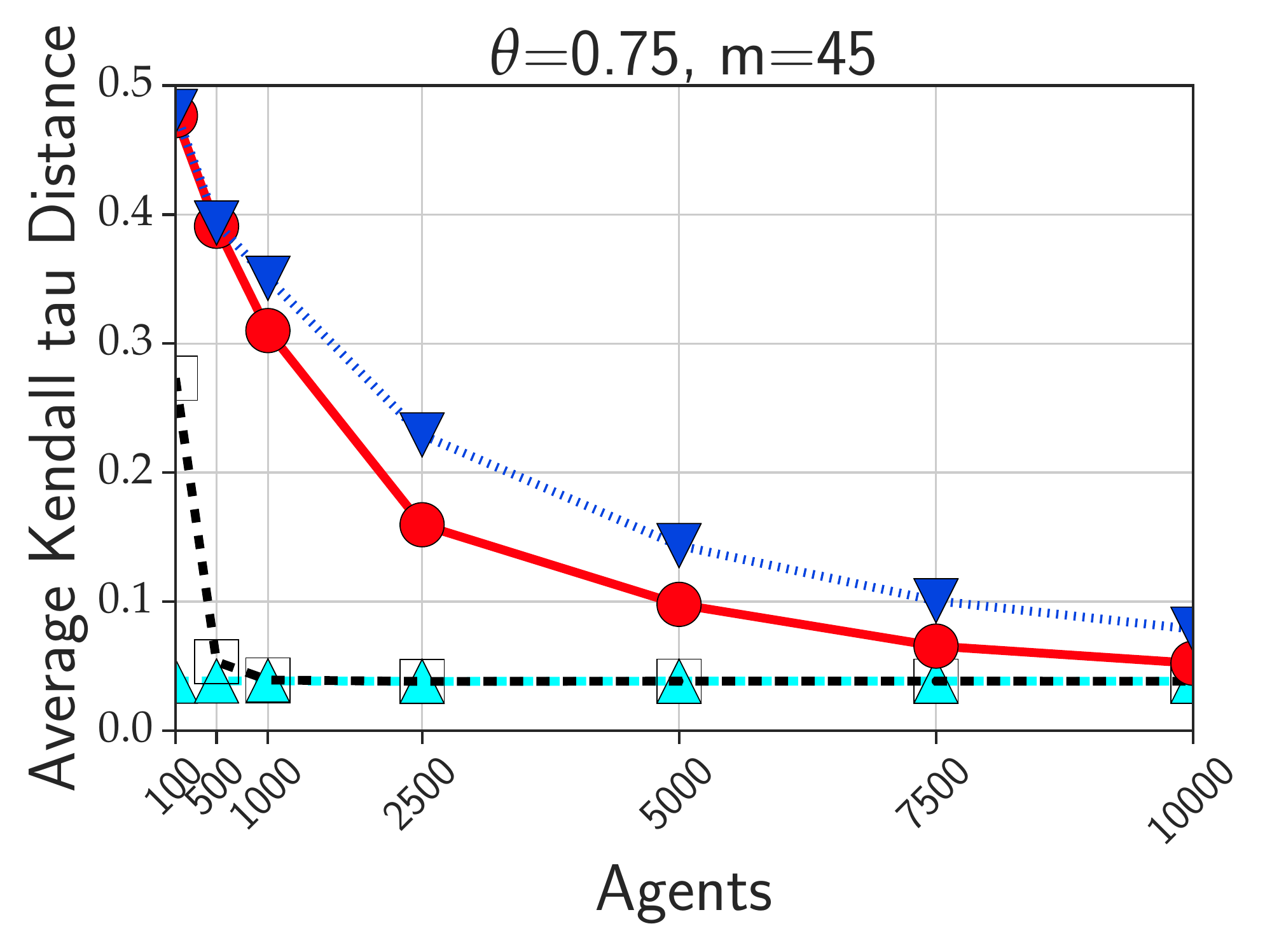}}\
\caption{Comparison of protocols in terms of average Kendall tau distance on the real-world and synthetic
datasets,
across varying the number of agents $n$.
Privacy budget $\epsilon$ is fixed at $2.0$.}
\label{fig:avgkt-varying-agents}
\end{figure*}

In~\Cref{subfigure:fig:avgkt-varying-agents:a}-\Cref{subfigure:fig:avgkt-varying-agents:c},
all the solutions are executed on the real-world datasets
in which the number of alternatives are fixed by default
and the privacy budget is set to $\epsilon=2.0$.
Firstly,
we observe that \texttt{KwikSort} algorithm
still shows the lower error bound of
the average Kendall tau distance.
For instance,
in~\Cref{subfigure:fig:avgkt-varying-agents:b},
with increasing the number of agents from $n=10$ to $n=100$,
\texttt{KwikSort} can achieve
the average Kendall tau distance below $0.4$
and get stable even with larger agent amount.
And we further observe that
when the number of agents is less than
the number of possible pairs of alternatives $\binom{m}{2}$,
\texttt{KwikSort} cannot
aggregate a sufficient pairwise comparison profile
in which each pairwise comparison $\texttt{cmp}_{\textbf{L}}(a_{j},a_{l})$
should be non-zero,
which challenges the generation
of optimal aggregate ranking.
Secondly,
the results show that
\texttt{LDP-KwikSort:RR} still outperforms
\texttt{LDP-KwikSort:Lap}.
Generally,
the involved solutions show the tendency of convergence
as the increasing number of agents.

In~\Cref{subfigure:fig:avgkt-varying-agents:d}-\Cref{subfigure:fig:avgkt-varying-agents:l},
we run all the solutions on the synthetic datasets
with $\theta \in \{0.25,0.5,0.75\}$,
varying the number of agents in a much larger range
from $n=100$ to $n=10000$,
and observe the performance of average Kendall tau distance
under $m \in \{15,30,45\}$ and $\epsilon=2.0$.
Firstly,
the results show that
with an increasing number of alternatives
from $m=15$ to $m=45$,
the lower error bound by \texttt{KwikSort} is decreasing,
but it is much harder for the \texttt{LDP-KwikSort} protocol
to achieve this bound
unless given a large number of agents.
For instance,
when given $m=15$ alternatives,
\texttt{LDP-KwikSort:RR} outperforms \texttt{LDP-KwikSort:Lap}
by $8.1\%$ at $n=2500$,
but with increasing $m$ to $30$,
\texttt{LDP-KwikSort:RR} needs more agents,
say $n=7500$,
to achieve the improvement of $9.8\%$.
Secondly,
with increasing the number of alternatives,
it will be more obvious to see the gap
between \texttt{LDP-KwikSort:RR}
and \texttt{LDP-KwikSort:Lap}.
For instance,
when setting $\theta=0.5$ and $n=2500$,
their gaps at $m=15$,
$m=30$ and $m=45$
are $2.4\%$,
$11\%$ and $32.5\%$,
respectively.

\subsection{The Impact of Dispersion Parameter}
\label{sec-ldp-ra-results-of-theta}

In the above experiments,
the dispersion parameters of synthetic dataset
$\theta=0.25$,
$\theta=0.5$ and $\theta=0.75$
are involved,
which reflects the generated rankings are closer
to the ground truth ranking.
From the results in~\Cref{fig:avgkt-varying-agents},
all solutions achieve lower average Kendall tau distance,
which demonstrates the theoretical conclusions
in~\Cref{sec-ldp-ra-utility-guarantee}.
Besides,
with increasing the dispersion parameter,
it will be more obvious to see the gap
between \texttt{LDP-KwikSort:RR}
and \texttt{LDP-KwikSort:Lap}.
For instance,
when setting $m=45$ and $n=5000$,
their gaps at $\theta=0.25$,
$\theta=0.5$ and $\theta=0.75$
are $13.5\%$,
$33.4\%$ and $46.5\%$,
respectively.

\subsection{Time Cost}
\label{sec-ldp-ra-time-cost}

Finally,
we compare the time costs of all the solutions
on three real-world datasets and five synthetic datasets.
The results are shown in~\Cref{tab:ldp-ra-time-cost-comparison}.
We observe that
\texttt{KwikSort} consumes the least execution time
and the time increases with more alternatives
because no noise introduced.
On the basis of the former,
the central model based \texttt{DP-KwikSort} algorithm
shows a small increase in time cost.
The cases of \texttt{LDP-KwikSort} protocol
present the sum of the execution time
of all the agents and the curator.
We observe that when the number of alternatives
are relatively small,
say $m=4,5$,
\texttt{LDP-KwikSort:RR} consumes less time
than \texttt{LDP-KwikSort:Lap}.
When increasing $m$ from $15$,
\texttt{LDP-KwikSort:RR} needs more time cost.

\begin{table*}[ht]
\centering
\scriptsize
\caption{Comparison of protocols in terms of time cost (in seconds)}
\label{tab:ldp-ra-time-cost-comparison}
\begin{tabular}{p{2.2cm}p{1.1cm}<{\centering}p{1.2cm}<{\centering}p{1.2cm}<{\centering}p{1.2cm}<{\centering}
p{1.2cm}<{\centering}p{1.2cm}<{\centering}p{1.2cm}<{\centering}p{1.2cm}<{\centering}}
\toprule
\diagbox[width=8em,trim=l]{Solution}{Dataset} & \tabincell{c}{\textsf{TurkDots}\\$795,4$} & \tabincell{c}{\textsf{TurkPuzzle}\\$793,4$} & \tabincell{c}{\textsf{SUSHI}\\$5000,10$} & \tabincell{c}{\textsf{Mallows}\\$5000,5$} & \tabincell{c}{\textsf{Mallows}\\$5000,10$} & \tabincell{c}{\textsf{Mallows}\\$5000,15$} & \tabincell{c}{\textsf{Mallows}\\$5000,30$} & \tabincell{c}{\textsf{Mallows}\\$5000,45$} \\
\midrule
\texttt{KwikSort} & 0.0001 & 0.0001 & 0.0002 & 0.0001 & 0.0001 & 0.0002 & 0.0005 & 0.0007 \\
\texttt{DP-KwikSort} & 0.0006 & 0.0006 & 0.0030 & 0.0008 & 0.0030 & 0.0076 & 0.0285 & 0.0626 \\
\texttt{LDP-KwikSort:RR} & 0.1255 & 0.1244 & 2.7076 & 2.5783 & 2.6125 & 2.8551 & 3.3530 & 4.4113 \\
\texttt{LDP-KwikSort:Lap} & 0.1751 & 0.1732 & 2.9088 & 2.9015 & 2.9026 & 2.9294 & 2.9812 & 3.1288 \\
\bottomrule
\end{tabular}
\end{table*}

\subsection{Discussion}
\label{sec-ldp-ra-discussion}

The above experimental results demonstrate
that the proposed \texttt{LDP-KwikSort}
can satisfy $\epsilon$-local differential privacy
or $\epsilon$-local individual differential privacy
while maintaining the acceptable utility of the aggregate ranking.
Particularly,
under the utility metrics such as the error rate
and the average Kendall tau distance,
solution \texttt{LDP-KwikSort:RR} generally outperforms
\texttt{LDP-KwikSort:Lap} and can achieve
the closest performance of \texttt{DP-KwikSort}.
When using the proposed protocol
in an agent scale as $n=5000$,
we recommend $\epsilon=1.0$ for the situations
with fewer alternatives
such as $m \leq 15$,
and $\epsilon=3.0$ when considering
a relatively large number of alternatives.

Due to the natural shortcoming of the local model of DP,
the observed limitation of this work relies on
the relatively large privacy budget to maintain
the acceptable utility,
compare with that of central model based solutions.
The potential optimization methods include
the consideration of personalized privacy settings.

\section{Conclusion}
\label{sec-ldp-ra-conclusion}

\emph{Rank aggregation} aims to combine different agents' preferences
over the given alternatives into an aggregate ranking
that agrees with the most with all the preferences.
In the scenario of crowdsourced data management,
since the aggregation procedure relies on a data curator,
the privacy within the agents' preference data
could be compromised when the curator is untrusted.
All existing works that guarantee differential privacy
in rank aggregation assume that the data curator is trusted.

This paper first formalizes and studies
the \emph{locally differentially private rank aggregation} (LDP-RA) problem.
Specifically,
we design the \texttt{LDP-KwikSort} protocol which
could protect the pairwise comparison
within the ranking list.
It also shows a combination of the properties from
the approximate rank aggregation algorithm \texttt{KwikSort},
the RR mechanism,
and the Laplace mechanism.
Theoretical analysis and empirical results
on the real-world and synthetic datasets
confirm that
our protocol especially
the solution \texttt{LDP-KwikSort:RR}
can achieve strong local privacy protection
while maintaining an acceptable utility.

Future work will include the following three aspects:
1) considering strategic voting behaviors and
exploring the trade-off between the soundness,
the usefulness
and the privacy preservation in LDP-RA;
2) extending our approach to support personalized privacy budget setting
for different agents;
3) the evaluation of the synthetic rank datasets
based on other mixture models
such as the Plackett-Luce model
and the general random utility model.

\clearpage
\bibliographystyle{plainnat}
\bibliography{reference}

\end{document}